\newcommand{\inSet}{T_{v,C}^{\operatorname{in}}}
\newcommand{\outSet}{T_{v,C}^{\operatorname{out}}}
\newcommand{\ie}{i.\,e.\ }
\newcommand{\eg}{e.\,g.\ }
\newcommand{\dist}{\operatorname{dist}}
\newcommand{\abs}[1]{\vert #1 \vert}
\newtheorem{theorem}{Theorem}[section]
\newtheorem{lemma}[theorem]{Lemma}
\newtheorem{corollary}[theorem]{Corollary}
\newtheorem{claim}[theorem]{Claim}
\newtheorem{definition}[theorem]{Definition}
\newtheorem{remark}[theorem]{Remark}
\newtheorem{note}[theorem]{Note}
\author{Isolde Adler\affiliationmark{1}
  \and Noleen Köhler\affiliationmark{1}}
\title[On graphs that are far from being Hamiltonian]{An explicit construction of graphs of bounded degree that are far from being Hamiltonian}
\affiliation{
  School of Computing, University of Leeds, UK}
\keywords{Hamiltonian cycle, property testing, bounded-degree graphs, bounded-degree model, lower bound}
\begin{document}
\publicationdetails{24}{2022}{1}{3}{7109}
\maketitle
\begin{abstract}
  Hamiltonian cycles in graphs were first studied in the 1850s. 
  Since then, an impressive amount of research has been dedicated to identifying classes 
  of graphs that allow Hamiltonian cycles, and to related questions.
  The corresponding decision problem, that asks whether a given graph is Hamiltonian (\ie admits a Hamiltonian cycle), is one of Karp's famous NP-complete problems. 
  
  In this paper we study graphs of bounded degree that are \emph{far} from being Hamiltonian,
  where a graph
  $G$ on $n$ vertices is \emph{far} from being Hamiltonian, 
  if modifying a constant fraction of $n$ edges 
  is necessary
  to make $G$ Hamiltonian. We give an explicit deterministic construction of a class of graphs of bounded degree that are locally Hamiltonian,
  but (globally) far from being Hamiltonian. Here, \emph{locally Hamiltonian} means that  
  every subgraph induced by the neighbourhood of a small vertex set
  appears in some Hamiltonian graph. More precisely, we obtain graphs which differ in $\Theta(n)$ edges from any Hamiltonian graph,
  but non-Hamiltonicity cannot be detected in the neighbourhood of $o(n)$ vertices. 
  
  Our  class of graphs yields a class of hard instances for one-sided error property testers with linear query complexity. 
  It is known that any property tester (even with two-sided error) requires   a linear number of queries to test Hamiltonicity (Yoshida, Ito, 2010). This is proved via a randomised construction of hard instances. In contrast, our construction is deterministic. So far only very few deterministic constructions of hard instances for property testing are known. We believe that our construction may lead to future insights in graph theory and towards a characterisation of the  properties that are testable in the bounded-degree	model.

\end{abstract}

\section{Introduction}
A \emph{Hamiltonian cycle} in a graph $G$ is a cycle that visits every vertex of $G$ exactly once.
A graph $G$ is \emph{Hamiltonian} if $G$ contains a Hamiltonian cycle.
Research on Hamiltonian graphs has a long and rich history, see \eg~\hspace{-4pt}\cite{Gould91}.
Dirac's early Theorem~\cite{Dirac52} gave sufficient conditions for Hamiltonicity, and
subsequently, many further classes of Hamiltonian graphs were identified. Interestingly, it was shown by Robinson and Wormald that for $d\geq 3$, almost all $d$-regular graphs are Hamiltonian~\cite{Robinson1994AlmostAR}. 

Hamiltonian graphs play an important role in routing, including network design~\cite{ChuWY05,Parhami05},
circuit design~\cite{WangCC12}, and computer graphics~\cite{ZhangZH13}, 
as well as in scheduling via tight links to the Travelling Salesperson Problem.
Deciding whether a given graph is Hamiltonian is NP-complete~\cite{Karp72}, even on cubic planar 
graphs~\cite{GareyJT76}.

In this paper we study graphs of bounded degree that are far from being Hamiltonian, 
where intuitively, a graph $G$ is \emph{far} from being Hamiltonian if many 
edge modifications (insertions or deletions) are necessary to make $G$ Hamiltonian (note that deletions may
help, because of the degree bound).

\textbf{Motivation.} The wider motivation for our study stems from the well-known tight connection between \emph{structural} properties of graphs and their \emph{algorithmic} properties, which has been used successfully for designing efficient algorithms for numerous problems, all the way to reaching the boundaries of efficient solvability. 
Hence for many important graph properties (where by \emph{property} we simply mean an isomorphism closed graph class), the structure of graphs having the property is studied in great detail. We propose studying the structure of graphs that are far from having a given property.
This is motivated by the area of property testing, in which computational decision problems are relaxed to distinguishing graphs that have a certain property from graphs that are far from having the property.  We study graphs that are far from being Hamiltonian. Hamiltonicity is known to be hard for property testing \cite{DBLP:journals/ieicet/YoshidaI10,Goldreich20}. However this is shown via a randomised construction. We give an explicit, deterministic construction of hard instances for testing Hamiltonicity. In computer science, explicit constructions are often of interest as they can further our understanding of the complexity of related computational  problems, and knowledge of explicit structural  properties and parameters can be exploited.

We now give more details. For a given $\epsilon$ in the real interval $[0,1]$, we say that a graph $G$ of maximum degree $d$ with $n$ vertices is \emph{$\epsilon$-close} to being Hamiltonian, if at most $\epsilon d n$ edge modifications (insertions or deletions) are needed to make $G$ Hamiltonian, and $G$ is \emph{$\epsilon$-far} from being Hamiltonian otherwise. Note that $dn$ is an upper bound on the total number of edges in an $n$-vertex graph of degree at most $d$. 

\begin{figure}[htbp]
	\begin{center}
		\subfigure[Caterpillar]{
			\begin{tikzpicture}[>=stealth']
			\definecolor{C1}{RGB}{1,1,1}
			\definecolor{C2}{RGB}{0,204,204}
			\definecolor{C3}{RGB}{204,0,0}
			\definecolor{C4}{RGB}{0,204,0}
			
			\tikzstyle{ns1}=[line width=1]
			
			\def \breadthDist {0.6}
			\def \heightDist {0.6}
			\def \numberOfVertices{10}
			\def \oneLessVertex{9}
			\foreach \x in {1,...,\numberOfVertices} {
				\node[draw,circle,fill=black,inner sep=0pt, minimum width=4pt] (\x) at (\x*\breadthDist,0) {};
				\node[draw,circle,fill=black,inner sep=0pt, minimum width=4pt] (\x1) at (\x*\breadthDist,-\heightDist) {};
			}
			
			\foreach \x in {1,...,\oneLessVertex} {
				\pgfmathtruncatemacro{\xx}{\x+1};
				\path[C1,ns1]          (\x)  edge  (\x1);
				\path[C1,ns1]          (\x)  edge  (\xx);}
			\path[C1,ns1]          (\numberOfVertices*\breadthDist-2*\breadthDist,0)  edge  (\numberOfVertices*\breadthDist-2*\breadthDist,-\heightDist);
			\path[C1,ns1]          (\numberOfVertices*\breadthDist,0)  edge  (\numberOfVertices*\breadthDist,-\heightDist);

			\node at (2,-2) {};

			\end{tikzpicture}}
		\hfil
		\subfigure[$C_4$'s arranged in a cycle.]{\begin{tikzpicture}[>=stealth']
			\definecolor{C1}{RGB}{1,1,1}
			\definecolor{C2}{RGB}{0,204,204}
			\definecolor{C3}{RGB}{204,0,0}
			\definecolor{C4}{RGB}{0,204,0}
			
			\tikzstyle{ns1}=[line width=1]
			\def \numberOfCFours{11}
			\def \oneLessCFours{10}
			\def \radiusCFour {0.3}
			\def \radius {1.8}
			\foreach \x in {1,...,\numberOfCFours} {
				\pgfmathtruncatemacro{\degree}{360*\x/\numberOfCFours};
				\foreach \y in {1,...,4}{
					\pgfmathtruncatemacro{\littleDegree}{\y*90+\degree};
					\pgfmathtruncatemacro{\name}{\x*4+\y};
					\node[draw,circle,fill=black,inner sep=0pt, minimum width=4pt] [shift={(\degree:\radius)}](\name) at (\littleDegree:\radiusCFour) {};
			}}
			\foreach \x in {1,...,\numberOfCFours} {
				\pgfmathtruncatemacro{\firstNode}{\x*4+4};
				\pgfmathtruncatemacro{\secondNode}{\x*4+1};
				\path[C1,ns1]          (\firstNode)  edge  (\secondNode);
				\foreach \y in {1,2,3}{
					\pgfmathtruncatemacro{\firstNode}{\x*4+\y};
					\pgfmathtruncatemacro{\secondNode}{\firstNode+1};
					\path[C1,ns1]          (\firstNode)  edge  (\secondNode);
			}}
			\foreach \x in {1,...,\oneLessCFours} {
				\draw [C1,ns1,domain={(360*\x/\numberOfCFours)+atan(\radiusCFour/\radius)}:{(360*(\x+1)/\numberOfCFours)-atan(\radiusCFour/\radius)}] plot ({cos(\x)*sqrt(\radius^2+\radiusCFour^2)}, {sin(\x)*sqrt(\radius^2+\radiusCFour^2)});}
			\draw [C1,ns1,domain={atan(\radiusCFour/\radius)}:{(360/\numberOfCFours)-atan(\radiusCFour/\radius)}] plot ({cos(\x)*sqrt(\radius^2+\radiusCFour^2)}, {sin(\x)*sqrt(\radius^2+\radiusCFour^2)});	
			\end{tikzpicture}
			}
		\caption{Example graphs which are far from being Hamiltonian but are not locally Hamiltonian.} \label{fig:examplesIntroduction}
	\end{center}
\end{figure}
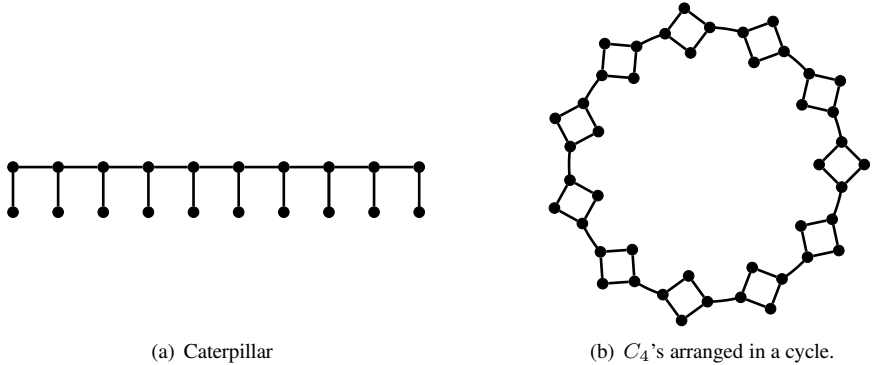

It is easy to find graphs that are far from being Hamiltonian. For example, let $G$ be a caterpillar graph on $n=2k$ vertices as shown 
in Figure~\ref{fig:examplesIntroduction} for $k=10$ (\ie $G$ is a path of length $k-1$ where every vertex has a pendant edge). With a degree bound of at most $3$, $G$ 
is $1/13$-far
from being Hamiltonian, because $n/4$ edges need to be added to make $G$ $2$-connected. 
As another example, consider the graph $H$ consisting of $k$ $4$-cycles ($C_4$'s)
arranged in a cycle as shown in Figure~\ref{fig:examplesIntroduction} for $k=11$. Assume $k>1$. The graph $H$ has $n=4k$ vertices and, with a degree bound of $3$, $H$ is $1/25$-far
from being Hamiltonian. This is because any Hamiltonian cycle in a graph has to traverse both edges incident to any vertex of degree $2$. Hence in $H$ a Hamiltonian cycle would have to traverse all four edges of every $C_4$. To avoid this we have to increase the degree of at least one of the degree $2$ vertices for every $C_4$ and hence we have to add at least $n/8$ edges to make $H$ Hamiltonian.

In both examples it is possible to see locally, in
the neighbourhood of a constant number of vertices, that the graphs are not Hamiltonian.
We ask whether there exist graphs that locally look as if they might be Hamiltonian, but globally they are far from being Hamiltonian, and we give a positive answer to this. More precisely, for $\delta\in (0,1]$ we define a graph $G$ with vertex set $V$ and $|V|=n$ to be \emph{$\delta$-locally Hamiltonian} if the subgraph induced by 
the neighbourhood of any subset 
$S\subseteq V$  with $|S|\leq \delta n$
appears in some Hamiltonian graph on $n$ vertices. We then show the following by giving an explicit construction (cf.~Theorem~\ref{thm:existenceOfLocallyButFarFromHamiltonianGraphs}).\smallskip

\textit{There is a $d\in \mathbb{N}$ and there are constants $\delta:=\delta(d),\epsilon:=\epsilon(d) \in (0,1)$ and a sequence of $d$-bounded degree graphs $(G_N)_{N\in \mathbb{N}}$ of increasing order, such that $G_N$ is $\delta$-locally Hamiltonian and $\epsilon$-far from being Hamiltonian for every $N\in \mathbb{N}$}.\smallskip

A similar approach was taken in \cite{3colPropertyTesting} for  $3$-colourability, where graphs, which are far from being $3$-colourable but locally look $3$-colourable, are implicitly obtained using a reduction from the constraint satisfaction problem (CSP). An explicit construction of a CSP, which is far from being satisfiable but every sublinear subset of constraints is satisfiable, is given. To our knowledge this is the only other known deterministic construction of a similar kind.

\textbf{Property Testing.} Property testing on graphs is a framework for studying sampling-based algorithms that solve a relaxation of classical decision problems. Given a graph $G$ and a property $P$ (\eg triangle-freeness), the goal of a property testing algorithm, called a \emph{property tester}, is to distinguish if a graph satisfies $P$ or is \emph{far} from satisfying $P$, where the definition of \emph{far} depends on the model. Property testing of dense graphs is well understood through its tight links with Szemer\'edi's regularity Lemma~\cite{alon2009combinatorial}.
In~\cite{GoldreichRon2002}, Goldreich and Ron introduced property testing on bounded-degree graphs, and since then much 
attention has been paid to property testing in sparse graphs. Nevertheless, our understanding of testability of properties in such graphs is still limited. 
In the \emph{bounded-degree graph model}~\cite{GoldreichRon2002}, the tester has oracle access to the input graph $G$ with maximum degree $d$, where $d$ is constant, and is allowed to perform \emph{neighbour queries} to the oracle. That is, for any specified vertex $v$ and index $i\leq d$, the oracle returns the $i$-th neighbour of $v$ if it exists or a special symbol $\bot$ otherwise in constant time. 
A graph $G$ with $n$ vertices is called \emph{$\varepsilon$-far} from satisfying a property $P$, if one needs to modify more than $\varepsilon dn$ edges to make it satisfy $P$.
The goal now becomes to distinguish, with probability at least $2/3$, if $G$ satisfies a property $P$ or is $\varepsilon$-far from satisfying $P$, for any specified proximity parameter $\varepsilon\in (0,1]$. 
Here the choice of success probability $2/3$ is arbitrary, any constant strictly greater than $1/2$ can be used.
A property $P$ is \emph{testable with query complexity $q(n)$} in the bounded-degree model, if for every $\varepsilon\in (0,1]$ there is an algorithm (an \emph{$\varepsilon$-tester}),
that makes this distinction while using at most $q(n)$ oracle queries, where $n$ is the size of the
input graph. Property $P$ is testable with \emph{one-sided error} if
instances in $P$ are always correctly identified. If $q$ is independent of $n$, we have \emph{constant} query complexity. Here the constant can depend on $\varepsilon$ and $d$. 

So far, it is known that some properties are constant-query testable, including subgraph-freeness, $k$-edge connectivity, cycle-freeness, being Eulerian, degree-regularity~\cite{GoldreichRon2002}, minor-freeness~\cite{benjamini2010every,hassidim2009local,kumar2019random}, hyperfinite properties \cite{NewmanSohler2013}, $k$-vertex connectivity~\cite{yoshida2012property,forster2019computing}, and subdivision-freeness~\cite{kawarabayashi2013testing}. On the other hand there are some properties whose query complexity is sublinear but not constant, \eg bipartiteness \cite{GoldreichRon2002,DBLP:journals/combinatorica/GoldreichR99}. Furthermore, there are some properties for which no tester with sublinear query complexity exists, \eg Hamiltonicity \cite{DBLP:journals/ieicet/YoshidaI10,Goldreich20}, 3-colourability \cite{3colPropertyTesting}, independent set size \cite{Goldreich20}. Note that for any computable property there is a linear query complexity property tester, \ie the tester, which accesses the entire graph and then uses any exact algorithm for the property, as we do not bound the running time of a property tester. We further want to point out that it is not true that NP-hard problems are in general hard for property testing, as \cite{NewmanSohler2013} shows that any property is constant query testable on the class of bounded-degree, planar graphs and many problems  remain NP-hard even on bounded degree, planar graphs, \eg Hamiltonicity \cite{GareyJT76}.

The major open problem in the area of property testing is finding  a full characterisation of the testable properties in the bounded degree model. Ito et al.~\cite{ito2019characterization} gave characterisations of one-sided error constant-query testable monotone graph properties, and one-sided error testable hereditary graph properties in the bounded-degree (directed and undirected) graph model.
The characterisation is based on the presence of many forbidden \emph{configurations} -- \emph{subgraphs} in the case of monotone properties and \emph{induced subgraphs} in the case of hereditary properties. Note that Hamiltonicity is a property that is neither monotone nor hereditary. Hence we believe that our results advance our understanding of testability of such properties.

We show that  any one-sided error property tester with sublinear query complexity needs to accept any locally Hamiltonian graph. Since every graph in our constructed class is locally Hamiltonian and far from being Hamiltonian we get the following, previously known lower-bound (see \cite{DBLP:journals/ieicet/YoshidaI10,Goldreich20}) as a direct consequence of Theorem~\ref{thm:existenceOfLocallyButFarFromHamiltonianGraphs}  (cf. Corollary~\ref{cor:main}).\smallskip

\emph{Hamiltonicity is not testable with one-sided error and query complexity $o(n)$
	in the bounded-degree model.}\smallskip

This provides evidence that using deterministic constructions is a viable route for finding lower bounds for property testing.

\paragraph*{Structure of the paper.}
We begin with the preliminaries in Section~\ref{sec:preliminaries}. In 
Section~\ref{sec:distanceTOHamiltonicity} we introduce \emph{local} Hamiltonicity, discuss \emph{distance} to Hamiltonicity, and we provide our construction. The construction takes a $d$-regular graph and turns it into a graph of degree at most $d+3$ with additional properties. In Section~\ref{sec:farness} we prove that there is a small $\epsilon$ such that any family of graphs obtained via the construction is $\epsilon$-far from being Hamiltonian.
Section~\ref{sec:localHAM} then shows that if we start our construction with $d$-regular expander graphs,
we obtain a family that is locally Hamiltonian. In Section~\ref{sec:applicationsToPT} we prove a known lower-bound for property testing from our construction.
\section{Preliminaries}\label{sec:preliminaries}
Let $\mathbb{N}$ denote the set of natural numbers including $0$. We denote $\mathbb{N}_{\geq n}:=\{m\in \mathbb{N}: m\geq n \}$ and $[n]:=\{1,\dots,n\}$ for any $n\in \mathbb{N}$ (where we let $[0]:=\emptyset$). For two sets $A,B$ we use $A\triangle B$ to denote the symmetric difference of $A$ and $B$.

This paper concerns simple undirected graphs, however, we will use directed graphs in our construction. Unless otherwise specified graphs are undirected. 

An \emph{undirected graph} $G$ is a tuple $(V(G),E(G))$ where $V(G)$ is a finite set of vertices and $E(G)\subseteq \{e\subseteq V(G): |e|=2\}$ is the set of edges. A \emph{directed graph} $G$ is a tuple $(V(G),E(G))$ where $V(G)$ is a finite  set of vertices and $E(G)\subseteq V\times V$ is the set of edges. For a directed graph $G$ and a vertex $v\in V(G)$ we denote the set of all incoming edges of $v$ by  $E_G^{-}(v)$ and the set of all outgoing edges of $v$ by $E_G^{+}(v)$. The \emph{order} of a graph $G$ is the size of $V(G)$. 

An \emph{isomorphism} from a graph $G$ to a graph $H$ is a bijective map $f:V(G)\rightarrow V(H)$ which preserves the edge relation, \ie\hspace{-4pt},  $\{v,w\}\in E(G)$ iff $\{f(v),f(w)\}\in E(H)$. Equivalently an \emph{isomorphism} from a directed graph $G$ to a directed graph $H$ is a bijective map $f:V(G)\rightarrow V(H)$ such that $(v,w)\in E(G)$ iff $(f(v),f(w))\in E(H)$. Two graphs $G,H$  are called \emph{isomorphic}, denoted by $G\cong H$, if there is an isomorphism between them. A graph $H$ is a \emph{subgraph} of a graph $G$ if $V(H)\subseteq V(G)$ and $E(H)\subseteq E(G)$. For any graph $G$ and  $S\subseteq V(G)$ we let $G[S]:=(S, \{\{v,w\}\in E(G): v,w\in S \})$ be the subgraph of $G$ induced by $S$. We call a subgraph $H$ of $G$ an \emph{induced subgraph} of $G$ if $H$ is the subgraph of $G$ induced by some set $S\subseteq V(G)$. For a graph $G$ and vertices $v,w\in V(G)$ we say that $v$ is  a neighbour of $w$ or that $v$ is adjacent to $w$ if $\{v,w\}\in E(G)$.  For $S\subseteq V(G)$ we define the  \emph{neighbourhood} of $S$ in $G$, denoted $N_G(S)$ to be the set of vertices $S\cup \{v\in V(G): v \text{ is a neighbour of some }w\in S\}$. This notion of neighbourhood is often referred to as the \emph{closed neighbourhood}. 

For a graph $G$ (directed or undirected) the \emph{degree of a vertex} $v\in V(G)$, denoted  $\deg_G(v)$, is the number of edges that contain vertex $v$. The \emph{degree of a graph} $G$, denoted $\deg(G)$, is the maximum degree over all vertices.  A graph is called \emph{$d$-regular} if every vertex $v\in V(G)$ has degree $d$, where $d\in \mathbb{N}$. A graph has bounded degree $d$ if $\deg(G)\leq d$, where $d\in \mathbb{N}$. For $d\in \mathbb{N}$ we denote the class of all bounded degree $d$ graphs by $\mathcal{C}_d$.

A \emph{path}  of length $\ell$ in a graph $G$ (undirected or directed) is a sequence $(p_0,p_1,\dots,p_\ell)$ of vertices of $G$ such that $\{p_{i-1},p_{i}\}\in E(G)$/$(p_{i-1},p_{i})\in E(G)$ for $i\in [\ell]$.
A \emph{simple path} in $G$ is a path in which no vertex appears twice.
A \emph{cycle} is a path $C=(c_0,\dots,c_\ell)$ such that $c_0=c_\ell$ and $(c_1,\dots,c_\ell)$ is a simple path.
A \emph{Hamiltonian cycle} is a cycle which contains every vertex of $G$.
We call $G$ Hamiltonian if $G$ contains a Hamiltonian cycle.
A path $P'=(p_1',\dots,p_{\ell'}')$ is a \emph{subpath} of a cycle $C=(c_0,\dots,c_\ell)$ if there is an index $i\in [\ell]$ such that either $p_j'=p_{(i+j\mod \ell)}$ for every $j\in [\ell']$ or $p_j'=p_{(i+\ell'-j\mod \ell)}$ for every $j\in [\ell']$. Note that this means that subpaths appear either in the path or in the reversed path.
We choose this definition of subpath for convenient notation below.

For a graph  $G$ we define the \emph{expansion ratio} to be 
\begin{displaymath}
h(G):=\min_{\{S\subset V(G): |S|\leq |V(G)|/2\}}\frac{|\{e\in E(G): |e\cap S|=1\}|}{|S|}.
\end{displaymath}
For $d\in \mathbb{N}$ and any constant $\epsilon>0$ we call a sequence $(G_m)_{m\in\mathbb{N}}$ of $d$-regular graphs of increasing number of vertices a \emph{family of $\epsilon$-expanders} if  $h(G_m)\geq \epsilon$ for all $m\in \mathbb{N}$.

\section{Local Hamiltonicity and distance to Hamiltonicity}\label{sec:distanceTOHamiltonicity}
In this section we introduce the central concepts in this paper and explain our construction. The proofs of the central properties of the construction are given in the next sections.
\begin{definition}[$\epsilon$-farness from being Hamiltonian]\label{def:farFromHam}
	Let $d\in \mathbb{N}$ and $\epsilon\in [0,1]$.
	A graph   $G\in \mathcal{C}_d$  is $\epsilon$-far from being Hamiltonian if for every set $E\subseteq \{e\subseteq V(G): |e|=2\}$ of size less than or equal to $\epsilon d\cdot |V(G)|$ the graph $(V(G),E(G)\triangle E)$ is not Hamiltonian. 
\end{definition}
\begin{definition}[Locally Hamiltonian]\label{def:locallyHam}
	Let $\mathcal{C}$ be a class of graphs and let $\delta\in (0,1]$.
	A graph $G\in \mathcal{C}$ is called \emph{$\delta$-locally Hamiltonian on $\mathcal{C}$} if for every set $S\subseteq V(G)$ of at most $\delta\cdot|V(G)|$ vertices there is a Hamiltonian graph $H:=H_S\in \mathcal{C}$ with $|V(H)|=|V(G)|$, a subset $T:=T_S\subseteq V(H)$ and an isomorphism from $G[N_G(S)]$ to $H[N_H(T)]$ which maps $S$ onto $T$.  
\end{definition}
Note that by relaxing $|V(G)|=|V(H)|$  to $|V(H)|>|N_G(S)|$ we get an equivalent definition. As long as the Hamiltonian cycle in $H$ contains at least one edge which is not in $H[N_H(T)]$ we can contract or subdivide edges on the Hamiltonian cycle of $H$ that are not in $H[N_H(T)]$ to make $|V(H)|=|V(G)|$.

\begin{remark}
	Let $\mathcal C$ be a graph class. Every Hamiltonian graph in $\mathcal C$ is $\delta$-locally Hamiltonian for every $\delta\in (0,1]$. And every graph $G\in \mathcal C$ is $1$-locally Hamiltonian iff $G$ is Hamiltonian.
	
	Let $d\geq 2$. A graph $G$ on $n\geq 2d$
	vertices is $1/n$-locally Hamiltonian on $\mathcal C_d$ iff the minimum degree of $G$ is greater than $1$. The `only if' part is easy to see. The `if' part is true because the closed neighbourhood of any vertex $v$ contains at most $d+1$ vertices and contains at least one path of length two. This path of length two can be completed into a Hamiltonian cycle without adding edges within the neighbourhood of $v$, because $G$ contains at least $d-1$ further vertices, via which we connect the vertices and the path of length two in the neighbourhood of $v$. 
\end{remark}
The next lemma states that if  $G\in \mathcal{C}_d$ has many subsets of vertices whose neighbourhoods witness non-Hamiltonicity, then it is far from being Hamiltonian. Here we say that the neighbourhood of $S\subseteq V(G)$ \emph{witnesses} non-Hamiltonicity if for every  Hamiltonian $H\in \mathcal{C}_d$ of the same order as $G$ and every $T\subseteq V(H)$ there is no isomorphism from $G[N_G(S)]$ to $H[N_H(T)]$ that maps $S$ onto $T$. For example the neighbourhood of every vertex of degree $1$ in a graph $G$ witnesses non-Hamiltonicity of $G$ and the neighbourhood of any connected component of $G$ (unless $G$ is connected) witnesses non-Hamiltonicity of $G$.
\begin{lemma}\label{lem:manyCounterExamplesImplyFarness}
	Let $d\in \mathbb{N}_{\geq 2}$ and $G\in \mathcal{C}_d$. For all $0\leq \epsilon < 1/2d$, there is a number
	$\lambda=\lambda(\epsilon,d)\in (0,1)$ such that if there are $\lambda n$ subsets of $V(G)$ whose (closed) neighbourhoods are pairwise disjoint and each witnesses
	non-Hamiltonicity, then $G$ is $\epsilon$-far from being Hamiltonian.
\end{lemma}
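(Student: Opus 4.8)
The plan is to argue by contraposition on the edge-modification side: suppose $G$ is \emph{not} $\epsilon$-far from being Hamiltonian, i.e.\ there is an edge set $E$ with $|E|\le \epsilon d\, n$ such that $G':=(V(G),E(G)\triangle E)$ is Hamiltonian, and let $C$ be a Hamiltonian cycle in $G'$. The point is that $C$ can only ``interfere'' with a bounded number of the given witnessing neighbourhoods, so if there are too many of them we reach a contradiction. First I would quantify the interference. Each edge $e\in E$ has two endpoints; say a subset $S\subseteq V(G)$ is \emph{touched} by $e$ if $e$ has an endpoint in $N_G(S)$. Since the neighbourhoods $N_G(S_1),\dots,N_G(S_{\lambda n})$ are pairwise disjoint, each endpoint of each $e\in E$ lies in at most one of them, so the total number of (witness, edge) incidences is at most $2|E|\le 2\epsilon d n$. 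Hence at most $2\epsilon d n$ of the $\lambda n$ witnesses are touched by some edge of $E$; choosing $\lambda := 2\epsilon d + \tfrac12(1-2\epsilon d) $ — any value with $2\epsilon d<\lambda<1$ works, and such a value exists precisely because $\epsilon<1/2d$ — we guarantee that some witnessing set $S=S_i$ has $N_G(S)$ disjoint from all endpoints of all edges in $E$, equivalently $E(G')[N_{G'}(S)]=E(G)[N_G(S)]$ and $N_{G'}(S)=N_G(S)$ (no edge of $E$ is incident to $N_G(S)$, so no new neighbours appear and none disappear).

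Next I would use this ``untouched'' witness to contradict the fact that $G'$ is Hamiltonian. Since $G'$ is Hamiltonian (witnessed by $C$) and $C$ is itself a Hamiltonian graph on $n$ vertices, I would set $H:=(V(G'),E(C))$, which is Hamiltonian and of the same order as $G$, and take $T:=S$. The identity map on $V(G')$ restricted to $N_{G'}(S)=N_G(S)$ is an isomorphism from $G'[N_{G'}(S)]$ onto $H[N_H(S)]$? — this is not quite right, because $H[N_H(S)]$ only contains the edges of the cycle $C$, not all edges of $G'$ inside the neighbourhood. So the cleaner move is to take $H:=G'$ itself: $G'$ is a Hamiltonian graph in $\mathcal C_d$ of order $n$ (note $\deg(G')\le d$ since $G'$ was assumed to be made Hamiltonian within the degree bound implicit in Definition~\ref{def:farFromHam}), $T:=S$, and the identity on $N_{G'}(S)=N_G(S)$ is an isomorphism from $G[N_G(S)]$ to $G'[N_{G'}(S)]$ mapping $S$ onto $T$, because we showed $G$ and $G'$ induce exactly the same edges on this vertex set. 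This contradicts the assumption that $N_G(S)$ witnesses non-Hamiltonicity.

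Therefore no such $E$ exists, i.e.\ $G$ is $\epsilon$-far from being Hamiltonian, which proves the lemma with $\lambda=\lambda(\epsilon,d)$ as chosen above. The main obstacle, and the place that needs care, is the bookkeeping in the first paragraph: one has to be precise about what ``touched'' means so that deleting or inserting an edge of $E$ genuinely cannot change either the vertex set $N_G(S)$ or the induced edge set $G[N_G(S)]$ — in particular an inserted edge with one endpoint just outside $N_G(S)$ and one inside would enlarge the closed neighbourhood, so ``$e$ has an endpoint in $N_G(S)$'' is exactly the right notion of interference, and disjointness of the $N_G(S_i)$ is what makes the counting $2|E|$ rather than something larger. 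A minor subtlety is confirming that $\deg(G')\le d$ so that $G'\in\mathcal C_d$; this is built into the definition of $\epsilon$-farness (edge modifications that make $G$ Hamiltonian are understood to respect the degree bound $d$), and if one prefers to be fully self-contained one can instead invoke the remark following Definition~\ref{def:locallyHam} to replace $G'$ by a Hamiltonian graph of the right order without changing the induced subgraph on $N_G(S)$.
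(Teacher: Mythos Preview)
Your proof is correct and follows the same contrapositive counting argument as the paper, which simply sets $\lambda:=2d\epsilon$ and observes that every witnessing set forces an edge of $E$ to meet it, so $|E|\ge \lambda n/2=\epsilon d n$; you are in fact more careful than the paper in using $N_G(S)$ rather than $S$ for the notion of ``touched'' and in taking $\lambda$ strictly above $2d\epsilon$. One small correction: Definition~\ref{def:farFromHam} does \emph{not} constrain the modified graph to lie in $\mathcal C_d$, and the remark you cite concerns order, not degree --- but since no edge of $E$ meets $N_G(S)$ every vertex of $N_G(S)$ keeps degree $\le d$ in $G'$, so you can delete non-cycle edges at the remaining high-degree vertices (all outside $N_G(S)$) without disturbing $G'[N_{G'}(S)]$ or the Hamiltonian cycle; the paper's proof silently glosses over this same point.
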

\begin{proof}
	Let $\lambda(\epsilon,d):=2d\epsilon$.
	First note that if a set $S\subseteq V(G)$ witnesses non-Hamiltonicity then every set $E\subseteq \{e\subseteq V(G): |e|=2\}$, for which $(V(G),E(G)\triangle E)$ is Hamiltonian, must contain $e$ such that $S\cap e \not= \emptyset$.
	Since the $\lambda n$ neighbourhoods of sets witnessing non-Hamiltonicity are pairwise disjoint we get that  the size of every set $E\subseteq \{e\subseteq V(G): |e|=2\}$, for which $(V(G),E(G)\triangle E)$ is Hamiltonian, is at least $\lambda n/2=\epsilon d n$ and hence $G$ is $\epsilon$-far from being Hamiltonian.
\end{proof}
Note that for the caterpillar (see Figure~\ref{fig:examplesIntroduction}) such pairwise disjoint sets of vertices, whose neighbourhoods witness non-Hamiltonicity, are the singleton sets consisting of the vertices of degree one. Similarly, for the cycle of $C_4$'s (see Figure~\ref{fig:examplesIntroduction}), 
we can choose the vertex set of every other $C_4$ on the cycle.

One might wonder if the converse of Lemma~\ref{lem:manyCounterExamplesImplyFarness} is true, \ie\hspace{-4pt},
if $G$ is $\epsilon$-far from being Hamiltonian, then $G$ contains a linear fraction of pairwise disjoint sets of vertices whose neighbourhoods
witness non-Hamiltonicity. (This would actually imply the existence of a one-sided error property tester with constant query complexity.)
Our examples below show that this is not the case. In fact it even shows  that for some constant $c\in \mathbb{N}$ there is a class of graphs which are $\epsilon$-far from being Hamiltonian but we cannot even find $c$ sets of vertices with pairwise disjoint neighbourhoods  witnessing non-Hamiltonicity. In other words there is a class of graphs which are $\epsilon$-far from being Hamiltonian but $\delta$-locally Hamiltonian for some $\delta,\epsilon\in (0,1)$.

\subsection{Construction}\label{sec:construction} 
In this section we introduce the main step of our construction of graphs which are locally Hamiltonian and far from being Hamiltonian. At a high level, we construct a graph $G_\mathcal{E}$ by choosing a $d$-regular base graph $\mathcal{E}$ and building $G_\mathcal{E}$ by introducing a path-gadget for every edge of $\mathcal{E}$, connecting these path-gadgets into a large cycle and linking path gadgets together if the  edges of $\mathcal{E}$ corresponding to the path gadgets are incident to the same vertex. We give the precise construction in the following. 

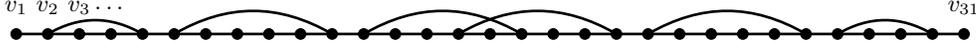
\begin{figure}
	\centering
	\begin{tikzpicture}[>=stealth']
	\definecolor{C1}{RGB}{1,1,1}
	\definecolor{C2}{RGB}{0,204,204}
	\definecolor{C3}{RGB}{204,0,0}
	\definecolor{C4}{RGB}{0,204,0}
	
	\tikzstyle{ns1}=[line width=1]

	\def \breadthDist {0.42}
	\foreach \x in {1,...,31} {
		\node[draw,circle,fill=black,inner sep=0pt, minimum width=4pt] (thisNode) at (\x*\breadthDist,0) {};
	}
	\foreach \x in {1,...,30} {
		\pgfmathtruncatemacro{\xx}{\x+1};
		\path[C1,ns1]          (\x*\breadthDist,0)  edge  (\xx*\breadthDist,0);
	}
	\foreach \x in {2,27} {
		\pgfmathtruncatemacro{\xx}{\x+3};
		\path[C1,ns1]          (\x*\breadthDist,0)  edge   [bend left] (\xx*\breadthDist,0);
	}
	\foreach \x in {6,12,15,21} {
		\pgfmathtruncatemacro{\xx}{\x+5};
		\path[C1,ns1]          (\x*\breadthDist,0)  edge   [bend left] (\xx*\breadthDist,0);
	}

	\node[minimum height=10pt,inner sep=0,font=\small,C1] at (\breadthDist,0.35) {$v_1$};
	\node[minimum height=10pt,inner sep=0,font=\small,C1] at (2*\breadthDist,0.35) {$v_2$};
	\node[minimum height=10pt,inner sep=0,font=\small,C1] at (3*\breadthDist,0.35) {$v_3$};
	\node[minimum height=10pt,inner sep=0,font=\small,C1] at (4*\breadthDist,0.35) {$\dots$};
	\node[minimum height=10pt,inner sep=0,font=\small,C1] at (31*\breadthDist,0.35) {$v_{31}$};

	\end{tikzpicture}
	\caption{$P(v_1,\dots,v_{31})$.} \label{fig:P}
\end{figure}
First we create a gadget (see Figure~\ref{fig:P} for illustration). Let $v_1,\dots,v_{31}$ be a set of vertices. Then we let $P(v_1,\dots,v_{31})$ be the graph with vertex set $\{v_1,\dots,v_{31}\}$ and edge set
\begin{align*}
\big\{\{v_i,v_{i+1}\},\{v_j,v_{j+3}\},\{v_k,v_{k+5}\}: i\in \{1,\dots,30\},j\in \{2,27\},k\in \{6,12,15,21\} \big\}.
\end{align*}

For a graph $G$ with $\{u_1,\dots,u_{31},v_1,\dots,v_{31},w_1,\dots,w_6\}\subseteq V(G)$, $G[u_1,\dots,u_{31}]=P(u_1,\dots,u_{31})$ and $G[v_1,\dots,v_{31}]=P(v_1,\dots,v_{31})$ we say that $G$ contains a \emph{link from $P(u_1,\dots,u_{31})$ to $P(v_1,\dots,v_{31})$ via $w_1,\dots,w_6$} (see Figure~\ref{fig:link} for illustration), if $E(G)$ contains 
\begin{align*}
\Big\{\{u_{23},v_3\}, \{u_{18},v_{8}\}, \{u_{29},v_{9}\}, \{u_{24},v_{14}\}, &\{v_5,w_1\},\{w_1,w_2\}, \{w_2,w_3\},\{w_3,u_{23}\},\\ &\{u_{24},w_4\},\{w_4,w_5\},\{w_5,w_6\}, \{w_6,v_{12}\}\Big\}.
\end{align*}

\begin{figure}
	\centering
	\begin{tikzpicture}[>=stealth']
	\definecolor{C1}{RGB}{1,1,1}
	\definecolor{C2}{RGB}{0,204,204}
	\definecolor{C3}{RGB}{204,0,0}
	\definecolor{C4}{RGB}{0,204,0}
	
	\tikzstyle{ns1}=[line width=1]
	
	\def \heightDist {1.6}
	\def \breadthDist {0.7}
	\def \heightShift {1}
	\node[draw,circle,fill=black,inner sep=0pt, minimum width=4pt] (1) at (-9.5*\breadthDist,\heightDist) {};
	\node[draw,circle,fill=black,inner sep=0pt, minimum width=4pt] (2) at (-8.5*\breadthDist,\heightDist) {};
	\node[draw,circle,fill=black,inner sep=0pt, minimum width=4pt] (3) at (-7.5*\breadthDist,\heightDist) {};
	\node[draw,circle,fill=black,inner sep=0pt, minimum width=4pt] (4) at (-6.5*\breadthDist,\heightDist) {};
	\node[draw,circle,fill=black,inner sep=0pt, minimum width=4pt] (5) at (-5.5*\breadthDist,\heightDist) {};
	\node[draw,circle,fill=black,inner sep=0pt, minimum width=4pt] (6) at (-4.5*\breadthDist,\heightDist) {};
	\node[draw,circle,fill=black,inner sep=0pt, minimum width=4pt] (7) at (-3.5*\breadthDist,\heightDist) {};
	\node[draw,circle,fill=black,inner sep=0pt, minimum width=4pt] (8) at (-2.5*\breadthDist,\heightDist) {};
	\node[draw,circle,fill=black,inner sep=0pt, minimum width=4pt] (9) at (-1.5*\breadthDist,\heightDist) {};
	\node[draw,circle,fill=black,inner sep=0pt, minimum width=4pt] (10) at (-0.5*\breadthDist,\heightDist) {};
	\node[draw,circle,fill=black,inner sep=0pt, minimum width=4pt] (11) at (0.5*\breadthDist,\heightDist) {};
	\node[draw,circle,fill=black,inner sep=0pt, minimum width=4pt] (12) at (1.5*\breadthDist,\heightDist) {};
	\node[draw,circle,fill=black,inner sep=0pt, minimum width=4pt] (13) at (2.5*\breadthDist,\heightDist) {};
	\node[draw,circle,fill=black,inner sep=0pt, minimum width=4pt] (14) at (3.5*\breadthDist,\heightDist) {};
	\node[draw,circle,fill=black,inner sep=0pt, minimum width=4pt] (15) at (4.5*\breadthDist,\heightDist) {};
	\node[draw,circle,fill=black,inner sep=0pt, minimum width=4pt] (16) at (5.5*\breadthDist,\heightDist) {};
	\node[draw,circle,fill=black,inner sep=0pt, minimum width=4pt] (17) at (6.5*\breadthDist,\heightDist) {};		
	
	\node[draw,circle,fill=black,inner sep=0pt, minimum width=4pt] (15') at (-10.5*\breadthDist,0) {};
	\node[draw,circle,fill=black,inner sep=0pt, minimum width=4pt] (16') at (-9.5*\breadthDist,0) {};
	\node[draw,circle,fill=black,inner sep=0pt, minimum width=4pt] (17') at (-8.5*\breadthDist,0) {};
	\node[draw,circle,fill=black,inner sep=0pt, minimum width=4pt] (18') at (-7.5*\breadthDist,0) {};
	\node[draw,circle,fill=black,inner sep=0pt, minimum width=4pt] (19') at (-6.5*\breadthDist,0) {};
	\node[draw,circle,fill=black,inner sep=0pt, minimum width=4pt] (20') at (-5.5*\breadthDist,0) {};
	\node[draw,circle,fill=black,inner sep=0pt, minimum width=4pt] (21') at (-4.5*\breadthDist,0) {};
	\node[draw,circle,fill=black,inner sep=0pt, minimum width=4pt] (22') at (-3.5*\breadthDist,0) {};
	\node[draw,circle,fill=black,inner sep=0pt, minimum width=4pt] (23') at (-2.5*\breadthDist,0) {};
	\node[draw,circle,fill=black,inner sep=0pt, minimum width=4pt] (24') at (-1.5*\breadthDist,0) {};
	\node[draw,circle,fill=black,inner sep=0pt, minimum width=4pt] (25') at (-0.5*\breadthDist,0) {};
	\node[draw,circle,fill=black,inner sep=0pt, minimum width=4pt] (26') at (0.5*\breadthDist,0) {};
	\node[draw,circle,fill=black,inner sep=0pt, minimum width=4pt] (27') at (1.5*\breadthDist,0) {};
	\node[draw,circle,fill=black,inner sep=0pt, minimum width=4pt] (28') at (2.5*\breadthDist,0) {};
	\node[draw,circle,fill=black,inner sep=0pt, minimum width=4pt] (29') at (3.5*\breadthDist,0) {};
	\node[draw,circle,fill=black,inner sep=0pt, minimum width=4pt] (30') at (4.5*\breadthDist,0) {};
	\node[draw,circle,fill=black,inner sep=0pt, minimum width=4pt] (31') at (5.5*\breadthDist,0) {};
	\node[draw,circle,fill=black,inner sep=0pt, minimum width=4pt] (1'') at (-5.5*\breadthDist+0.25*3*\breadthDist,0.75*\heightDist) {};
	\node[draw,circle,fill=black,inner sep=0pt, minimum width=4pt] (2'') at (-5.5*\breadthDist+0.5*3*\breadthDist,0.5*\heightDist) {};
	\node[draw,circle,fill=black,inner sep=0pt, minimum width=4pt] (3'') at (-5.5*\breadthDist+0.75*3*\breadthDist,0.25*\heightDist) {};
	\node[draw,circle,fill=black,inner sep=0pt, minimum width=4pt] (4'') at (-1.5*\breadthDist+0.25*3*\breadthDist,0.25*\heightDist) {};
	\node[draw,circle,fill=black,inner sep=0pt, minimum width=4pt] (5'') at (-1.5*\breadthDist+0.5*3*\breadthDist,0.5*\heightDist) {};
	\node[draw,circle,fill=black,inner sep=0pt, minimum width=4pt] (6'') at (-1.5*\breadthDist+0.75*3*\breadthDist,0.75*\heightDist) {};
	\path[C1,ns1]          (1)  edge  (2);
	\path[C1,ns1]          (2)  edge  (3);
	\path[C1,ns1]          (3)  edge  (4);
	\path[C1,ns1]          (4)  edge  (5);
	\path[C1,ns1]          (5)  edge  (6);
	\path[C1,ns1]          (6)  edge  (7);
	\path[C1,ns1]          (7)  edge  (8);
	\path[C1,ns1]          (8)  edge  (9);
	\path[C1,ns1]          (9)  edge  (10);
	\path[C1,ns1]          (10)  edge  (11);
	\path[C1,ns1]          (11)  edge  (12);
	\path[C1,ns1]          (12)  edge  (13);
	\path[C1,ns1]          (13)  edge  (14);
	\path[C1,ns1]          (14)  edge  (15);
	\path[C1,ns1]          (15)  edge  (16);
	\path[C1,ns1]          (16)  edge  (17);
	\path[C1,ns1]          (15')  edge  (16');
	\path[C1,ns1]          (16')  edge  (17');
	\path[C1,ns1]          (17')  edge  (18');
	\path[C1,ns1]          (18')  edge  (19');
	\path[C1,ns1]          (19')  edge  (20');
	\path[C1,ns1]          (20')  edge  (21');
	\path[C1,ns1]          (21')  edge  (22');
	\path[C1,ns1]          (22')  edge  (23');
	\path[C1,ns1]          (23')  edge  (24');
	\path[C1,ns1]          (24')  edge  (25');
	\path[C1,ns1]          (25')  edge  (26');
	\path[C1,ns1]          (26')  edge  (27');
	\path[C1,ns1]          (27')  edge  (28');
	\path[C1,ns1]          (28')  edge  (29');
	\path[C1,ns1]          (29')  edge  (30');
	\path[C1,ns1]          (30')  edge  (31');
	\path[C1,ns1]          (3)  edge  (23');
	\path[C1,ns1]          (18')  edge  (8);
	\path[C1,ns1]          (9)  edge  (29');
	\path[C1,ns1]          (24')  edge  (14);
	\path[C1,ns1]          (5)  edge  (1'');
	\path[C1,ns1]          (1'')  edge  (2'');
	\path[C1,ns1]          (2'')  edge  (3'');
	\path[C1,ns1]          (3'')  edge  (23');
	\path[C1,ns1]          (24')  edge  (4'');
	\path[C1,ns1]          (4'')  edge  (5'');
	\path[C1,ns1]          (5'')  edge  (6'');
	\path[C1,ns1]          (6'')  edge  (12);
	\path[C1,ns1]          (2)  edge   [bend left] (5);
	\path[C1,ns1]          (6)  edge   [bend left] (11);
	\path[C1,ns1]          (12)  edge   [bend left] (17);
	\path[C1,ns1]          (15)  edge   [bend left] (7*\breadthDist,0.3*\heightShift+\heightDist);
	\path[C1,ns1]          (-11*\breadthDist,-0.3*\heightShift)  edge   [bend right] (17');
	\path[C1,ns1]          (15')  edge   [bend right] (20');
	\path[C1,ns1]          (21')  edge   [bend right] (26');
	\path[C1,ns1]          (27')  edge   [bend right] (30');
	
	\node[minimum height=10pt,inner sep=0,font=\small,C1] at (-9.5*\breadthDist,\heightDist+0.3) {$v_1$};
	\node[minimum height=10pt,inner sep=0,font=\small,C1] at (-7.5*\breadthDist,\heightDist-0.3) {$v_{3}$};
	\node[minimum height=10pt,inner sep=0,font=\small,C1] at (-5.5*\breadthDist,\heightDist+0.3) {$v_{5}$};
	\node[minimum height=10pt,inner sep=0,font=\small,C1] at (-2.5*\breadthDist,\heightDist-0.3) {$v_{8}$};
	\node[minimum height=10pt,inner sep=0,font=\small,C1] at (-1.5*\breadthDist,\heightDist-0.3) {$v_{9}$};
	\node[minimum height=10pt,inner sep=0,font=\small,C1] at (1.5*\breadthDist,\heightDist+0.3) {$v_{12}$};
	\node[minimum height=10pt,inner sep=0,font=\small,C1] at (3.5*\breadthDist,\heightDist-0.3) {$v_{14}$};
	\node[minimum height=10pt,inner sep=0,font=\small,C1] at (-5.5*\breadthDist+0.25*3*\breadthDist+0.3,0.75*\heightDist) {$w_{1}$};
	\node[minimum height=10pt,inner sep=0,font=\small,C1] at (-5.5*\breadthDist+0.5*3*\breadthDist+0.3,0.5*\heightDist) {$w_{2}$};
	\node[minimum height=10pt,inner sep=0,font=\small,C1] at (-5.5*\breadthDist+0.75*3*\breadthDist+0.3,0.25*\heightDist) {$w_{3}$};
	\node[minimum height=10pt,inner sep=0,font=\small,C1] at (-1.5*\breadthDist+0.25*3*\breadthDist-0.3,0.25*\heightDist) {$w_{4}$};
	\node[minimum height=10pt,inner sep=0,font=\small,C1] at (-1.5*\breadthDist+0.5*3*\breadthDist-0.3,0.5*\heightDist) {$w_{5}$};
	\node[minimum height=10pt,inner sep=0,font=\small,C1] at (-1.5*\breadthDist+0.75*3*\breadthDist-0.3,0.75*\heightDist) {$w_{6}$};
	\node[minimum height=10pt,inner sep=0,font=\small,C1] at (5.5*\breadthDist,-0.3) {$u_{31}$};
	\node[minimum height=10pt,inner sep=0,font=\small,C1] at (3.5*\breadthDist,0.3) {$u_{29}$};
	\node[minimum height=10pt,inner sep=0,font=\small,C1] at (-1.5*\breadthDist,-0.3) {$u_{24}$};
	\node[minimum height=10pt,inner sep=0,font=\small,C1] at (-2.5*\breadthDist,-0.3) {$u_{23}$};
	\node[minimum height=10pt,inner sep=0,font=\small,C1] at (-7.5*\breadthDist,0.3) {$u_{18}$};

	\end{tikzpicture}
	\caption{A link from $P(u_1,\dots,u_{31})$ to $P(v_1,\dots,v_{31})$ via $w_1,\dots,w_6$. }\label{fig:link}
\end{figure}
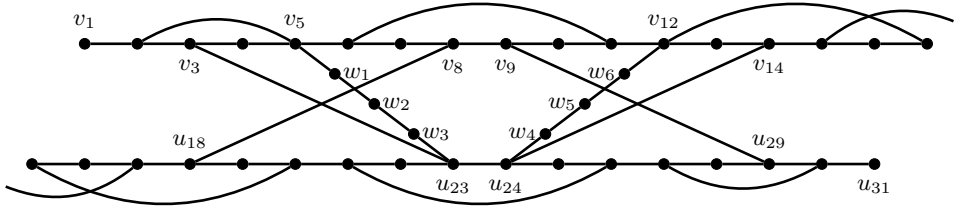
Finally to any graph $G$ we associate a directed graph $\vec{G}$ which is the graph that is obtained from $G$ by replacing every edge $\{u,v\}\in E(G)$ by the two directed edges $(u,v)$ and $(v,u)$. We can now define the graph construction.
\begin{definition}\label{def:G_E}
	Let $\mathcal{E}$ be a $d$-regular graph (the \emph{base graph}) and $f:E(\vec{\mathcal{E}})\rightarrow \{1,\dots,|E(\vec{\mathcal{E}})|\}$ be any linear order on  $E(\vec{\mathcal{E}})$. We define the graph $G_\mathcal{E}$ as follows.
	\begin{displaymath}
	V(G_\mathcal{E}):=\{a_1^e,\dots,a_{31}^e: e\in E(\vec{\mathcal{E}})\}\cup \{b_1^v,\dots,b_6^v: v\in V(\vec{\mathcal{E}})\}.
	\end{displaymath}
	$E(G_\mathcal{E})$ consists of the minimum set of edges such that 
	\begin{itemize}
		\item $G_\mathcal{E}[a_1^e,\dots,a_{31}^e]=P(a_1^e,\dots,a_{31}^e)$ for every $e\in E(\vec{\mathcal{E}})$,
		\item $a_{31}^{f^{-1}(i)}$ is adjacent to $a_1^{f^{-1}(j)}$ for every $i\in [|E(\vec{\mathcal{E}})|]$, $j:=1$ if $i=|E(\vec{\mathcal{E}})|$ and $j:=i+1$ otherwise and 
		\item $G_\mathcal{E}$ contains a link from 
		$P(a_1^{(u,v)},\dots,a_{31}^{(u,v)})$ to $P(a_1^{(v,w)},\dots,a_{31}^{(v,w)})$ via $b_1^v,\dots,b_6^v$ for every triple of vertices $u,v,w\in V(\vec{\mathcal{E}})$ with $(u,v),(v,w)\in E(\vec{\mathcal{E}})$.
	\end{itemize}
\end{definition}
See Figure~\ref{fig:close-up} for an illustration. We would like to point out that the minimality condition on the set of edges of $G_\mathcal{E}$ in the above definition is necessary, as omitting it would allow adding of any number of edges between different path gadgets. 

Note that the construction of $G_\mathcal{E}$ depends on  $f$ as well as $\mathcal{E}$, but since being locally Hamiltonian and being far from Hamiltonian  are independent of which linear order $f$ we use, we omit the dependency on~$f$.

\begin{remark}\label{rem:degreeAndNumberOfVerticesOFG_E}
	If $\mathcal{E}$ is $d$-regular, for $d\geq 1$, and $|V(\mathcal{E})|=n$, then the degree of $G_\mathcal{E}$ is at most $d+3$ and $|V(G_\mathcal{E})|=(6+31d)n$.
\end{remark}
\begin{note}
	$G_\mathcal{E}$ contains a large cycle of length $31dn$, \ie\hspace{-4pt}, the cycle
	\begin{displaymath}
	(\dots\dots, a_{31}^{f^{-1}(i-1)},a_1^{f^{-1}(i)},a_2^{f^{-1}(i)},\dots,a_{31}^{f^{-1}(i)},a_1^{f^{-1}(i+1)},\dots\dots).
	\end{displaymath}
	However $G_\mathcal{E}$ also contains $6n$ vertices which are not part of this cycle.
\end{note}

\section{The construction is far from being Hamiltonian}\label{sec:farness}
In this section we prove the following.
\begin{theorem}\label{thm:farFromHam}
	For every $d\in \mathbb{N}_{\geq 2}$ there is $\epsilon=\epsilon(d)\in (0,1)$ such that for any $d$-regular graph  $\mathcal{E}$  the graph $G_\mathcal{E}$ constructed in \ref{def:G_E} is $\epsilon$-far from being Hamiltonian.
\end{theorem}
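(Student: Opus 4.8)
The plan is to pin down the rigid skeleton that every Hamiltonian cycle of $G_\mathcal{E}$ is forced to contain, to show that this skeleton cannot be completed into a single cycle without $\Omega(n)$ edge modifications, and then to deduce $\epsilon$-farness by a robustness argument.

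\textbf{Step 1 (forced edges).} First I would run a degree count inside the construction. The chords of a gadget hit the indices $\{2,5,6,11,12,15,17,20,21,26,27,30\}$, the link ports are $\{3,5,8,9,12,14\}$ on the target side and $\{18,23,24,29\}$ on the source side, and the endpoints $a_1^e,a_{31}^e$ receive an inter-gadget edge; consequently exactly the vertices $a_i^e$ with $i\in\{1,4,7,\dots,31\}$ have degree $2$ in $G_\mathcal{E}$, and so do $b_2^v$ and $b_5^v$. Since a Hamiltonian cycle uses both edges at every degree-$2$ vertex, this forces, in every gadget, a fixed set of $20$ spine edges, all inter-gadget edges $\{a_{31}^{f^{-1}(i)},a_1^{f^{-1}(i+1)}\}$, and the two length-two $b$-paths $b_1^vb_2^vb_3^v$ and $b_4^vb_5^vb_6^v$. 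These forced edges decompose $V(G_\mathcal{E})$ into a canonical family of roughly $10dn+2n$ pairwise disjoint paths. Hence any Hamiltonian cycle $H$ of $G_\mathcal{E}$ is the union of this canonical family with a perfect matching $M$ on its $\approx 20dn+4n$ endpoints, where $M$ uses only non-forced edges (the $10$ leftover spine edges and $6$ chords per gadget, the link edges, and the four $b$-attachment edges at ports $5,12,23,24$), subject to the union being a single cycle.

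\textbf{Step 2 (the combinatorial core).} The heart of the proof is to show that such a completion can never be a single cycle, and in fact that every $2$-factor of $G_\mathcal{E}$ has $\Omega(n)$ components, for every $d$-regular $\mathcal{E}$. The engine is a case analysis of one gadget. The endpoints $b_1^v,b_3^v,b_4^v,b_6^v$ of the $b$-paths at $v$ each have no option but to be matched to a port $5$, $23$, $24$, or $12$ of some gadget incident to $v$; and whenever such a $b$-attachment edge is selected at a port, a cascade of forced exclusions and forced chords is triggered inside that gadget (propagating through the chord pattern $\{2,5\},\{6,11\},\{12,17\},\{15,20\},\{21,26\},\{27,30\}$) which is then pushed out through one of the link ports $8,9,14,18,29$ into an adjacent gadget, and so on. The gadget is engineered so that, by following how these $2n$ cascades run and interleave, one can show the completed $2$-factor must fall apart into $\Omega(n)$ cycles; morally, each $b$-group introduces a cut that the cascades are unable to re-seal. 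Carrying out this propagation analysis — and verifying that the gadget behaves exactly as designed — is where essentially all of the work lies, and it is the step I expect to be the main obstacle.

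\textbf{Step 3 (robustness and conclusion).} Finally I would bootstrap "$\Omega(n)$ components" to $\epsilon$-farness. Suppose $F$ is a set of at most $\epsilon d\,|V(G_\mathcal{E})|$ edge modifications and $(V(G_\mathcal{E}),E(G_\mathcal{E})\triangle F)$ has a Hamiltonian cycle $H$. Call $v\in V(\mathcal{E})$ \emph{clean} if $F$ touches no vertex of any gadget incident to $v$, of any of its at most $2d$ linked neighbour gadgets, or of $b_1^v,\dots,b_6^v$. Each edge of $F$ spoils only $O(d)$ base vertices and $|V(G_\mathcal{E})|=(6+31d)n$, so for $\epsilon$ small at least $n-O(d^2)\epsilon n\ge n/2$ base vertices are clean, and at each of them the analysis of Steps 1--2 applies verbatim; hence $H$ restricted to the union of the clean regions breaks into $\Omega(n)$ maximal threads. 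The cycle $H$ can splice consecutive threads only by passing through one of the $O(|F|)$ gadgets met by $F$, and each such gadget can merge at most $O(d)$ threads, so a single cycle forces $|F|\ge c(d)\,n$ for a constant $c(d)>0$. Taking $\epsilon(d):=c(d)/\bigl(2d(6+31d)\bigr)$ — shrunk if necessary so that $\epsilon(d)<1$ and the clean-vertex count still holds — then gives the theorem. This is consistent with $G_\mathcal{E}$ being locally Hamiltonian, since each clean vertex merely forces a local configuration and it is only the global count of threads, invisible in the neighbourhood of any $o(n)$ vertices, that rules out Hamiltonicity.
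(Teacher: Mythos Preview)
Your scaffolding is right: Step~1 (degree-$2$ forcing) and Step~3 (robustness via ``clean'' base vertices) match the paper's proof almost exactly, and the paper also sets $\epsilon$ proportional to $1/\bigl(d^2(6+31d)\bigr)$. But Step~2 is the entire content of the theorem, and you have not done it --- you offer only a heuristic about ``cascades'' and a hoped-for ``$\Omega(n)$ components in every $2$-factor,'' with no mechanism for turning local forcing into a global obstruction.

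The paper's argument for this step is different from what you sketch, and the missing idea is specific. It does \emph{not} count components. Instead, for each base vertex $v$ and each Hamiltonian cycle $C$ of a graph agreeing with $G_{\mathcal E}$ near $v$, it defines
\[
\inSet=\bigl\{e\in E^{-}_{\vec{\mathcal E}}(v):\{a_{12}^e,a_{17}^e\}\in C\bigr\},\qquad
\outSet=\bigl\{e\in E^{+}_{\vec{\mathcal E}}(v):\{a_{12}^e,a_{17}^e\}\in C\ \text{or}\ \{a_{12}^e,b_6^v\}\in C\bigr\},
\]
and proves $|\inSet|=|\outSet|$ by an explicit bijection (Lemma~\ref{lem:inAndOutSetEqualCardinality}): if $e\in\inSet$, the out-of-order traversal of $a_{12}^e,\dots,a_{20}^e$ forces $C$ through the link edge $\{a_{18}^e,a_8^{\tilde e}\}$ into a unique outgoing gadget $\tilde e$, which is then forced into $\outSet$. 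Globally, any directed edge in $\bigcup_v\inSet$ also lies in $\bigcup_v\outSet$; but for every clean $v$ the vertex $b_6^v$ must attach to exactly one $a_{12}^{(v,\cdot)}$, contributing an element to $\outSet$ that is never matched on the in-side. Summing over the clean $v$'s yields a surplus of more than $n/2$ on the out-side, which can only be absorbed by edges with one endpoint in the bad set $B$, forcing $|B|=\Omega(n)$.

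So the engine is a local conservation law (``in${}={}$out'') together with a designed surplus of one per base vertex coming from the asymmetric attachment of $b_6^v$. Your cascade intuition is consistent with this picture, but without isolating this invariant you have no lever to convert the gadget analysis into a global contradiction, and your ``$\Omega(n)$ components'' statement is neither established nor clearly equivalent.
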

To prove Theorem~\ref{thm:farFromHam} we use the two technical lemmas below (Lemma~\ref{lem:orderOfBeginningMiddleAndEndPieceOfP(...)} and Lemma~\ref{lem:inAndOutSetEqualCardinality}). 
They will be applied to graphs $G$ obtained from $G_\mathcal{E}$ by modifying a small fraction of the edges of $G_\mathcal{E}$. 
Therefore they are stated for graphs $G$ which share certain induced subgraphs with $G_\mathcal{E}$. 
The first of the two lemmas (Lemma~\ref{lem:orderOfBeginningMiddleAndEndPieceOfP(...)}) states that if $G$ has a Hamiltonian cycle and a certain induced subgraph, which also appears in $G_\mathcal{E}$, then the Hamiltonian cycle has certain subpaths. 
The proof of Lemma~\ref{lem:orderOfBeginningMiddleAndEndPieceOfP(...)} is illustrated in Figure~\ref{fig:close-up}. We will use the following easy observation in the proof of Lemma~\ref{lem:orderOfBeginningMiddleAndEndPieceOfP(...)}.
\begin{remark}\label{rem:vertexOfDegreeTwo}
	Let $G$ be a graph, $u\in V(G)$ a vertex of degree $2$ and $v,w$ the two neighbours of $u$. Then any cycle $C$ containing the vertex $u$ must contain $(v,u,w)$ as a subpath.
\end{remark}
Recall that subpaths appear either in the path or in the reversed path.

\begin{lemma}\label{lem:orderOfBeginningMiddleAndEndPieceOfP(...)}
	Let $\mathcal{E}$ be any $d$-regular graph and  $G_\mathcal{E}$ as defined in Definition~\ref{def:G_E}. Pick $v\in V(\vec{\mathcal{E}})$ and let $S_v:=\{a_1^e,\dots,a_{31}^e: e\in E(\vec{\mathcal{E}}),e\text{ contains }v\}\cup \{b_1^v,\dots,b_6^v\}$. 
	Let $G$ be a graph with $S_v\subseteq V(G)$. Assume  $G_\mathcal{E}[N_{G_\mathcal{E}}(S_v)]\cong G[N_G(S_v)]$ and $f: S_v\rightarrow S_v$ defined by $f(v)=v$ for $v\in S_v$ is an isomorphism from $G_\mathcal{E}[S_v]$ to $G[S_v]$.
	Then for every Hamiltonian cycle $C$ in $G$ and every edge $e\in V(\vec{\mathcal{E}})$ incident to $v$ the following properties hold.
	\begin{enumerate}
		\item\label{order1} Either $(a_1^e,\dots,a_5^e)$   or $(a_1^e,a_2^e,a_5^e,a_4^e,a_3^e)$ is a   subpath of $C$.
		\item\label{order2} Either $(a_{27}^e,\dots,a_{31}^e)$  or $(a_{29}^e,a_{28}^e,a_{27}^e,a_{30}^e,a_{31}^e)$ is a   subpath of $C$. 
		\item\label{order3} Either $(a_{12}^e,\dots,a_{20}^e)$   or  $(a_{14}^e,a_{13}^e,a_{12}^e,a_{17}^e,a_{16}^e,a_{15}^e,a_{20}^e,a_{19}^e,a_{18}^e)$ is a   subpath of $C$.
		\item\label{order4} If $e\in E_G^{+}(v)$  then either $(a_6^e,\dots,a_{11}^e)$  or $(a_8^e,a_7^e,a_6^e,a_{11}^e,a_{10}^e,a_{9}^e)$ is a   subpath of $C$.
		\item\label{order5} If  $e\in E_G^{-}(v)$ then either $(a_{21}^e,\dots,a_{26}^e)$  or $(a_{23}^e,a_{22}^e,a_{21}^e,a_{26}^e,a_{25}^e,a_{24}^e)$ is a   subpath of $C$. 
	\end{enumerate}
\end{lemma}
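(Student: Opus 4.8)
The key observation is that the gadget $P(v_1,\dots,v_{31})$ together with the link edges is engineered so that, inside the relevant induced subgraph $G[N_G(S_v)]$, many of the vertices $a_i^e$ have degree exactly $2$, and a Hamiltonian cycle is forced through them by Remark~\ref{rem:vertexOfDegreeTwo}. So the plan is: first, for the fixed $v$ and an incident edge $e$, carefully record which vertices among $a_1^e,\dots,a_{31}^e$ have degree $2$ in $G$. Since $G_\mathcal{E}[N_{G_\mathcal{E}}(S_v)]\cong G[N_G(S_v)]$ via the identity on $S_v$, the degree of every vertex in $S_v$ is the same in $G$ as in $G_\mathcal{E}$, and in $G_\mathcal{E}$ the degrees are read off directly from Definition~\ref{def:G_E}: an internal path vertex $a_i^e$ has the two path-neighbours $a_{i-1}^e,a_{i+1}^e$, plus possibly chord endpoints (from the $\{v_j,v_{j+3}\}$ and $\{v_k,v_{k+5}\}$ edges of $P$), plus possibly one link edge. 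One checks that the vertices which are \emph{not} chord/link endpoints — e.g. $a_3^e,a_4^e$ in the first block, $a_{28}^e,a_{30}^e$ in the last block, $a_{13}^e,a_{16}^e,a_{19}^e$ in the middle block, and similarly in blocks $(6..11)$, $(21..26)$ — have degree exactly $2$.

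Second, I would feed these degree-$2$ vertices into Remark~\ref{rem:vertexOfDegreeTwo}: each such vertex forces its two incident edges onto $C$. Collecting these forced edges within one block, say $a_1^e,\dots,a_5^e$ with the chord $\{a_2^e,a_5^e\}$: the degree-$2$ vertices $a_3^e$ and $a_4^e$ force the edges $\{a_2^e,a_3^e\},\{a_3^e,a_4^e\},\{a_4^e,a_5^e\}$ onto $C$, so the subpath $(a_2^e,a_3^e,a_4^e,a_5^e)$ lies on $C$. Now $a_2^e$ has degree $3$ (neighbours $a_1^e,a_3^e,a_5^e$), and since $\{a_2^e,a_3^e\}$ is already on $C$, at most one of $\{a_2^e,a_1^e\}$, $\{a_2^e,a_5^e\}$ is used; but $\{a_4^e,a_5^e\}$ is on $C$ and a Hamiltonian cycle cannot close a triangle $a_2^ea_3^ea_4^ea_5^e$ prematurely (that would be a short cycle missing vertices), hence actually... one does the case split: either $\{a_1^e,a_2^e\}\in C$, giving the subpath $(a_1^e,a_2^e,a_3^e,a_4^e,a_5^e)$ (case \ref{order1} first alternative), or $\{a_1^e,a_2^e\}\notin C$, in which case $a_1^e$ must use its \emph{other} edge — the one to $a_{31}^{f^{-1}(i-1)}$ of the previous gadget — and then $a_5^e$ together with $a_2^e$ forces $\{a_2^e,a_5^e\}\in C$, yielding $(a_1^e,a_2^e,a_5^e,a_4^e,a_3^e)$ exiting via the chord from $a_3^e$ through the link vertex $u_{23}$-type neighbour (second alternative). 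The same purely local argument, block by block, gives items \ref{order2} and \ref{order3}; for items \ref{order4} and \ref{order5} the orientation of $e$ matters only because the link (attached at $a_{23},a_{24}$ of the \emph{tail} gadget and $a_3,a_8,a_9,a_{14}$ and $w_i$'s of the \emph{head} gadget) determines which of blocks $(6..11)$ or $(21..26)$ receives its extra incidences on the $v$-side, but within whichever block is being analysed the degree count and the triangle/chord case split are identical.

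Third, I would assemble these per-block statements into the five claimed alternatives, being careful that the case splits are consistent with the chord structure (each block has exactly one "short" chord creating exactly the binary choice "go straight" vs. "detour through the chord"). I would present this as: fix $e$ and $C$; by Remark~\ref{rem:vertexOfDegreeTwo} the listed forced subpaths lie on $C$; then finitely many case distinctions on which non-forced incident edge at each block-boundary vertex ($a_1^e,a_2^e,a_5^e$; $a_{27}^e,a_{29}^e,a_{31}^e$; $a_{12}^e,a_{14}^e,a_{17}^e,a_{18}^e,a_{20}^e$; etc.) are used by $C$, ruling out the ones that would create a short cycle, leaves exactly the two alternatives stated. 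The main obstacle is bookkeeping: keeping track of exactly which $a_i^e$ are degree $2$ versus degree $3$ (this depends on the precise chord positions $\{2,5\},\{6,11\},\{12,17\},\{15,20\},\{21,26\},\{27,30\}$ in $P$ and on which of the eight link edges touch index set of the current block), and matching the orientation bookkeeping in \ref{order4}/\ref{order5} to which of $e\in E^+(v)$ or $e\in E^-(v)$ makes a given gadget play the "head" versus "tail" role in the link. Everything else is an immediate consequence of Remark~\ref{rem:vertexOfDegreeTwo} plus the observation that a Hamiltonian cycle contains no cycle of length smaller than $|V(G)|$, and the whole argument is local — it never uses global structure of $\mathcal E$ — which is exactly why the hypothesis only posits agreement of $G$ with $G_\mathcal{E}$ on $N_G(S_v)$.
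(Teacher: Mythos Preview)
Your plan works, in spirit, for items~\ref{order1}--\ref{order3}: there the block boundaries carry only degree-$3$ vertices and the interior degree-$2$ vertices force enough edges that a short case split (or the paper's short-cycle contradiction for \ref{order3}) finishes. Two small corrections: in block $(1,\dots,5)$ the degree-$2$ vertices are $a_1^e$ and $a_4^e$, not $a_3^e$ and $a_4^e$ ($a_3^e$ carries $d$ link edges); and since $a_1^e$ has degree $2$, the edge $\{a_1^e,a_2^e\}$ is always on $C$, so the case split is at $a_2^e$, not at $a_1^e$.

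There is, however, a genuine gap for items~\ref{order4} and~\ref{order5}. Your claim that ``within whichever block is being analysed the degree count and the triangle/chord case split are identical'' is false. Take $e\in E_{\vec{\mathcal E}}^+(v)$ and look at block $(a_6^e,\dots,a_{11}^e)$. The chord is $\{a_6^e,a_{11}^e\}$, and $a_7^e,a_{10}^e$ have degree $2$, so $(a_6^e,a_7^e,a_8^e)$ and $(a_9^e,a_{10}^e,a_{11}^e)$ are forced. But $a_8^e$ and $a_9^e$ each have degree $d+2$: they are joined to $a_{18}^{(u,v)}$ (resp.\ $a_{29}^{(u,v)}$) for every $u$ with $(u,v)\in E(\vec{\mathcal E})$. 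Nothing local rules out the configuration in which $\{a_6^e,a_{11}^e\}\notin C$, the path continues as $(a_5^e,a_6^e,a_7^e,a_8^e,a_{18}^{(u,v)})$ on one side and $(a_{29}^{(u',v)},a_9^e,a_{10}^e,a_{11}^e,a_{12}^e)$ on the other, and the block is covered by two disjoint length-$2$ subpaths rather than one length-$5$ subpath. Neither alternative in~\ref{order4} holds in that scenario, and your case split does not exclude it. The situation for~\ref{order5} with $a_{23}^e,a_{24}^e$ is symmetric.

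The paper closes this gap with a counting argument that treats all $2d$ such blocks at once: it lets $S$ be the set of maximal subpaths of $C$ inside $\{a_{21}^e,\dots,a_{26}^e:e\in E^-_{\vec{\mathcal E}}(v)\}\cup\{a_6^{\tilde e},\dots,a_{11}^{\tilde e}:\tilde e\in E^+_{\vec{\mathcal E}}(v)\}$, uses items~\ref{order1}--\ref{order3} (already proved) together with the $b^v$-vertices to bound the number of available entry/exit points by $4d$, hence $|S|\le 2d$; but the $12d$ vertices in question, each lying on a subpath of length $2$ or $5$, force $|S|>2d$ unless every subpath has length $5$. That global count is what forces~\ref{order4} and~\ref{order5}; a purely block-local argument cannot.
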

\begin{figure}
	\centering
	\begin{tikzpicture}[>=stealth']
	\definecolor{C1}{RGB}{1,1,1}
	\definecolor{C2}{RGB}{0,204,204}
	\definecolor{C3}{RGB}{204,0,0}
	\definecolor{C4}{RGB}{0,204,0}
	
	\tikzstyle{ns1}=[line width=1]
	
	\def \heightDist {1.8}
	\def \breadthDist {0.42}
	\def \heightShift {1}
	\def \intensity {50}
	\def \rad {0.35}
	\def \radius {0.5}
	\def \Rad {0.65}
	\def \Radius {1}
	\node[draw,circle,fill=black,inner sep=0pt, minimum width=4pt] (1) at (-9.5*\breadthDist,\heightDist) {};
	\node[draw,circle,fill=black,inner sep=0pt, minimum width=4pt] (2) at (-8.5*\breadthDist,\heightDist) {};
	\node[draw,circle,fill=black,inner sep=0pt, minimum width=4pt] (3) at (-7.5*\breadthDist,\heightDist) {};
	\node[draw,circle,fill=black,inner sep=0pt, minimum width=4pt] (4) at (-6.5*\breadthDist,\heightDist) {};
	\node[draw,circle,fill=black,inner sep=0pt, minimum width=4pt] (5) at (-5.5*\breadthDist,\heightDist) {};
	\node[draw,circle,fill=black,inner sep=0pt, minimum width=4pt] (6) at (-4.5*\breadthDist,\heightDist) {};
	\node[draw,circle,fill=black,inner sep=0pt, minimum width=4pt] (7) at (-3.5*\breadthDist,\heightDist) {};
	\node[draw,circle,fill=black,inner sep=0pt, minimum width=4pt] (8) at (-2.5*\breadthDist,\heightDist) {};
	\node[draw,circle,fill=black,inner sep=0pt, minimum width=4pt] (9) at (-1.5*\breadthDist,\heightDist) {};
	\node[draw,circle,fill=black,inner sep=0pt, minimum width=4pt] (10) at (-0.5*\breadthDist,\heightDist) {};
	\node[draw,circle,fill=black,inner sep=0pt, minimum width=4pt] (11) at (0.5*\breadthDist,\heightDist) {};
	\node[draw,circle,fill=black,inner sep=0pt, minimum width=4pt] (12) at (1.5*\breadthDist,\heightDist) {};
	\node[draw,circle,fill=black,inner sep=0pt, minimum width=4pt] (13) at (2.5*\breadthDist,\heightDist) {};
	\node[draw,circle,fill=black,inner sep=0pt, minimum width=4pt] (14) at (3.5*\breadthDist,\heightDist) {};
	\node[draw,circle,fill=black,inner sep=0pt, minimum width=4pt] (15) at (4.5*\breadthDist,\heightDist) {};
	\node[draw,circle,fill=black,inner sep=0pt, minimum width=4pt] (16) at (5.5*\breadthDist,\heightDist) {};
	\node[draw,circle,fill=black,inner sep=0pt, minimum width=4pt] (17) at (6.5*\breadthDist,\heightDist) {};
	\node[draw,circle,fill=black,inner sep=0pt, minimum width=4pt] (18) at (7.5*\breadthDist,\heightDist) {};
	\node[draw,circle,fill=black,inner sep=0pt, minimum width=4pt] (19) at (8.5*\breadthDist,\heightDist) {};
	\node[draw,circle,fill=black,inner sep=0pt, minimum width=4pt] (20) at (9.5*\breadthDist,\heightDist) {};
	\node[draw,circle,fill=black,inner sep=0pt, minimum width=4pt] (21) at (10.5*\breadthDist,\heightDist) {};
	\node[draw,circle,fill=black,inner sep=0pt, minimum width=4pt] (22) at (11.5*\breadthDist,\heightDist) {};
	\node[draw,circle,fill=black,inner sep=0pt, minimum width=4pt] (23) at (12.5*\breadthDist,\heightDist) {};
	\node[draw,circle,fill=black,inner sep=0pt, minimum width=4pt] (24) at (13.5*\breadthDist,\heightDist) {};
	\node[draw,circle,fill=black,inner sep=0pt, minimum width=4pt] (25) at (14.5*\breadthDist,\heightDist) {};
	\node[draw,circle,fill=black,inner sep=0pt, minimum width=4pt] (26) at (15.5*\breadthDist,\heightDist) {};
	\node[draw,circle,fill=black,inner sep=0pt, minimum width=4pt] (27) at (16.5*\breadthDist,\heightDist) {};
	\node[draw,circle,fill=black,inner sep=0pt, minimum width=4pt] (28) at (17.5*\breadthDist,\heightDist) {};
	\node[draw,circle,fill=black,inner sep=0pt, minimum width=4pt] (29) at (18.5*\breadthDist,\heightDist) {};
	\node[draw,circle,fill=black,inner sep=0pt, minimum width=4pt] (30) at (19.5*\breadthDist,\heightDist) {};
	\node[draw,circle,fill=black,inner sep=0pt, minimum width=4pt] (31) at (20.5*\breadthDist,\heightDist) {};		
	
	\node[draw,circle,fill=black,inner sep=0pt, minimum width=4pt] (15') at (-10.5*\breadthDist,0) {};
	\node[draw,circle,fill=black,inner sep=0pt, minimum width=4pt] (16') at (-9.5*\breadthDist,0) {};
	\node[draw,circle,fill=black,inner sep=0pt, minimum width=4pt] (17') at (-8.5*\breadthDist,0) {};
	\node[draw,circle,fill=black,inner sep=0pt, minimum width=4pt] (18') at (-7.5*\breadthDist,0) {};
	\node[draw,circle,fill=black,inner sep=0pt, minimum width=4pt] (19') at (-6.5*\breadthDist,0) {};
	\node[draw,circle,fill=black,inner sep=0pt, minimum width=4pt] (20') at (-5.5*\breadthDist,0) {};
	\node[draw,circle,fill=black,inner sep=0pt, minimum width=4pt] (21') at (-4.5*\breadthDist,0) {};
	\node[draw,circle,fill=black,inner sep=0pt, minimum width=4pt] (22') at (-3.5*\breadthDist,0) {};
	\node[draw,circle,fill=black,inner sep=0pt, minimum width=4pt] (23') at (-2.5*\breadthDist,0) {};
	\node[draw,circle,fill=black,inner sep=0pt, minimum width=4pt] (24') at (-1.5*\breadthDist,0) {};
	\node[draw,circle,fill=black,inner sep=0pt, minimum width=4pt] (25') at (-0.5*\breadthDist,0) {};
	\node[draw,circle,fill=black,inner sep=0pt, minimum width=4pt] (26') at (0.5*\breadthDist,0) {};
	\node[draw,circle,fill=black,inner sep=0pt, minimum width=4pt] (27') at (1.5*\breadthDist,0) {};
	\node[draw,circle,fill=black,inner sep=0pt, minimum width=4pt] (28') at (2.5*\breadthDist,0) {};
	\node[draw,circle,fill=black,inner sep=0pt, minimum width=4pt] (29') at (3.5*\breadthDist,0) {};
	\node[draw,circle,fill=black,inner sep=0pt, minimum width=4pt] (30') at (4.5*\breadthDist,0) {};
	\node[draw,circle,fill=black,inner sep=0pt, minimum width=4pt] (31') at (4.5*\breadthDist,-\breadthDist) {};
	\node[draw,circle,fill=black,inner sep=0pt, minimum width=4pt] (1') at (-5.5*\breadthDist+0.25*3*\breadthDist,0.75*\heightDist) {};
	\node[draw,circle,fill=black,inner sep=0pt, minimum width=4pt] (2') at (-5.5*\breadthDist+0.5*3*\breadthDist,0.5*\heightDist) {};
	\node[draw,circle,fill=black,inner sep=0pt, minimum width=4pt] (3') at (-5.5*\breadthDist+0.75*3*\breadthDist,0.25*\heightDist) {};
	\node[draw,circle,fill=black,inner sep=0pt, minimum width=4pt] (4') at (-1.5*\breadthDist+0.25*3*\breadthDist,0.25*\heightDist) {};
	\node[draw,circle,fill=black,inner sep=0pt, minimum width=4pt] (5') at (-1.5*\breadthDist+0.5*3*\breadthDist,0.5*\heightDist) {};
	\node[draw,circle,fill=black,inner sep=0pt, minimum width=4pt] (6') at (-1.5*\breadthDist+0.75*3*\breadthDist,0.75*\heightDist) {};
	
	\node[draw,circle,fill=black,inner sep=0pt, minimum width=4pt] (1'') at (6.5*\breadthDist,-\breadthDist) {};
	\node[draw,circle,fill=black,inner sep=0pt, minimum width=4pt] (2'') at (6.5*\breadthDist,0) {};
	\node[draw,circle,fill=black,inner sep=0pt, minimum width=4pt] (3'') at (7.5*\breadthDist,0) {};
	\node[draw,circle,fill=black,inner sep=0pt, minimum width=4pt] (4'') at (8.5*\breadthDist,0) {};
	\node[draw,circle,fill=black,inner sep=0pt, minimum width=4pt] (5'') at (9.5*\breadthDist,0) {};
	\node[draw,circle,fill=black,inner sep=0pt, minimum width=4pt] (6'') at (10.5*\breadthDist,0) {};
	\node[draw,circle,fill=black,inner sep=0pt, minimum width=4pt] (7'') at (11.5*\breadthDist,0) {};
	\node[draw,circle,fill=black,inner sep=0pt, minimum width=4pt] (8'') at (12.5*\breadthDist,0) {};
	\node[draw,circle,fill=black,inner sep=0pt, minimum width=4pt] (9'') at (13.5*\breadthDist,0) {};
	\node[draw,circle,fill=black,inner sep=0pt, minimum width=4pt] (10'') at (14.5*\breadthDist,0) {};
	\node[draw,circle,fill=black,inner sep=0pt, minimum width=4pt] (11'') at (15.5*\breadthDist,0) {};
	\node[draw,circle,fill=black,inner sep=0pt, minimum width=4pt] (12'') at (16.5*\breadthDist,0) {};
	\node[draw,circle,fill=black,inner sep=0pt, minimum width=4pt] (13'') at (17.5*\breadthDist,0) {};
	\node[draw,circle,fill=black,inner sep=0pt, minimum width=4pt] (14'') at (18.5*\breadthDist,0) {};
	\node[draw,circle,fill=black,inner sep=0pt, minimum width=4pt] (15'') at (19.5*\breadthDist,0) {};
	\node[draw,circle,fill=black,inner sep=0pt, minimum width=4pt] (16'') at (20.5*\breadthDist,0) {};
	\node[draw,circle,fill=black,inner sep=0pt, minimum width=4pt] (17'') at (21.5*\breadthDist,0) {};
	\node[draw,circle,fill=black,inner sep=0pt, minimum width=4pt] (1''') at (9.5*\breadthDist+0.25*3*\breadthDist,0.25*\heightDist) {};
	\node[draw,circle,fill=black,inner sep=0pt, minimum width=4pt] (2''') at (9.5*\breadthDist+0.5*3*\breadthDist,0.5*\heightDist) {};
	\node[draw,circle,fill=black,inner sep=0pt, minimum width=4pt] (3''') at (9.5*\breadthDist+0.75*3*\breadthDist,0.75*\heightDist) {};
	\node[draw,circle,fill=black,inner sep=0pt, minimum width=4pt] (4''') at (13.5*\breadthDist+0.25*3*\breadthDist,0.75*\heightDist) {};
	\node[draw,circle,fill=black,inner sep=0pt, minimum width=4pt] (5''') at (13.5*\breadthDist+0.5*3*\breadthDist,0.5*\heightDist) {};
	\node[draw,circle,fill=black,inner sep=0pt, minimum width=4pt] (6''') at (13.5*\breadthDist+0.75*3*\breadthDist,0.25*\heightDist) {};
	
	\path[C1,ns1,dashed] (-11*\breadthDist,\heightDist) edge (1);
	\path[C1,ns1]          (1)  edge  (2);
	\path[C1,ns1]          (2)  edge  (3);
	\path[C1,ns1]          (3)  edge  (4);
	\path[C1,ns1]          (4)  edge  (5);
	\path[C1,ns1]          (5)  edge  (6);
	\path[C1,ns1]          (6)  edge  (7);
	\path[C1,ns1]          (7)  edge  (8);
	\path[C1,ns1]          (8)  edge  (9);
	\path[C1,ns1]          (9)  edge  (10);
	\path[C1,ns1]          (10)  edge  (11);
	\path[C1,ns1]          (11)  edge  (12);
	\path[C1,ns1]          (12)  edge  (13);
	\path[C1,ns1]          (13)  edge  (14);
	\path[C1,ns1]          (14)  edge  (15);
	\path[C1,ns1]          (15)  edge  (16);
	\path[C1,ns1]          (16)  edge  (17);
	\path[C1,ns1]          (17)  edge  (18);
	\path[C1,ns1]          (18)  edge  (19);
	\path[C1,ns1]          (19)  edge  (20);
	\path[C1,ns1]          (20)  edge  (21);
	\path[C1,ns1]          (21)  edge  (22);
	\path[C1,ns1]          (22)  edge  (23);
	\path[C1,ns1]          (23)  edge  (24);
	\path[C1,ns1]          (24)  edge  (25);
	\path[C1,ns1]          (25)  edge  (26);
	\path[C1,ns1]          (26)  edge  (27);
	\path[C1,ns1]          (27)  edge  (28);
	\path[C1,ns1]          (28)  edge  (29);
	\path[C1,ns1]          (29)  edge  (30);
	\path[C1,ns1]          (30)  edge  (31);
	\path[C1,ns1,dashed]	(31) edge (22*\breadthDist,\heightDist);
	\path[C1,ns1,dashed]	(-11.3*\breadthDist,0) edge (15');
	\path[C1,ns1]          (15')  edge  (16');
	\path[C1,ns1]          (16')  edge  (17');
	\path[C1,ns1]          (17')  edge  (18');
	\path[C1,ns1]          (18')  edge  (19');
	\path[C1,ns1]          (19')  edge  (20');
	\path[C1,ns1]          (20')  edge  (21');
	\path[C1,ns1]          (21')  edge  (22');
	\path[C1,ns1]          (22')  edge  (23');
	\path[C1,ns1]          (23')  edge  (24');
	\path[C1,ns1]          (24')  edge  (25');
	\path[C1,ns1]          (25')  edge  (26');
	\path[C1,ns1]          (26')  edge  (27');
	\path[C1,ns1]          (27')  edge  (28');
	\path[C1,ns1]          (28')  edge  (29');
	\path[C1,ns1]          (29')  edge  (30');
	\path[C1,ns1]          (30')  edge  (31');
	\path[C1,ns1,dashed]	(4.5*\breadthDist,-\breadthDist-0.5) edge (31');
	\path[C1,ns1,dashed]	(6.5*\breadthDist,-\breadthDist-0.5) edge (1'');
	\path[C1,ns1]          (1'')  edge  (2'');
	\path[C1,ns1]          (2'')  edge  (3'');
	\path[C1,ns1]          (3'')  edge  (4'');
	\path[C1,ns1]          (4'')  edge  (5'');
	\path[C1,ns1]          (5'')  edge  (6'');
	\path[C1,ns1]          (6'')  edge  (7'');
	\path[C1,ns1]          (7'')  edge  (8'');
	\path[C1,ns1]          (8'')  edge  (9'');
	\path[C1,ns1]          (9'')  edge  (10'');
	\path[C1,ns1]          (10'')  edge  (11'');
	\path[C1,ns1]          (11'')  edge  (12'');
	\path[C1,ns1]          (12'')  edge  (13'');
	\path[C1,ns1]          (13'')  edge  (14'');
	\path[C1,ns1]          (14'')  edge  (15'');
	\path[C1,ns1]          (15'')  edge  (16'');
	\path[C1,ns1]          (16'')  edge  (17'');
	\path[C1,ns1,dashed]	(17'') edge (22.3*\breadthDist,0);
	\path[C1,ns1]          (3)  edge  (23');
	\path[C1,ns1]          (18')  edge  (8);
	\path[C1,ns1]          (9)  edge  (29');
	\path[C1,ns1]          (24')  edge  (14);
	\path[C1,ns1]          (5)  edge  (1');
	\path[C1,ns1]          (1')  edge  (2');
	\path[C1,ns1]          (2')  edge  (3');
	\path[C1,ns1]          (3')  edge  (23');
	\path[C1,ns1]          (24')  edge  (4');
	\path[C1,ns1]          (4')  edge  (5');
	\path[C1,ns1]          (5')  edge  (6');
	\path[C1,ns1]          (6')  edge  (12);
	\path[C1,ns1]          (3'')  edge  (23);
	\path[C1,ns1]          (18)  edge  (8'');
	\path[C1,ns1]          (9'')  edge  (29);
	\path[C1,ns1]          (24)  edge  (14'');
	\path[C1,ns1]          (5'')  edge  (1''');
	\path[C1,ns1]          (1''')  edge  (2''');
	\path[C1,ns1]          (2''')  edge  (3''');
	\path[C1,ns1]          (3''')  edge  (23);
	\path[C1,ns1]          (24)  edge  (4''');
	\path[C1,ns1]          (4''')  edge  (5''');
	\path[C1,ns1]          (5''')  edge  (6''');
	\path[C1,ns1]          (6''')  edge  (12'');
	\path[C1,ns1]          (2)  edge   [bend left] (5);
	\path[C1,ns1]          (6)  edge   [bend left] (11);
	\path[C1,ns1]          (12)  edge   [bend left] (17);
	\path[C1,ns1]          (15)  edge   [bend left] (20);
	\path[C1,ns1]          (21)  edge   [bend left] (26);
	\path[C1,ns1]          (27)  edge   [bend left] (30);
	\path[C1,ns1]          (-11*\breadthDist,-0.3*\heightShift)  edge   [bend right] (17');
	\path[C1,ns1]          (15')  edge   [bend right] (20');
	\path[C1,ns1]          (21')  edge   [bend right] (26');
	\path[C1,ns1]          (27')  edge   [bend right] (30');
	\path[C1,ns1]          (2'')  edge   [bend right] (5'');
	\path[C1,ns1]          (6'')  edge   [bend right] (11'');
	\path[C1,ns1]          (12'')  edge   [bend right] (17'');
	\path[C1,ns1]          (15'')  edge   [bend right] (22*\breadthDist,-0.3*\heightShift);
	\node[circle,inner sep=0pt] [shift={(-7.5*\breadthDist,\heightDist)}](3n1) at (317:\Rad) {};
	\node[circle,inner sep=0pt] [shift={(-7.5*\breadthDist,\heightDist)}](3nn1) at (317:\Radius) {};
	\path[C1,ns1]          (3)  edge  (3n1);
	\path[C1,ns1,densely dotted]          (3n1)  edge  (3nn1);
	\node[circle,inner sep=0pt] [shift={(-7.5*\breadthDist,\heightDist)}](3n2) at (312:\Rad) {};
	\node[circle,inner sep=0pt] [shift={(-7.5*\breadthDist,\heightDist)}](3nn2) at (312:\Radius) {};
	\path[C1,ns1]          (3)  edge  (3n2);
	\path[C1,ns1,densely dotted]          (3n2)  edge  (3nn2);
	\node[circle,inner sep=0pt] [shift={(-7.5*\breadthDist,\heightDist)}](3n3) at (307:\Rad) {};
	\node[circle,inner sep=0pt] [shift={(-7.5*\breadthDist,\heightDist)}](3nn3) at (307:\Radius) {};
	\path[C1,ns1]          (3)  edge  (3n3);
	\path[C1,ns1,densely dotted]          (3n3)  edge  (3nn3);
	\node[circle,inner sep=0pt] [shift={(-1.5*\breadthDist,\heightDist)}](9n1) at (317:\Rad) {};
	\node[circle,inner sep=0pt] [shift={(-1.5*\breadthDist,\heightDist)}](9nn1) at (317:\Radius) {};
	\path[C1,ns1]          (9)  edge  (9n1);
	\path[C1,ns1,densely dotted]          (9n1)  edge  (9nn1);
	\node[circle,inner sep=0pt] [shift={(-1.5*\breadthDist,\heightDist)}](9n2) at (312:\Rad) {};
	\node[circle,inner sep=0pt] [shift={(-1.5*\breadthDist,\heightDist)}](9nn2) at (312:\Radius) {};
	\path[C1,ns1]          (9)  edge  (9n2);
	\path[C1,ns1,densely dotted]          (9n2)  edge  (9nn2);
	\node[circle,inner sep=0pt] [shift={(-1.5*\breadthDist,\heightDist)}](9n3) at (307:\Rad) {};
	\node[circle,inner sep=0pt] [shift={(-1.5*\breadthDist,\heightDist)}](9nn3) at (307:\Radius) {};
	\path[C1,ns1]          (9)  edge  (9n3);
	\path[C1,ns1,densely dotted]          (9n3)  edge  (9nn3);
	\node[circle,inner sep=0pt] [shift={(7.5*\breadthDist,\heightDist)}](18n1) at (317:\Rad) {};
	\node[circle,inner sep=0pt] [shift={(7.5*\breadthDist,\heightDist)}](18nn1) at (317:\Radius) {};
	\path[C1,ns1]          (18)  edge  (18n1);
	\path[C1,ns1,densely dotted]          (18n1)  edge  (18nn1);
	\node[circle,inner sep=0pt] [shift={(7.5*\breadthDist,\heightDist)}](18n2) at (312:\Rad) {};
	\node[circle,inner sep=0pt] [shift={(7.5*\breadthDist,\heightDist)}](18nn2) at (312:\Radius) {};
	\path[C1,ns1]          (18)  edge  (18n2);
	\path[C1,ns1,densely dotted]          (18n2)  edge  (18nn2);
	\node[circle,inner sep=0pt] [shift={(7.5*\breadthDist,\heightDist)}](18n3) at (307:\Rad) {};
	\node[circle,inner sep=0pt] [shift={(7.5*\breadthDist,\heightDist)}](18nn3) at (307:\Radius) {};
	\path[C1,ns1]          (18)  edge  (18n3);
	\path[C1,ns1,densely dotted]          (18n3)  edge  (18nn3);
	\node[circle,inner sep=0pt] [shift={(-2.5*\breadthDist,\heightDist)}](8n1) at (233:\Rad) {};
	\node[circle,inner sep=0pt] [shift={(-2.5*\breadthDist,\heightDist)}](8nn1) at (233:\Radius) {};
	\path[C1,ns1]          (8)  edge  (8n1);
	\path[C1,ns1,densely dotted]          (8n1)  edge  (8nn1);
	\node[circle,inner sep=0pt] [shift={(-2.5*\breadthDist,\heightDist)}](8n2) at (228:\Rad) {};
	\node[circle,inner sep=0pt] [shift={(-2.5*\breadthDist,\heightDist)}](8nn2) at (228:\Radius) {};
	\path[C1,ns1]          (8)  edge  (8n2);
	\path[C1,ns1,densely dotted]          (8n2)  edge  (8nn2);
	\node[circle,inner sep=0pt] [shift={(-2.5*\breadthDist,\heightDist)}](8n3) at (223:\Rad) {};
	\node[circle,inner sep=0pt] [shift={(-2.5*\breadthDist,\heightDist)}](8nn3) at (223:\Radius) {};
	\path[C1,ns1]          (8)  edge  (8n3);
	\path[C1,ns1,densely dotted]          (8n3)  edge  (8nn3);
	\node[circle,inner sep=0pt] [shift={(3.5*\breadthDist,\heightDist)}](14n1) at (233:\Rad) {};
	\node[circle,inner sep=0pt] [shift={(3.5*\breadthDist,\heightDist)}](14nn1) at (233:\Radius) {};
	\path[C1,ns1]          (14)  edge  (14n1);
	\path[C1,ns1,densely dotted]          (14n1)  edge  (14nn1);
	\node[circle,inner sep=0pt] [shift={(3.5*\breadthDist,\heightDist)}](14n2) at (228:\Rad) {};
	\node[circle,inner sep=0pt] [shift={(3.5*\breadthDist,\heightDist)}](14nn2) at (228:\Radius) {};
	\path[C1,ns1]          (14)  edge  (14n2);
	\path[C1,ns1,densely dotted]          (14n2)  edge  (14nn2);
	\node[circle,inner sep=0pt] [shift={(3.5*\breadthDist,\heightDist)}](14n3) at (223:\Rad) {};
	\node[circle,inner sep=0pt] [shift={(3.5*\breadthDist,\heightDist)}](14nn3) at (223:\Radius) {};
	\path[C1,ns1]          (14)  edge  (14n3);
	\path[C1,ns1,densely dotted]          (14n3)  edge  (14nn3);
	\node[circle,inner sep=0pt] [shift={(18.5*\breadthDist,\heightDist)}](29n1) at (233:\Rad) {};
	\node[circle,inner sep=0pt] [shift={(18.5*\breadthDist,\heightDist)}](29nn1) at (233:\Radius) {};
	\path[C1,ns1]          (29)  edge  (29n1);
	\path[C1,ns1,densely dotted]          (29n1)  edge  (29nn1);
	\node[circle,inner sep=0pt] [shift={(18.5*\breadthDist,\heightDist)}](29n2) at (228:\Rad) {};
	\node[circle,inner sep=0pt] [shift={(18.5*\breadthDist,\heightDist)}](29nn2) at (228:\Radius) {};
	\path[C1,ns1]          (29)  edge  (29n2);
	\path[C1,ns1,densely dotted]          (29n2)  edge  (29nn2);
	\node[circle,inner sep=0pt] [shift={(18.5*\breadthDist,\heightDist)}](29n3) at (223:\Rad) {};
	\node[circle,inner sep=0pt] [shift={(18.5*\breadthDist,\heightDist)}](29nn3) at (223:\Radius) {};
	\path[C1,ns1]          (29)  edge  (29n3);
	\path[C1,ns1,densely dotted]          (29n3)  edge  (29nn3);
	\node[circle,inner sep=0pt] [shift={(12.5*\breadthDist,\heightDist)}](23n1) at (203:\Rad) {};
	\node[circle,inner sep=0pt] [shift={(12.5*\breadthDist,\heightDist)}](23nn1) at (203:\Radius) {};
	\path[C1,ns1]          (23)  edge  (23n1);
	\path[C1,ns1,densely dotted]          (23n1)  edge  (23nn1);
	\node[circle,inner sep=0pt] [shift={(12.5*\breadthDist,\heightDist)}](23n2) at (208:\Rad) {};
	\node[circle,inner sep=0pt] [shift={(12.5*\breadthDist,\heightDist)}](23nn2) at (208:\Radius) {};
	\path[C1,ns1]          (23)  edge  (23n2);
	\path[C1,ns1,densely dotted]          (23n2)  edge  (23nn2);
	\node[circle,inner sep=0pt] [shift={(12.5*\breadthDist,\heightDist)}](23n3) at (213:\Rad) {};
	\node[circle,inner sep=0pt] [shift={(12.5*\breadthDist,\heightDist)}](23nn3) at (213:\Radius) {};
	\path[C1,ns1]          (23)  edge  (23n3);
	\path[C1,ns1,densely dotted]          (23n3)  edge  (23nn3);
	\node[circle,inner sep=0pt] [shift={(13.5*\breadthDist,\heightDist)}](24n1) at (337:\Rad) {};
	\node[circle,inner sep=0pt] [shift={(13.5*\breadthDist,\heightDist)}](24nn1) at (337:\Radius) {};
	\path[C1,ns1]          (24)  edge  (24n1);
	\path[C1,ns1,densely dotted]          (24n1)  edge  (24nn1);
	\node[circle,inner sep=0pt] [shift={(13.5*\breadthDist,\heightDist)}](24n2) at (332:\Rad) {};
	\node[circle,inner sep=0pt] [shift={(13.5*\breadthDist,\heightDist)}](24nn2) at (332:\Radius) {};
	\path[C1,ns1]          (24)  edge  (24n2);
	\path[C1,ns1,densely dotted]          (24n2)  edge  (24nn2);
	\node[circle,inner sep=0pt] [shift={(13.5*\breadthDist,\heightDist)}](24n3) at (327:\Rad) {};
	\node[circle,inner sep=0pt] [shift={(13.5*\breadthDist,\heightDist)}](24nn3) at (327:\Radius) {};
	\path[C1,ns1]          (24)  edge  (24n3);
	\path[C1,ns1,densely dotted]          (24n3)  edge  (24nn3);
	\node[circle,inner sep=0pt] [shift={(-5.5*\breadthDist+0.25*3*\breadthDist,0.75*\heightDist)}](1'n1) at (98:\rad) {};
	\node[circle,inner sep=0pt] [shift={(-5.5*\breadthDist+0.25*3*\breadthDist,0.75*\heightDist)}](1'nn1) at (98:\radius) {};
	\path[C1,ns1]          (1')  edge  (1'n1);
	\path[C1,ns1,densely dotted]          (1'n1)  edge  (1'nn1);
	\node[circle,inner sep=0pt] [shift={(-5.5*\breadthDist+0.25*3*\breadthDist,0.75*\heightDist)}](1'n2) at (108:\rad) {};
	\node[circle,inner sep=0pt] [shift={(-5.5*\breadthDist+0.25*3*\breadthDist,0.75*\heightDist)}](1'nn2) at (108:\radius) {};
	\path[C1,ns1]          (1')  edge  (1'n2);
	\path[C1,ns1,densely dotted]          (1'n2)  edge  (1'nn2);
	\node[circle,inner sep=0pt] [shift={(-5.5*\breadthDist+0.25*3*\breadthDist,0.75*\heightDist)}](1'n3) at (118:\rad) {};
	\node[circle,inner sep=0pt] [shift={(-5.5*\breadthDist+0.25*3*\breadthDist,0.75*\heightDist)}](1'nn3) at (118:\radius) {};
	\path[C1,ns1]          (1')  edge  (1'n3);
	\path[C1,ns1,densely dotted]          (1'n3)  edge  (1'nn3);
	\node[circle,inner sep=0pt] [shift={(-1.5*\breadthDist+0.75*3*\breadthDist,0.75*\heightDist)}](6'n1) at (82:\rad) {};
	\node[circle,inner sep=0pt] [shift={(-1.5*\breadthDist+0.75*3*\breadthDist,0.75*\heightDist)}](6'nn1) at (82:\radius) {};
	\path[C1,ns1]          (6')  edge  (6'n1);
	\path[C1,ns1,densely dotted]          (6'n1)  edge  (6'nn1);
	\node[circle,inner sep=0pt] [shift={(-1.5*\breadthDist+0.75*3*\breadthDist,0.75*\heightDist)}](6'n2) at (72:\rad) {};
	\node[circle,inner sep=0pt] [shift={(-1.5*\breadthDist+0.75*3*\breadthDist,0.75*\heightDist)}](6'nn2) at (72:\radius) {};
	\path[C1,ns1]          (6')  edge  (6'n2);
	\path[C1,ns1,densely dotted]          (6'n2)  edge  (6'nn2);
	\node[circle,inner sep=0pt] [shift={(-1.5*\breadthDist+0.75*3*\breadthDist,0.75*\heightDist)}](6'n3) at (62:\rad) {};
	\node[circle,inner sep=0pt] [shift={(-1.5*\breadthDist+0.75*3*\breadthDist,0.75*\heightDist)}](6'nn3) at (62:\radius) {};
	\path[C1,ns1]          (6')  edge  (6'n3);
	\path[C1,ns1,densely dotted]          (6'n3)  edge  (6'nn3);
	\node[circle,inner sep=0pt] [shift={(9.5*\breadthDist+0.75*3*\breadthDist,0.75*\heightDist)}](3'''n1) at (24:\rad) {};
	\node[circle,inner sep=0pt] [shift={(9.5*\breadthDist+0.75*3*\breadthDist,0.75*\heightDist)}](3'''nn1) at (24:\radius) {};
	\path[C1,ns1]          (3''')  edge  (3'''n1);
	\path[C1,ns1,densely dotted]          (3'''n1)  edge  (3'''nn1);
	\node[circle,inner sep=0pt] [shift={(9.5*\breadthDist+0.75*3*\breadthDist,0.75*\heightDist)}](3'''n2) at (34:\rad) {};
	\node[circle,inner sep=0pt] [shift={(9.5*\breadthDist+0.75*3*\breadthDist,0.75*\heightDist)}](3'''nn2) at (34:\radius) {};
	\path[C1,ns1]          (3''')  edge  (3'''n2);
	\path[C1,ns1,densely dotted]          (3'''n2)  edge  (3'''nn2);
	\node[circle,inner sep=0pt] [shift={(9.5*\breadthDist+0.75*3*\breadthDist,0.75*\heightDist)}](3'''n3) at (44:\rad) {};
	\node[circle,inner sep=0pt] [shift={(9.5*\breadthDist+0.75*3*\breadthDist,0.75*\heightDist)}](3'''nn3) at (44:\radius) {};
	\path[C1,ns1]          (3''')  edge  (3'''n3);
	\path[C1,ns1,densely dotted]          (3'''n3)  edge  (3'''nn3);
	\node[circle,inner sep=0pt] [shift={(13.5*\breadthDist+0.25*3*\breadthDist,0.75*\heightDist)}](4'''n1) at (156:\rad) {};
	\node[circle,inner sep=0pt] [shift={(13.5*\breadthDist+0.25*3*\breadthDist,0.75*\heightDist)}](4'''nn1) at (156:\radius) {};
	\path[C1,ns1]          (4''')  edge  (4'''n1);
	\path[C1,ns1,densely dotted]          (4'''n1)  edge  (4'''nn1);
	\node[circle,inner sep=0pt] [shift={(13.5*\breadthDist+0.25*3*\breadthDist,0.75*\heightDist)}](4'''n2) at (146:\rad) {};
	\node[circle,inner sep=0pt] [shift={(13.5*\breadthDist+0.25*3*\breadthDist,0.75*\heightDist)}](4'''nn2) at (146:\radius) {};
	\path[C1,ns1]          (4''')  edge  (4'''n2);
	\path[C1,ns1,densely dotted]          (4'''n2)  edge  (4'''nn2);
	\node[circle,inner sep=0pt] [shift={(13.5*\breadthDist+0.25*3*\breadthDist,0.75*\heightDist)}](4'''n3) at (136:\rad) {};
	\node[circle,inner sep=0pt] [shift={(13.5*\breadthDist+0.25*3*\breadthDist,0.75*\heightDist)}](4'''nn3) at (136:\radius) {};
	\path[C1,ns1]          (4''')  edge  (4'''n3);
	\path[C1,ns1,densely dotted]          (4'''n3)  edge  (4'''nn3);
	\node[circle,inner sep=0pt] [shift={(-7.5*\breadthDist,0)}](18'n1) at (43:\Rad) {};
	\node[circle,inner sep=0pt] [shift={(-7.5*\breadthDist,0)}](18'nn1) at (43:\Radius) {};
	\path[C1,ns1]          (18')  edge  (18'n1);
	\path[C1,ns1,densely dotted]          (18'n1)  edge  (18'nn1);
	\node[circle,inner sep=0pt] [shift={(-7.5*\breadthDist,0)}](18'n2) at (48:\Rad) {};
	\node[circle,inner sep=0pt] [shift={(-7.5*\breadthDist,0)}](18'nn2) at (48:\Radius) {};
	\path[C1,ns1]          (18')  edge  (18'n2);
	\path[C1,ns1,densely dotted]          (18'n2)  edge  (18'nn2);
	\node[circle,inner sep=0pt] [shift={(-7.5*\breadthDist,0)}](18'n3) at (53:\Rad) {};
	\node[circle,inner sep=0pt] [shift={(-7.5*\breadthDist,0)}](18'nn3) at (53:\Radius) {};
	\path[C1,ns1]          (18')  edge  (18'n3);
	\path[C1,ns1,densely dotted]          (18'n3)  edge  (18'nn3);
	\node[circle,inner sep=0pt] [shift={(-2.5*\breadthDist,0)}](23'n1) at (157:\Rad) {};
	\node[circle,inner sep=0pt] [shift={(-2.5*\breadthDist,0)}](23'nn1) at (157:\Radius) {};
	\path[C1,ns1]          (23')  edge  (23'n1);
	\path[C1,ns1,densely dotted]          (23'n1)  edge  (23'nn1);
	\node[circle,inner sep=0pt] [shift={(-2.5*\breadthDist,0)}](23'n2) at (152:\Rad) {};
	\node[circle,inner sep=0pt] [shift={(-2.5*\breadthDist,0)}](23'nn2) at (152:\Radius) {};
	\path[C1,ns1]          (23')  edge  (23'n2);
	\path[C1,ns1,densely dotted]          (23'n2)  edge  (23'nn2);
	\node[circle,inner sep=0pt] [shift={(-2.5*\breadthDist,0)}](23'n3) at (147:\Rad) {};
	\node[circle,inner sep=0pt] [shift={(-2.5*\breadthDist,0)}](23'nn3) at (147:\Radius) {};
	\path[C1,ns1]          (23')  edge  (23'n3);
	\path[C1,ns1,densely dotted]          (23'n3)  edge  (23'nn3);
	\node[circle,inner sep=0pt] [shift={(-1.5*\breadthDist,0)}](24'n1) at (23:\Rad) {};
	\node[circle,inner sep=0pt] [shift={(-1.5*\breadthDist,0)}](24'nn1) at (23:\Radius) {};
	\path[C1,ns1]          (24')  edge  (24'n1);
	\path[C1,ns1,densely dotted]          (24'n1)  edge  (24'nn1);
	\node[circle,inner sep=0pt] [shift={(-1.5*\breadthDist,0)}](24'n2) at (28:\Rad) {};
	\node[circle,inner sep=0pt] [shift={(-1.5*\breadthDist,0)}](24'nn2) at (28:\Radius) {};
	\path[C1,ns1]          (24')  edge  (24'n2);
	\path[C1,ns1,densely dotted]          (24'n2)  edge  (24'nn2);
	\node[circle,inner sep=0pt] [shift={(-1.5*\breadthDist,0)}](24'n3) at (33:\Rad) {};
	\node[circle,inner sep=0pt] [shift={(-1.5*\breadthDist,0)}](24'nn3) at (33:\Radius) {};
	\path[C1,ns1]          (24')  edge  (24'n3);
	\path[C1,ns1,densely dotted]          (24'n3)  edge  (24'nn3);
	\node[circle,inner sep=0pt] [shift={(3.5*\breadthDist,0)}](29'n1) at (137:\Rad) {};
	\node[circle,inner sep=0pt] [shift={(3.5*\breadthDist,0)}](29'nn1) at (137:\Radius) {};
	\path[C1,ns1]          (29')  edge  (29'n1);
	\path[C1,ns1,densely dotted]          (29'n1)  edge  (29'nn1);
	\node[circle,inner sep=0pt] [shift={(3.5*\breadthDist,0)}](29'n2) at (132:\Rad) {};
	\node[circle,inner sep=0pt] [shift={(3.5*\breadthDist,0)}](29'nn2) at (132:\Radius) {};
	\path[C1,ns1]          (29')  edge  (29'n2);
	\path[C1,ns1,densely dotted]          (29'n2)  edge  (29'nn2);
	\node[circle,inner sep=0pt] [shift={(3.5*\breadthDist,0)}](29'n3) at (127:\Rad) {};
	\node[circle,inner sep=0pt] [shift={(3.5*\breadthDist,0)}](29'nn3) at (127:\Radius) {};
	\path[C1,ns1]          (29')  edge  (29'n3);
	\path[C1,ns1,densely dotted]          (29'n3)  edge  (29'nn3);
	\node[circle,inner sep=0pt] [shift={(7.5*\breadthDist,0)}](3''n1) at (43:\Rad) {};
	\node[circle,inner sep=0pt] [shift={(7.5*\breadthDist,0)}](3''nn1) at (43:\Radius) {};
	\path[C1,ns1]          (3'')  edge  (3''n1);
	\path[C1,ns1,densely dotted]          (3''n1)  edge  (3''nn1);
	\node[circle,inner sep=0pt] [shift={(7.5*\breadthDist,0)}](3''n2) at (48:\Rad) {};
	\node[circle,inner sep=0pt] [shift={(7.5*\breadthDist,0)}](3''nn2) at (48:\Radius) {};
	\path[C1,ns1]          (3'')  edge  (3''n2);
	\path[C1,ns1,densely dotted]          (3''n2)  edge  (3''nn2);
	\node[circle,inner sep=0pt] [shift={(7.5*\breadthDist,0)}](3''n3) at (53:\Rad) {};
	\node[circle,inner sep=0pt] [shift={(7.5*\breadthDist,0)}](3''nn3) at (53:\Radius) {};
	\path[C1,ns1]          (3'')  edge  (3''n3);
	\path[C1,ns1,densely dotted]          (3''n3)  edge  (3''nn3);
	\node[circle,inner sep=0pt] [shift={(12.5*\breadthDist,0)}](8''n1) at (137:\Rad) {};
	\node[circle,inner sep=0pt] [shift={(12.5*\breadthDist,0)}](8''nn1) at (137:\Radius) {};
	\path[C1,ns1]          (8'')  edge  (8''n1);
	\path[C1,ns1,densely dotted]          (8''n1)  edge  (8''nn1);
	\node[circle,inner sep=0pt] [shift={(12.5*\breadthDist,0)}](8''n2) at (132:\Rad) {};
	\node[circle,inner sep=0pt] [shift={(12.5*\breadthDist,0)}](8''nn2) at (132:\Radius) {};
	\path[C1,ns1]          (8'')  edge  (8''n2);
	\path[C1,ns1,densely dotted]          (8''n2)  edge  (8''nn2);
	\node[circle,inner sep=0pt] [shift={(12.5*\breadthDist,0)}](8''n3) at (127:\Rad) {};
	\node[circle,inner sep=0pt] [shift={(12.5*\breadthDist,0)}](8''nn3) at (127:\Radius) {};
	\path[C1,ns1]          (8'')  edge  (8''n3);
	\path[C1,ns1,densely dotted]          (8''n3)  edge  (8''nn3);
	\node[circle,inner sep=0pt] [shift={(13.5*\breadthDist,0)}](9''n1) at (43:\Rad) {};
	\node[circle,inner sep=0pt] [shift={(13.5*\breadthDist,0)}](9''nn1) at (43:\Radius) {};
	\path[C1,ns1]          (9'')  edge  (9''n1);
	\path[C1,ns1,densely dotted]          (9''n1)  edge  (9''nn1);
	\node[circle,inner sep=0pt] [shift={(13.5*\breadthDist,0)}](9''n2) at (48:\Rad) {};
	\node[circle,inner sep=0pt] [shift={(13.5*\breadthDist,0)}](9''nn2) at (48:\Radius) {};
	\path[C1,ns1]          (9'')  edge  (9''n2);
	\path[C1,ns1,densely dotted]          (9''n2)  edge  (9''nn2);
	\node[circle,inner sep=0pt] [shift={(13.5*\breadthDist,0)}](9''n3) at (53:\Rad) {};
	\node[circle,inner sep=0pt] [shift={(13.5*\breadthDist,0)}](9''nn3) at (53:\Radius) {};
	\path[C1,ns1]          (9'')  edge  (9''n3);
	\path[C1,ns1,densely dotted]          (9''n3)  edge  (9''nn3);
	\node[circle,inner sep=0pt] [shift={(18.5*\breadthDist,0)}](14''n1) at (137:\Rad) {};
	\node[circle,inner sep=0pt] [shift={(18.5*\breadthDist,0)}](14''nn1) at (137:\Radius) {};
	\path[C1,ns1]          (14'')  edge  (14''n1);
	\path[C1,ns1,densely dotted]          (14''n1)  edge  (14''nn1);
	\node[circle,inner sep=0pt] [shift={(18.5*\breadthDist,0)}](14''n2) at (132:\Rad) {};
	\node[circle,inner sep=0pt] [shift={(18.5*\breadthDist,0)}](14''nn2) at (132:\Radius) {};
	\path[C1,ns1]          (14'')  edge  (14''n2);
	\path[C1,ns1,densely dotted]          (14''n2)  edge  (14''nn2);
	\node[circle,inner sep=0pt] [shift={(18.5*\breadthDist,0)}](14''n3) at (127:\Rad) {};
	\node[circle,inner sep=0pt] [shift={(18.5*\breadthDist,0)}](14''nn3) at (127:\Radius) {};
	\path[C1,ns1]          (14'')  edge  (14''n3);
	\path[C1,ns1,densely dotted]          (14''n3)  edge  (14''nn3);
	\node[circle,inner sep=0pt] [shift={(-5.5*\breadthDist+0.75*3*\breadthDist,0.25*\heightDist)}](3'n1) at (336:\rad) {};
	\node[circle,inner sep=0pt] [shift={(-5.5*\breadthDist+0.75*3*\breadthDist,0.25*\heightDist)}](3'nn1) at (336:\radius) {};
	\path[C1,ns1]          (3')  edge  (3'n1);
	\path[C1,ns1,densely dotted]          (3'n1)  edge  (3'nn1);
	\node[circle,inner sep=0pt] [shift={(-5.5*\breadthDist+0.75*3*\breadthDist,0.25*\heightDist)}](3'n2) at (326:\rad) {};
	\node[circle,inner sep=0pt] [shift={(-5.5*\breadthDist+0.75*3*\breadthDist,0.25*\heightDist)}](3'nn2) at (326:\radius) {};
	\path[C1,ns1]          (3')  edge  (3'n2);
	\path[C1,ns1,densely dotted]          (3'n2)  edge  (3'nn2);
	\node[circle,inner sep=0pt] [shift={(-5.5*\breadthDist+0.75*3*\breadthDist,0.25*\heightDist)}](3'n3) at (316:\rad) {};
	\node[circle,inner sep=0pt] [shift={(-5.5*\breadthDist+0.75*3*\breadthDist,0.25*\heightDist)}](3'nn3) at (316:\radius) {};
	\path[C1,ns1]          (3')  edge  (3'n3);
	\path[C1,ns1,densely dotted]          (3'n3)  edge  (3'nn3);
	\node[circle,inner sep=0pt] [shift={(-1.5*\breadthDist+0.25*3*\breadthDist,0.25*\heightDist)}](4'n1) at (204:\rad) {};
	\node[circle,inner sep=0pt] [shift={(-1.5*\breadthDist+0.25*3*\breadthDist,0.25*\heightDist)}](4'nn1) at (204:\radius) {};
	\path[C1,ns1]          (4')  edge  (4'n1);
	\path[C1,ns1,densely dotted]          (4'n1)  edge  (4'nn1);
	\node[circle,inner sep=0pt] [shift={(-1.5*\breadthDist+0.25*3*\breadthDist,0.25*\heightDist)}](4'n2) at (214:\rad) {};
	\node[circle,inner sep=0pt] [shift={(-1.5*\breadthDist+0.25*3*\breadthDist,0.25*\heightDist)}](4'nn2) at (214:\radius) {};
	\path[C1,ns1]          (4')  edge  (4'n2);
	\path[C1,ns1,densely dotted]          (4'n2)  edge  (4'nn2);
	\node[circle,inner sep=0pt] [shift={(-1.5*\breadthDist+0.25*3*\breadthDist,0.25*\heightDist)}](4'n3) at (224:\rad) {};
	\node[circle,inner sep=0pt] [shift={(-1.5*\breadthDist+0.25*3*\breadthDist,0.25*\heightDist)}](4'nn3) at (224:\radius) {};
	\path[C1,ns1]          (4')  edge  (4'n3);
	\path[C1,ns1,densely dotted]          (4'n3)  edge  (4'nn3);
	\node[circle,inner sep=0pt] [shift={(9.5*\breadthDist+0.25*3*\breadthDist,0.25*\heightDist)}](1'''n1) at (262:\rad) {};
	\node[circle,inner sep=0pt] [shift={(9.5*\breadthDist+0.25*3*\breadthDist,0.25*\heightDist)}](1'''nn1) at (262:\radius) {};
	\path[C1,ns1]          (1''')  edge  (1'''n1);
	\path[C1,ns1,densely dotted]          (1'''n1)  edge  (1'''nn1);
	\node[circle,inner sep=0pt] [shift={(9.5*\breadthDist+0.25*3*\breadthDist,0.25*\heightDist)}](1'''n2) at (252:\rad) {};
	\node[circle,inner sep=0pt] [shift={(9.5*\breadthDist+0.25*3*\breadthDist,0.25*\heightDist)}](1'''nn2) at (252:\radius) {};
	\path[C1,ns1]          (1''')  edge  (1'''n2);
	\path[C1,ns1,densely dotted]          (1'''n2)  edge  (1'''nn2);
	\node[circle,inner sep=0pt] [shift={(9.5*\breadthDist+0.25*3*\breadthDist,0.25*\heightDist)}](1'''n3) at (242:\rad) {};
	\node[circle,inner sep=0pt] [shift={(9.5*\breadthDist+0.25*3*\breadthDist,0.25*\heightDist)}](1'''nn3) at (242:\radius) {};
	\path[C1,ns1]          (1''')  edge  (1'''n3);
	\path[C1,ns1,densely dotted]          (1'''n3)  edge  (1'''nn3);
	\node[circle,inner sep=0pt] [shift={(13.5*\breadthDist+0.75*3*\breadthDist,0.25*\heightDist)}](6'''n1) at (278:\rad) {};
	\node[circle,inner sep=0pt] [shift={(13.5*\breadthDist+0.75*3*\breadthDist,0.25*\heightDist)}](6'''nn1) at (278:\radius) {};
	\path[C1,ns1]          (6''')  edge  (6'''n1);
	\path[C1,ns1,densely dotted]          (6'''n1)  edge  (6'''nn1);
	\node[circle,inner sep=0pt] [shift={(13.5*\breadthDist+0.75*3*\breadthDist,0.25*\heightDist)}](6'''n2) at (288:\rad) {};
	\node[circle,inner sep=0pt] [shift={(13.5*\breadthDist+0.75*3*\breadthDist,0.25*\heightDist)}](6'''nn2) at (288:\radius) {};
	\path[C1,ns1]          (6''')  edge  (6'''n2);
	\path[C1,ns1,densely dotted]          (6'''n2)  edge  (6'''nn2);
	\node[circle,inner sep=0pt] [shift={(13.5*\breadthDist+0.75*3*\breadthDist,0.25*\heightDist)}](6'''n3) at (298:\rad) {};
	\node[circle,inner sep=0pt] [shift={(13.5*\breadthDist+0.75*3*\breadthDist,0.25*\heightDist)}](6'''nn3) at (298:\radius) {};
	\path[C1,ns1]          (6''')  edge  (6'''n3);
	\path[C1,ns1,densely dotted]          (6'''n3)  edge  (6'''nn3);

	\node[minimum height=10pt,inner sep=0,font=\small,C1] at (-9.5*\breadthDist+0.2,\heightDist+0.3) {$a_1^{(v,w)}$};
	\node[minimum height=10pt,inner sep=0,font=\small,C1] at (-4.5*\breadthDist+0.2,\heightDist+0.3) {$a_{6}^{(v,w)}$};
	\node[minimum height=10pt,inner sep=0,font=\small,C1] at (1.5*\breadthDist+0.13,\heightDist+0.3) {$a_{12}^{(v,w)}$};
	\node[minimum height=10pt,inner sep=0,font=\small,C1] at (10.5*\breadthDist+0.13,\heightDist+0.3) {$a_{21}^{(v,w)}$};
	\node[minimum height=10pt,inner sep=0,font=\small,C1] at (16.5*\breadthDist+0.13,\heightDist+0.3) {$a_{27}^{(v,w)}$};
	\node[minimum height=10pt,inner sep=0,font=\small,C1] at (-5.5*\breadthDist,-0.3) {$a_{20}^{(u,v)}$};
	\node[minimum height=10pt,inner sep=0,font=\small,C1] at (0.5*\breadthDist,-0.3) {$a_{26}^{(u,v)}$};
	\node[minimum height=10pt,inner sep=0,font=\small,C1] at (9.5*\breadthDist,-0.3) {$a_{5}^{(w,x)}$};
	\node[minimum height=10pt,inner sep=0,font=\small,C1] at (15.5*\breadthDist,-0.3) {$a_{11}^{(w,x)}$};
	\node[minimum height=10pt,inner sep=0,font=\small,C1] at (-5.5*\breadthDist+0.5*3*\breadthDist-0.57,0.5*\heightDist+0.45) {$b_{1}^v$};
	\node[minimum height=10pt,inner sep=0,font=\small,C1] at (-5.5*\breadthDist+0.5*3*\breadthDist+0.3,0.5*\heightDist) {$b_{2}^v$};
	\node[minimum height=10pt,inner sep=0,font=\small,C1] at (-1.5*\breadthDist+0.5*3*\breadthDist-0.3,0.5*\heightDist) {$b_{5}^v$};
	\node[minimum height=10pt,inner sep=0,font=\small,C1] at (-1.5*\breadthDist+0.5*3*\breadthDist+0.57,0.5*\heightDist+0.45) {$b_{6}^v$};
	\node[minimum height=10pt,inner sep=0,font=\small,C1] at (9.5*\breadthDist+0.5*3*\breadthDist-0.55,0.5*\heightDist-0.49) {$b_{1}^w$};
	\node[minimum height=10pt,inner sep=0,font=\small,C1] at (9.5*\breadthDist+0.5*3*\breadthDist+0.3,0.5*\heightDist) {$b_{2}^w$};
	\node[minimum height=10pt,inner sep=0,font=\small,C1] at (13.5*\breadthDist+0.5*3*\breadthDist-0.3,0.5*\heightDist) {$b_{5}^w$};
	\node[minimum height=10pt,inner sep=0,font=\small,C1] at (13.5*\breadthDist+0.5*3*\breadthDist+0.58,0.5*\heightDist-0.49) {$b_{6}^w$};

	\end{tikzpicture}
	\caption{Close-up of $G_\mathcal{E}$ with vertices of high degree ($d+1$, $d+2$ or $d+3$) indicated by `fans'. }\label{fig:close-up}
\end{figure}
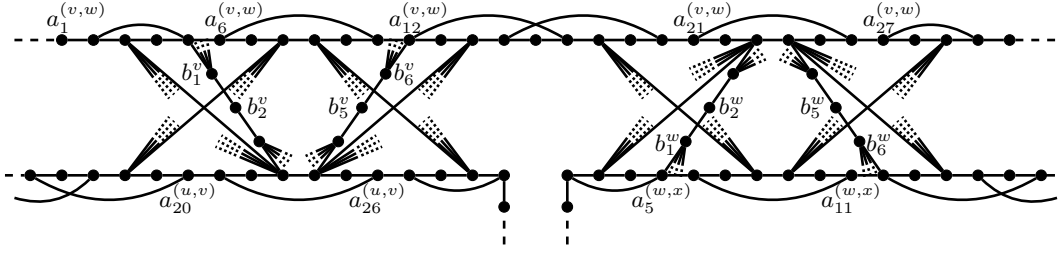 
\begin{proof}
	To prove (\ref{order1}) let us observe that $a_1^e$ and $a_4^e$ have degree $2$ in $G$, as $G_\mathcal{E}[N_{G_\mathcal{E}}(S_v)]\cong G[N_G(S_v)]$ and $a_1^e$ and $a_4^e$ have degree $2$ in $G_\mathcal{E}$. Hence $(a_1^e,a_2^e)$ and $(a_3^e,a_4^e,a_5^e)$ have to be subpaths of $C$ as in Remark~\ref{rem:vertexOfDegreeTwo}. Since $a_2^e$ has exactly three neighbours $a_1^e,a_3^e$ and $a_5^e$ in $G_\mathcal{E}$ and $G_\mathcal{E}[N_{G_\mathcal{E}}(S_v)]\cong G[N_G(S_v)]$ we get that either $(a_1^e,\dots,a_5^e)$ is a subpath of $C$ or $(a_1^e,a_2^e,a_5^e,a_4^e,a_3^e)$ is a subpath of $C$. 
	Property (\ref{order2}) follows with a similar argumentation. 
	
	For (\ref{order3}) let us assume that neither $(a_{12}^e,\dots, a_{17}^e)$ nor $(a_{14}^e,a_{13}^e,a_{12}^e,a_{17}^e,a_{16}^e,a_{15}^e)$ appear in $C$ as a subpath. 
	Since both $a_{13}^e$ and $a_{16}^e$ have degree $2$ in $G$, we know that $(a_{12}^e,a_{13}^e,a_{14}^e)$ and $(a_{15}^e,a_{16}^e,a_{17}^e)$ are subpaths of $C$. 
	Hence neither $(a_{14}^e,a_{15}^e)$ nor $(a_{12}^e,a_{17}^e)$ are subpaths of $C$. Since both $a_{15}^e$ and $a_{17}^e$ have degree $3$ in $G$, this implies that $(a_{20}^e,a_{15}^e,a_{16}^e,a_{17}^e,a_{18}^e)$ is a subpath of $C$. 
	Since $a_{19}^e$ has degree $2$, then $(a_{20}^e,a_{15}^e,a_{16}^e,a_{17}^e,a_{18}^e,a_{19}^e,a_{20}^e)$ has to be a subpath of $C$. Since this is a cycle, $C$ must be equal to $(a_{20}^e,a_{15}^e,a_{16}^e,a_{17}^e,a_{18}^e,a_{19}^e,a_{20}^e)$ which contradicts the assumption that $S_v$ is contained in $C$.
	A symmetric argument shows that either $(a_{15}^e,\dots,a_{20}^e)$ or $(a_{17}^e,a_{16}^e,a_{15}^e,a_{20}^e,a_{19}^e,a_{18}^e)$ has to be a subpath of $C$, proving (\ref{order3}).
	
	We will prove  (\ref{order4}) and (\ref{order5}) simultaneously using a counting argument. Let us first observe that for every edge $e\in E(\vec{\mathcal{E}})$ incident to $v$ we know that $(a_6^{e},a_7^{e},a_8^{e})$, $(a_9^{e},a_{10}^{e},a_{11}^{e})$, $(a_{21}^{e},a_{22}^{e},a_{23}^{e})$ and $(a_{24}^{e},a_{25}^{e},a_{26}^{e})$ are subpaths of $C$, because $a_7^{e}$, $a_{10}^{e}$, $a_{22}^{e}$ and $a_{25}^{e}$ have degree $2$ in $G$. 
	Let $S$ be the set of all maximal subpaths of $C$ which only contain vertices from $\{a_{21}^{e},\dots,a_{26}^{e},a_6^{\tilde{e}},\dots,a_{11}^{\tilde{e}}: e\in E_{\vec{\mathcal{E}}}^{-}(v),\tilde{e}\in E_{\vec{\mathcal{E}}}^{+}(v)\}$. Since there  are no edges of the form $\{a_i^{e},a_j^{\tilde{e}}\}$ for $i,j\in \{6,\dots,11,21,\dots,26\}$, $e\not= \tilde{e}\in E(\vec{\mathcal{E}})$, every subpath in $S$ is either of length $2$ or length $5$. 
	For every path $P=(p_1,\dots,p_\ell)\in S$, we define the vertices $u_P,w_P$ to be the neighbours of $P$ on $C$, i.e.\ $(u_P,p_1,\dots,p_\ell,w_P)$ is a subpath of $C$. 
	Since $G_\mathcal{E}[N_{G_\mathcal{E}}(S_v)]\cong G[N_G(S_v)]$ and every path $P\in S$ is maximal, we know that $u_P,w_P\in \{a_{18}^{e},a_{20}^{e},a_{27}^{e},a_{29}^{e},a_3^{\tilde{e}},a_5^{\tilde{e}},a_{12}^{\tilde{e}},a_{14}^{\tilde{e}}: e\in E_{\vec{\mathcal{E}}}^{-}(v),\tilde{e}\in E_{\vec{\mathcal{E}}}^{+}(v) \}\cup\{b_3^v,b_4^v\}$. 
	
	Properties (\ref{order1}),(\ref{order3}) imply that for every edge $e\in E_{\vec{\mathcal{E}}}^{-}(v)$, only one of the two vertices $a_{18}^{e},a_{20}^{e}$  and only one of the two vertices $a_{27}^{e},a_{29}^{e}$ can be in the set $\{u_P,w_P: P\in S\}$. Therefore the contribution of the set $\{a_{18}^{e},a_{20}^{e},a_{27}^{e},a_{29}^{e}: e\in E_{\vec{\mathcal{E}}}^{-}(v)\}$ to the size of $\{u_P,w_P: P\in S\}$ is at most $2|E^-_{\vec{{\cal E}}}(v)|$.
	Similarly, (\ref{order2}),(\ref{order3}) imply that for every edge  $\tilde{e}\in E_{\vec{\mathcal{E}}}^{+}(v)$ only one of the two vertices $a_{3}^{\tilde{e}},a_5^{\tilde{e}}$  and only one of the two vertices $a_{12}^{\tilde{e}},a_{14}^{\tilde{e}}$ can be in the set $\{u_P,w_P: P\in S\}$. 
	In addition, there are two not necessarily distinct edges $e,\tilde{e}\in E_{\vec{\mathcal{E}}}^{+}(v)$ such that $(a_1^{e},\dots,a_5^{e},b_1^v)$ and $(b_6^v,a_{12}^{\tilde{e}},\dots,a_{20}^{\tilde{e}})$ are subpaths of $C$ since $b_1^v$'s and $b_6^v$'s only other neighbour in $G_{\mathcal{E}}$ is $b_2^v$ or $b_5^v$ respectively. Hence the vertices $a_{3}^{e},a_5^{e},a_{12}^{\tilde{e}},a_{14}^{\tilde{e}}$ cannot be in $\{u_P,w_P: P\in S\}$ for these two specific edges $e,\tilde{e}\in E^-_{\vec{{\cal E}}}(v)$. Therefore, the contribution of the set $\{a_3^{\tilde{e}},a_5^{\tilde{e}},a_{12}^{\tilde{e}},a_{14}^{\tilde{e}}: \tilde{e}\in E_{\vec{\mathcal{E}}}^{+}(v) \}$ to the size of $|\{u_P,w_P: P\in S\}$ is at most $2|E^+_{\vec{{\cal E}}}(v)|-2$.
	Hence $|\{u_P,w_P: P\in S\}|\leq 2|E_{\vec{\mathcal{E}}}^{+}(v)|-2+2|E_{\vec{\mathcal{E}}}^{-}(v)|+2=4d$ where adding $2$ accounts for the size of the set $\{b_3^v,b_4^v\}$. 
	In addition, note that (\ref{order1}),(\ref{order2}), (\ref{order3}) and $\deg_G(b_2^v)=2$ and $\deg_G(b_5^v)=2$ imply that every maximal subpath of $C$ only containing vertices in $S_v\setminus \{a_{21}^{e},\dots,a_{26}^{e},a_6^{\tilde{e}},\dots,a_{11}^{\tilde{e}}: e\in E_{\vec{\mathcal{E}}}^{-}(v),\tilde{e}\in E_{\vec{\mathcal{E}}}^{+}(v)\}$ has length at least $2$. Hence on $C$ in between any two subpaths in $S$ there is a subpath of at length at least $2$, which implies that no two distinct subpaths in $S$ can share a neighbour on $C$. Therefore we have  $|\{u_P,w_P: P\in S\}|=2|S|$ and hence $|S|\leq 2d$. 
	
	On the other hand, if a path in $S$ contains only $3$ vertices (has length $2$) then $|S|> 2d$, because every path in $S$ contains either $3$ or $6$ vertices (as argued at the beginning of the proof), no vertex can appear on more than one path of $S$ and $|\{v: P\in S, v \text{ appears on }P\}|=|\{a_{21}^{e},\dots,a_{26}^{e},a_6^{\tilde{e}},\dots,a_{11}^{\tilde{e}}: e\in E_{\vec{\mathcal{E}}}^{-}(v),\tilde{e}\in E_{\vec{\mathcal{E}}}^{+}(v)\}|=12d$. This yields a contradiction and hence (\ref{order4}) and (\ref{order5}) are true.  
\end{proof}

Let $G$ be a graph with $a_i^e,\dots,a_{j}^e\in V(G)$ for some edge $e\in E(\vec{\mathcal{E}})$ and $1\leq i \leq j\leq 31$. Assume $C$ is a cycle in $G$ which contains $a_i^e,\dots,a_j^e$. We say that $C$ \emph{traverses} the vertices $a_i^e,\dots,a_j^e$  \emph{in order} if $(a_i^e,\dots,a_j^e)$ is a subpath of $C$ and we say that $C$ \emph{traverses $a_i^e,\dots,a_j^e$ out of order} otherwise. 
Note that for certain $1\leq i\leq j\leq 31$ and $e\in E(\vec{\mathcal{E}})$   there is only one way in which a cycle $C$ can traverse  $a_i^e,\dots,a_j^e$ out of order (as specified in Lemma~\ref{lem:orderOfBeginningMiddleAndEndPieceOfP(...)}).

The next lemma shows that for every vertex $v\in V(\vec{\mathcal{E}})$ and every Hamiltonian cycle $C$ in $G_\mathcal{E}$ the number of edges $e\in E_{\vec{\mathcal{E}}}^{-}(v)$ for which $C$ traverses $a_{12}^{e},\dots,a_{20}^{e}$ out of order is exactly one larger than the number of edges $\tilde{e}\in E_{\vec{\mathcal{E}}}^{+}(v)$ for which $C$ traverses $a_{12}^{\tilde{e}},\dots,a_{20}^{\tilde{e}}$ out of order. This still holds for every graph $G$ which contains a certain induced subgraph of $G_\mathcal{E}$.  
\begin{lemma}\label{lem:inAndOutSetEqualCardinality}
	Let $\mathcal{E}$ be any $d$-regular graph and  $G_\mathcal{E}$ as defined in Definition~\ref{def:G_E}.
	Let $S_v:=\{a_1^e,\dots, a_{31}^e: e\in E(\vec{\mathcal{E}}),e\text{ is incident to }v\}\cup \{b_1^v,\dots,b_6^v\}$ for some $v\in V(\vec{\mathcal{E}})$. 
	Let $G$ be a graph with $S_v\subseteq V(G)$. Assume  $G_\mathcal{E}[N_{G_\mathcal{E}}(S_v)]\cong G[N_G(S_v)]$ and $f: S_v\rightarrow S_v$ defined by $f(v)=v$ for $v\in S_v$ is an isomorphism from $G_\mathcal{E}[S_v]$ to $G[S_v]$.
	Then for every Hamiltonian cycle $C$ in $G$  the cardinalities of the two sets
	\begin{align}
	&\inSet:=\Big\{e\in E_{\vec{\mathcal{E}}}^{-}(v): (a_{12}^{e},a_{17}^{e})\text{ is a subpath of }C\Big\}\text{ and }\label{eq:inSet}\\
	&\outSet:=\Big\{e\in E_{\vec{\mathcal{E}}}^{+}(v): (a_{12}^{e},a_{17}^{e})\text{ or }(a_{12}^{e},b_6^v)\text{ is a subpath of }C\Big\}\label{eq:outSet}
	\end{align} 
	are equal. 
\end{lemma}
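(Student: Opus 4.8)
The plan is to re-express both $\inSet$ and $\outSet$ as counts of how the Hamiltonian cycle $C$ traverses the \emph{second block} $a_6^e,\dots,a_{11}^e$ of the path gadgets incident to $v$, and then to exhibit an explicit bijection between them. Write $E^-(v):=E^-_{\vec{\mathcal E}}(v)$ and $E^+(v):=E^+_{\vec{\mathcal E}}(v)$; as in the proof of Lemma~\ref{lem:orderOfBeginningMiddleAndEndPieceOfP(...)}, the isomorphism hypothesis lets us read off the relevant degrees and adjacencies of vertices of $S_v$ from $G_{\mathcal E}$.

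\textbf{Step 1 (rewriting the two sets).} Since $a_7^e,a_{10}^e,a_{13}^e,a_{16}^e$ have degree $2$ in $G$, the triples $(a_6^e,a_7^e,a_8^e)$ and $(a_{12}^e,a_{13}^e,a_{14}^e)$ are subpaths of $C$ for every edge $e$ incident to $v$ (Remark~\ref{rem:vertexOfDegreeTwo}); in particular $\{a_{12}^e,a_{13}^e\}\in E(C)$. For $e\in E^+(v)$ the neighbours of $a_{12}^e$ are precisely $a_{11}^e,a_{13}^e,a_{17}^e,b_6^v$, so the other $C$-edge at $a_{12}^e$ ends in $a_{11}^e$, $a_{17}^e$, or $b_6^v$; and since $a_{11}^e$ has exactly the neighbours $a_6^e,a_{10}^e,a_{12}^e$, part~(\ref{order4}) of the Lemma shows that the in-order traversal $(a_6^e,\dots,a_{11}^e)$ uses the edge $\{a_{11}^e,a_{12}^e\}$ whereas the out-of-order traversal $(a_8^e,a_7^e,a_6^e,a_{11}^e,a_{10}^e,a_9^e)$ does not. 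Hence
\[
e\in\outSet \iff \{a_{11}^e,a_{12}^e\}\notin E(C)\iff C\text{ traverses }a_6^e,\dots,a_{11}^e\text{ out of order}\qquad(e\in E^+(v)).
\]
Symmetrically, part~(\ref{order3}) of the Lemma together with the fact that an edge $\{a_{12}^e,a_{17}^e\}$ would close the in-order traversal $(a_{12}^e,\dots,a_{17}^e)$ into a cycle shorter than $C$ gives, for $e\in E^-(v)$, that $e\in\inSet$ iff $C$ traverses $a_{12}^e,\dots,a_{20}^e$ out of order; in that case $a_{18}^e$ is an endpoint of the subpath $(a_{14}^e,\dots,a_{18}^e)$ of $C$, and its second $C$-edge must be an edge to some $a_8^{e'}$ with $e'\in E^+(v)$ (those being the only non-gadget neighbours of $a_{18}^e$; the gadget neighbour $a_{17}^e$ is excluded by the same short-cycle argument).

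\textbf{Step 2 (the bijection).} Define $\Phi\colon\inSet\to\{e'\in E^+(v):C\text{ traverses }a_6^{e'},\dots,a_{11}^{e'}\text{ out of order}\}$ by letting $\Phi(e)$ be the unique $e'$ with $\{a_{18}^e,a_8^{e'}\}\in E(C)$, which exists and is unique by Step~1. This lands in the stated target because $\{a_7^{e'},a_8^{e'}\}\in E(C)$ always, so $\{a_8^{e'},a_{18}^e\}\in E(C)$ forces $a_8^{e'}$ to use a link edge, which is incompatible with the in-order traversal of $a_6^{e'},\dots,a_{11}^{e'}$; and by Step~1 this target set equals $\outSet$. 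Injectivity: the two $C$-edges at $a_8^{e'}$ are $\{a_7^{e'},a_8^{e'}\}$ and a single link edge, so $e'$ determines $a_{18}^e$ and hence $e$. Surjectivity: given $e'\in E^+(v)$ with $a_6^{e'},\dots,a_{11}^{e'}$ traversed out of order, the non-$a_7^{e'}$ edge at $a_8^{e'}$ is a link edge $\{a_8^{e'},a_{18}^{(u,v)}\}$ with $(u,v)\in E^-(v)$; then $a_{18}^{(u,v)}$ uses a link edge, which by part~(\ref{order3}) of the Lemma forces $C$ to traverse $a_{12}^{(u,v)},\dots,a_{20}^{(u,v)}$ out of order, \ie $(u,v)\in\inSet$ with $\Phi((u,v))=e'$. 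Therefore $|\inSet|=|\outSet|$.

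\textbf{Main obstacle.} The delicate part is Step~2: using only the local information supplied by the isomorphism hypothesis and Lemma~\ref{lem:orderOfBeginningMiddleAndEndPieceOfP(...)}, one must verify that out-of-orderness of a middle block on the $E^-(v)$-side and of a second block on the $E^+(v)$-side are forced to ``propagate'' through the single link edge joining the corresponding $a_{18}$ and $a_8$ vertices --- in particular that each of these two vertices uses exactly one link edge of $C$, and that the gadget edges left unused by an out-of-order block cannot close into a short cycle. The extra $b_6^v$-disjunct in the definition of $\outSet$ is exactly what turns the natural off-by-one identity (``\#out-of-order middle blocks among incoming $=$ \#among outgoing $+\,1$'') into the clean equality $|\inSet|=|\outSet|$, since $\outSet$ is then precisely the set of outgoing edges whose second block is traversed out of order.
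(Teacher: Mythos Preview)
Your proof is correct and follows essentially the same approach as the paper: both define the bijection $\Phi(e)=e'$ (the paper calls it $F_{v,C}$) via the link edge $\{a_{18}^e,a_8^{e'}\}\in E(C)$, and both verify well-definedness, injectivity, and surjectivity by combining Lemma~\ref{lem:orderOfBeginningMiddleAndEndPieceOfP(...)} with short-cycle exclusions at the block endpoints. Your explicit Step~1 rewriting of $\outSet$ as ``out-of-order second block'' is a nice clarification the paper leaves implicit inside its well-definedness and surjectivity arguments, but the logical content is the same.
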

\begin{proof}
	First note that the condition $G_\mathcal{E}[N_{G_\mathcal{E}}(S_v)]\cong G[N_G(S_v)]$ implies that no vertex in the set $\{a_{15}^{e},\dots,a_{30}^{e},a_2^{\tilde{e}},\dots,a_{17}^{\tilde{e}}: e\in E_{\vec{\mathcal{E}}}^{-}(v),\tilde{e}\in E_{\vec{\mathcal{E}}}^{+}(v)\}\cup \{b_1^v,\dots,b_6^v\}$ has neighbours  in $G\setminus S_v$. This will implicitly be used in the following argument whenever we exhaustively consider neighbours of vertices in $G$ as successors on $C$.
	
	Let us first define a map $F_{v,C}:\inSet\rightarrow \outSet$, given by $F_{v,C}(e):=\tilde{e}$, where $\tilde{e}\in \outSet$ is the edge such that $(a_{18}^e,a_8^{\tilde{e}})$ is a subpath of $C$. We first have to argue that $F_{v,C}$ is well defined. 
	
	By Lemma~\ref{lem:orderOfBeginningMiddleAndEndPieceOfP(...)}~(\ref{order3}), $e\in \inSet$ implies   that $(a_{14}^e,a_{13}^e,a_{12}^e,a_{17}^e,a_{16}^e,a_{15}^e,a_{20}^e,a_{19}^e,a_{18}^e)$ is a subpath of $C$. Since the two neighbours $a_{17}^e$ and $a_{19}^e$ of $a_{18}^e$ are already part of this subpath this implies that $(a_{18}^e,a_8^{\tilde{e}})$ has to be a subpath of $C$ for some edge $\tilde{e}\in E_{\vec{\mathcal{E}}}^{+}(v)$. This implies that $(a_6^{\tilde{e}},\dots,a_{11}^{\tilde{e}})$ cannot be a subpath of $C$ and hence, by Lemma~\ref{lem:orderOfBeginningMiddleAndEndPieceOfP(...)}~(\ref{order4}),
	$(a_8^{\tilde{e}},a_7^{\tilde{e}},a_6^{\tilde{e}},a_{11}^{\tilde{e}},a_{10}^{\tilde{e}},a_{9}^{\tilde{e}})$ has to be a subpath of $C$. This further implies that $(a_{11}^{\tilde{e}},a_{12}^{\tilde{e}})$ cannot be a subpath of $C$. Then if $(a_{12}^{\tilde{e}},\dots, a_{20}^{\tilde{e}})$ is a subpath of $C$ then $(a_{12}^{\tilde{e}},b_6^v)$ has to be a subpath of $C$ by excluding all possible other neighbours of $a_{12}^{\tilde{e}}$. On the other hand if $(a_{12}^{\tilde{e}},\dots, a_{20}^{\tilde{e}})$ is not a subpath of $C$ then, by Lemma~\ref{lem:orderOfBeginningMiddleAndEndPieceOfP(...)}~(\ref{order3}),  $(a_{14}^{\tilde{e}},a_{13}^{\tilde{e}},a_{12}^{\tilde{e}},a_{17}^{\tilde{e}},a_{16}^{\tilde{e}},a_{15}^{\tilde{e}},a_{20}^{\tilde{e}},a_{19}^{\tilde{e}},a_{18}^{\tilde{e}})$ is a subpath of $C$ and hence $(a_{12}^{\tilde{e}},a_{17}^{\tilde{e}})$ is a subpath of $C$. Therefore $\tilde{e}\in \outSet$. This shows that $F_{v,C}$ is well defined.
	
	Furthermore $F_{v,C}$ is injective since if $(a_{18}^e,a_8^{\tilde{e}})$ and $(a_{18}^e,a_8^{e'})$ are subpaths of $C$ then $\tilde{e}=e'$ because $(a_{19}^e,a_{18}^e)$ is also a subpath of $C$. $F_{v,C}$ is surjective as for $\tilde{e}\in \outSet$ both  $(a_{12}^{\tilde{e}},a_{17}^{\tilde{e}})$ or $(a_{12}^{\tilde{e}},b_6^v)$ being a subpath of $C$ together with Lemma~\ref{lem:orderOfBeginningMiddleAndEndPieceOfP(...)}~(\ref{order3}) implies that $(a_{12}^{\tilde{e}},a_{11}^{\tilde{e}})$ cannot be a subpath of $C$. This further implies that  $(a_8^{\tilde{e}},a_7^{\tilde{e}},a_6^{\tilde{e}},a_{11}^{\tilde{e}},a_{10}^{\tilde{e}},a_{9}^{\tilde{e}})$ is a subpath of $C$ by Lemma~\ref{lem:orderOfBeginningMiddleAndEndPieceOfP(...)}~(\ref{order4}) and hence there is an edge $e\in E_{\vec{\mathcal{E}}}^{-}(v)$ such that $(a_{18}^e,a_{8}^{\tilde{e}})$ is a subpath of $C$. Then with the same argument as before $(a_{12}^e,a_{17}^e)$ is a subpath of $C$ and hence $e\in \inSet$ and $F_{v,C}(e)=\tilde{e}$. Therefore $F_{v,C}$ is bijective which implies the statement of the lemma.
\end{proof}

As a direct consequence from Lemma~\ref{lem:inAndOutSetEqualCardinality} we get that $G_\mathcal{E}$ cannot be Hamilonian for any base graph $\mathcal{E}$. 
To see that this is true, suppose towards a contradiction that there is a Hamiltonian cycle $C$ in $G_\mathcal{E}$. Then by Lemma~\ref{lem:inAndOutSetEqualCardinality} the  equation $\sum_{v\in V(\vec{\mathcal{E}})}|\inSet|=\sum_{v\in V(\vec{\mathcal{E}})}|\outSet|$ must hold.
Now every edge in $\bigcup_{v\in V(\vec{\mathcal{E}})}\inSet$ is also contained in $\bigcup_{v\in V(\vec{\mathcal{E}})}\outSet$. However, since for every $v\in V({\cal E})$ the vertex $b_6^v\in G_\mathcal{E}$ has only one neighbour not of the form $a_{12}^{(v,w)}$, there must be edges  $(v,w)$ in $\vec{{\cal E}}$ for which $(a_{12}^{(v,w)}, b_6^v)$ is a subpath of $C$. Hence	
$\bigcup_{v\in V(\vec{\mathcal{E}})}\outSet$ must contain some edges (all the edges $(v,w)$ for which $(a_{12}^{(v,w)},b_6^v)$ is a subpath of $C$) that are not contained in $\bigcup_{v\in V(\vec{\mathcal{E}})}\inSet$, so the equation cannot hold, a contradiction. Hence $G_\mathcal{E}$ cannot be Hamiltonian. This argument works similarly if a small number of edges in $G_\mathcal{E}$ have been altered and the equality from Lemma~\ref{lem:inAndOutSetEqualCardinality} still has to hold for many vertices as we will see in the following proof. 

\begin{proof}[Proof of Theorem~\ref{thm:farFromHam}]
	Let $\epsilon:={1}/(8(d+3)^2(6+31d))$. Assume $\mathcal{E}$ is $d$-regular and $n:=|V(\mathcal{E})|$. Let $n':=V(G_\mathcal{E})=(6+31d)n$ and $d':=d+3$ be the degree of $G_\mathcal{E}$.
	
	Towards a contradiction let us assume that $G_\mathcal{E}$ is not $\epsilon$-far to being Hamiltonian and let $E$ be a set of edges in $G_{\cal E}$ such that $|E|\leq \epsilon d'n'$ and the graph $G:=(V(G_\mathcal{E}),E(G_\mathcal{E})\triangle E)$ is Hamiltonian. Let $B\subseteq V(\vec{\mathcal{E}})$ be the set of vertices defined by 
	\begin{align*}
	B:=&\{v\in V(\vec{\mathcal{E}}): \text{there is }e\in E,i\in \{1,\dots,31\},\tilde{e}\in E_G^{-}(v)\cup E_G^{+}(v)\text{ such that }a_i^{\tilde{e}}\in e  \}\\ \cup &\{v\in V(\vec{\mathcal{E}}): \text{there is }e\in E,i\in \{1,\dots,6\} \text{ such that }b_i^v\in e   \}.
	\end{align*}
	Note that $|B|\leq 4\cdot \epsilon d'n'$, because every edge $e\in E$ contributes at most $4$ vertices to $B$, and hence $|V(\vec{\mathcal{E}})\setminus B|\geq n-4\epsilon d'n'>{n}/{2}$.
	
	Let $C$ be a Hamiltonian cycle in $G$. Then for every vertex $v\in V(\vec{\mathcal{E}})\setminus B$ we have that $S_v\subseteq V(G)$,  $G_\mathcal{E}[N_{G_\mathcal{E}}(S_v)]\cong G[N_G(S_v)]$ and $f: S_v\rightarrow S_v$ defined by $f(v)=v$ for $v\in S_v$ is an isomorphism from $G_\mathcal{E}[S_v]$ to $G[S_v]$ where $S_v:=\{a_1^e,\dots,a_{31}^e: e\in E(\vec{\mathcal{E}}),e\text{ is incident to }v\}\cup \{b_1^v,\dots,b_6^v\}$. Since $C$ is Hamiltonian $C$ contains all vertices in $S_v$ for every $v\in V(\vec{\mathcal{E}})\setminus B$ (amongst others). Hence by Lemma~\ref{lem:inAndOutSetEqualCardinality} we have $|\inSet|=|\outSet|$ for every $v\in V(\vec{\mathcal{E}})\setminus B$ where $\inSet$ and $\outSet$ are as defined in Equation~\ref{eq:inSet} and  Equation~\ref{eq:outSet}. 
	Therefore 
	\begin{align}\label{eq:sumInAndOutSets}
	\sum_{v\in V(\vec{\mathcal{E}})\setminus B}|\inSet|=\sum_{v\in V(\vec{\mathcal{E}})\setminus B}|\outSet|.
	\end{align}
	Since for every $v\in V(\vec{\mathcal{E}})\setminus B$ the vertex $b_5^v$ has degree $2$ we know that $(b_4^v,b_5^v,b_6^v)$ is a subpath of $C$. Since every neighbour of $b_6^{v}$ in $G$ apart from $b_5^v$ is of the form $a_{12}^{e}$ for some $e\in E_G^{+}(v)$, we know that for precisely one edge $e\in E_G^{+}(v)$ the path $(b_6^v,a_{12}^{e})$ is a subpath of $C$. Hence  
	\begin{align}\label{eq:upperBoundOnDiagonalEdges}
	\sum_{v\in V(\vec{\mathcal{E}})\setminus B}\Big|\Big\{e\in E_G^{+}(v): (a_{12}^{e},b_6^v)\text{ is a subpath of }C\Big\}\Big|= |V(\vec{\mathcal{E}})\setminus B|>\frac{n}{2}.
	\end{align}
	Since every edge $(u,v)\in E(\vec{\mathcal{E}})$ such that $u,v\in V(\vec{\mathcal{E}})\setminus B$ contributes $1$ to both sides of Equation~\ref{eq:sumInAndOutSets}, Equation~\ref{eq:sumInAndOutSets} and Equation~\ref{eq:upperBoundOnDiagonalEdges} imply that 
	\begin{align*}
	\sum_{v\in V(\vec{\mathcal{E}})\setminus B}\Big|\Big\{(u,v)\in E(\vec{\mathcal{E}}): u\in B,(a_{12}^{(u,v)},a_{17}^{(u,v)})\text{ is a subpath of }C\Big\}\Big|>\frac{n}{2}.
	\end{align*} 
	But this is a contradiction as the number of edges $(u,v)\in E(\vec{\mathcal{E}})$ for which $u\in B$ is bounded from above by $d'|B|\leq {n}/{2}$.
\end{proof}

\section{Ensuring local Hamiltonicity}\label{sec:localHAM}
In this section we prove the following theorem.
\begin{theorem}\label{thm:locallyHam}
	For every $d\in \mathbb{N}_{\geq 2}$ and every $d$-regular graph $\mathcal{E}$  with expansion ratio $h(\mathcal{E})\geq 1$ the graph $G_\mathcal{E}$  constructed in Definition~\ref{def:G_E} is $\delta$-locally Hamiltonian for some constant $\delta=\delta(d)\in (0,1]$.
\end{theorem}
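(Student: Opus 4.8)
\textbf{Reduction to a statement about $G_\mathcal{E}$.} The first move is to note that it suffices to turn $G_\mathcal{E}$ into a Hamiltonian graph $H\in\mathcal C_{d+3}$ of the same order by \emph{changing only edges both of whose endpoints lie outside $N_{G_\mathcal{E}}(S)$}: such changes leave all edges incident to $S$ and all edges inside $N_{G_\mathcal{E}}(S)$ untouched, so $N_H(S)=N_{G_\mathcal{E}}(S)$ and $H[N_H(S)]=G_\mathcal{E}[N_{G_\mathcal{E}}(S)]$, and then $H$ together with $T:=S$ and the identity map witnesses $\delta$-local Hamiltonicity. Now translate $N_{G_\mathcal{E}}(S)$ back to the base graph: let $B\subseteq V(\vec{\mathcal E})$ be the set of all vertices $v$ with $S_v\cap N_{G_\mathcal{E}}(S)\ne\emptyset$ (with $S_v$ as in Lemma~\ref{lem:inAndOutSetEqualCardinality}), together with the endpoints of the $\mathcal{E}$-edges whose big-cycle connector meets $N_{G_\mathcal{E}}(S)$. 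Since $G_\mathcal{E}$ has bounded degree, $|N_{G_\mathcal{E}}(S)|\le (d+4)|S|$ and each such vertex contributes to at most two sets $S_v$, so $|B|\le c(d)\,|S|\le c(d)(6+31d)\,\delta\, n$ where $n=|V(\mathcal E)|$; choosing $\delta=\delta(d)$ small enough makes $|B|\le \beta n$ for any prescribed $\beta=\beta(d)$. All edges of $G_\mathcal{E}$ that we must keep lie in $\bigcup_{v\in B}G_\mathcal{E}[S_v]$, so we may freely alter every gadget $P(a_1^e,\dots,a_{31}^e)$ with $e$ not incident to $B$, every link around a vertex outside $B$, and every connector away from $B$, provided we never exceed degree $d+3$.

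\textbf{Building the cycle: the bulk.} The plan is to use the length-$31dn$ cycle of $G_\mathcal{E}$ through all $a$-vertices as a backbone and to splice in the $6n$ $b$-vertices as \emph{local detours}. For a vertex $v\notin B$ all of $S_v$ is disjoint from $N_{G_\mathcal{E}}(S)$, so its six $b$-vertices and an associated free gadget are fully modifiable; using the already-present link edges $\{a_5^{e},b_1^v\}$ and $\{a_{12}^{e},b_6^v\}$ for a free out-edge $e=(v,w)$ — or, for the at most $O(d|B|)$ vertices $v$ all of whose $\mathcal E$-neighbours lie in $B$, a dedicated fresh free gadget — one can delete two backbone edges inside that gadget and add two new edges to reroute the backbone through $(b_1^v,b_2^v,b_3^v)$ and through $(b_4^v,b_5^v,b_6^v)$. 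Each added edge lands on a vertex of $G_\mathcal{E}$-degree at most $d+1$ (the $b$-vertices) or at most $4$ (interior $a$-vertices), so the degree stays $\le d+3$ for $d\ge 2$; and since every splice replaces a sub-path of the backbone by a sub-path with the same endpoints, the result is still one cycle. Using distinct gadgets for distinct $v$ (there are $dn\gg n$ free gadgets) this produces a single cycle covering every vertex except the $\le 6|B|$ vertices $b_\bullet^v$ with $v\in B$; in particular for $S=\emptyset$ this already gives a Hamiltonian $H$.

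\textbf{The defects near $S$, and the rôle of the expansion hypothesis.} It remains to pick up the $b$-vertices at the few $v\in B$ for which some $b_i^v$ is protected. There the local splice is unavailable, so the two $b$-paths of $v$ must be threaded in using the native link edges ($b_1^v\!\sim\!a_5^{(v,w)}$, $b_3^v\!\sim\!a_{23}^{(u,v)}$, $b_4^v\!\sim\!a_{24}^{(u',v)}$, $b_6^v\!\sim\!a_{12}^{(v,w')}$). By Lemma~\ref{lem:orderOfBeginningMiddleAndEndPieceOfP(...)}(\ref{order3}), forcing $b_6^v$ onto the cycle next to some $a_{12}^{(v,w')}$ forces gadget $(v,w')$ to be traversed out of order, which — exactly as in the proof of Theorem~\ref{thm:farFromHam} via Lemma~\ref{lem:inAndOutSetEqualCardinality} — propagates a unit of ``out-of-order flow'' along $\vec{\mathcal E}$ that can only be dissipated at a gadget we have modified. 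With at most $|B|$ such defects, the total amount of flow that must be transported is at most $|B|$, and its sources sit at $B$ while its sinks can be chosen among the free gadgets. This is where $h(\mathcal E)\ge 1$ is used: for $|B|$ small, edge-expansion $\ge 1$ guarantees (via Menger / max-flow--min-cut, since every cut separating $B$ from a set of $|B|$ free sinks has size $\ge|B|$) a system of $|B|$ edge-disjoint paths in $\mathcal E$ from $B$ to the sinks, avoiding the tiny protected region; rerouting the backbone along the chain of gadgets under each such path, and reconnecting the few resulting arcs into one cycle using fresh edges between interior degree-$2$ vertices, completes the Hamiltonian cycle. Setting $\beta(d)$ (hence $\delta(d)$) small enough that the cut bound and the supply of free gadgets both hold finishes the construction.

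\textbf{Main obstacle.} The hard part is the last two steps: exhibiting an \emph{explicit} globally consistent rerouting of the gadget systems — near $B$, and along the expansion-supplied transporting paths — that (a) picks up precisely the remaining $b$-vertices, (b) balances the flow created by the forced diagonal edges, (c) merges with the backbone-plus-detours into a single cycle, and (d) never produces a spurious short subcycle inside a gadget; verifying (d) is the delicate point and is done by invoking Lemma~\ref{lem:orderOfBeginningMiddleAndEndPieceOfP(...)} to enumerate the possible traversals of $P$. The expansion hypothesis $h(\mathcal E)\ge 1$ enters only through the transportation step; all remaining work is bookkeeping on the fixed gadget $P$ and on the degree bound $d+3$.
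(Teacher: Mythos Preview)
Your high-level architecture matches the paper's: project $S$ down to a small ``bad'' set $S'\subseteq V(\mathcal E)$, use $h(\mathcal E)\ge 1$ to obtain edge-disjoint paths in $\mathcal E$ linking $S'$ to its complement, and use those paths to organise the gadget traversals into a single Hamiltonian cycle. However, what you have written is a sketch that explicitly defers the crucial step (``exhibiting an explicit globally consistent rerouting\dots is the delicate point''), and the tools you reach for make that step harder than it needs to be.

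Two concrete points where the paper's route is both different and cleaner. First, you propose to verify that no short subcycle arises by invoking Lemma~\ref{lem:orderOfBeginningMiddleAndEndPieceOfP(...)} to enumerate traversals of $P$; but that lemma is an \emph{obstruction} lemma. The paper instead proves a \emph{constructive} counterpart (Lemma~\ref{lem:existenceOfEdgeDisjoinPaths}): for every $v\in S'$, the \emph{existing} edges of $G_\mathcal{E}[S_v]$ already support a system of $2d$ vertex-disjoint paths with prescribed endpoints $a_{18}^e/a_{20}^e$ and $a_{15}^{\tilde e}/a_{17}^{\tilde e}$, once one knows which incoming/outgoing $\mathcal E$-edges should be traversed ``out of order''. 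This is the missing ingredient in your outline: with it, nothing near $S$ is touched at all, and there is no (d)-style verification to do --- the cycle is produced explicitly by concatenating the $P_e^{\operatorname{out}}$ with the $P_e^{\operatorname{in}}$ in the order $f$. Second, the paper never deletes an edge: $H=(V(G_\mathcal E),E(G_\mathcal E)\cup E)$ with $E$ consisting only of the short-cuts $\{b_3^v,a_4^{(v,n(v))}\},\{b_4^v,a_{13}^{(v,n(v))}\}$ for $v\notin S'$ and one extra edge $\{a_{14}^{(q_i^1,q_i^2)},a_{17}^{(q_i^1,q_i^2)}\}$ at the free end of each expansion path. Your backbone-surgery with deletions, and the later ``reconnecting the few resulting arcs into one cycle using fresh edges'', is where your argument becomes hand-wavy and where a genuine error could creep in.

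Finally, your use of Menger/max-flow is heavier than required and, as stated, not quite justified: you assert that $h(\mathcal E)\ge 1$ forces every cut separating $B$ from some chosen sink set to have size $\ge |B|$, but you have not specified the sinks or argued the cut bound. The paper avoids this entirely with a two-line induction (find $v\in S'$ having at least as many neighbours outside $S'$ as inside --- guaranteed by $h(\mathcal E)\ge 1$ --- peel it off, extend the paths through $v$), which directly yields edge-disjoint paths $Q_1,\dots,Q_m$ from $V(\mathcal E)\setminus S'$ to each vertex of $S'$ with all internal vertices in $S'$. That last clause is exactly what lets the added edge at the free end of each $Q_i$ live outside $N_{G_\mathcal E}(S)$ while the rest of $Q_i$ is handled by Lemma~\ref{lem:existenceOfEdgeDisjoinPaths} without modification.
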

Our proof strategy for Theorem~\ref{thm:locallyHam} is  to add edges to $G_\mathcal{E}$ which are incident to at most one vertex in $N_{G_\mathcal{E}}(S)$ to obtain a graph $H$ which is Hamiltonian, for any given $S\subseteq V(G_\mathcal{E})$ of size at most $\delta |V(G)|$. We prove the Hamiltonicity of $H$ by dividing the vertex set of $H$ into pairwise disjoint small sets. For each of these sets  we obtain a set of vertex disjoint paths which cover the entire small set and start and end in prescribed vertices. To conclude the proof of the Hamiltonicity of $H$ we find a Hamiltonian cycle by patching together these paths. The next lemma will be used to show the existence of such paths for all those subsets of vertices of $H$ which contain a vertex from $S$.
\begin{lemma}\label{lem:existenceOfEdgeDisjoinPaths}
	Let $\mathcal{E}$ be any $d$-regular graph and  $G_\mathcal{E}$ as defined in Definition~\ref{def:G_E}.
	Let  $v\in V(\vec{\mathcal{E}})$ and $S_v:=\{a_{18}^{e},\dots,a_{31}^{e},a_1^{\tilde{e}},\dots,a_{17}^{\tilde{e}}: e\in E_{\vec{\mathcal{E}}}^{-}(v),\tilde{e}\in E_{\vec{\mathcal{E}}}^{+}(v)\}\cup \{b_1^v,\dots,b_6^v\}$. 
	Let $G$ be a graph such that $G_\mathcal{E}[S_v]$ is a subgraph of $G$.
	Then for any two sets $T_v^{\operatorname{in}}\subseteq E_{\vec{\mathcal{E}}}^{-}(v)$ and $T_v^{\operatorname{out}}\subseteq E_{\vec{\mathcal{E}}}^{+}(v)$ with $|T_v^{\operatorname{in}}|-1=|T_v^{\operatorname{out}}|$ there is a set of $2d$ pairwise vertex disjoint simple paths $\{P_{e}^{\operatorname{in}},P_{\tilde{e}}^{\operatorname{out}}: e\in E_{\vec{\mathcal{E}}}^{-}(v),\tilde{e}\in E_{\vec{\mathcal{E}}}^{+}(v)\}$ in $G$ with the following properties.
	\begin{itemize}
		\item If $e\in T_v^{\operatorname{in}}$ then $P_e^{\operatorname{in}}$ is a path from $a_{20}^e$ to $a_{31}^e$.
		\item If $e\in E_{\vec{\mathcal{E}}}^{-}(v)\setminus T_v^{\operatorname{in}}$ then $P_e^{\operatorname{in}}$ is a path from $a_{18}^e$ to $a_{31}^e$.
		\item If $e\in T_v^{\operatorname{out}}$ then $P_e^{\operatorname{out}}$ is a path from $a_{1}^e$ to $a_{15}^e$.
		\item If $e\in E_{\vec{\mathcal{E}}}^{+}(v)\setminus T_v^{\operatorname{out}}$ then $P_e^{\operatorname{out}}$ is a path from $a_{1}^e$ to $a_{17}^e$.
		\item The set  $\{x\in V(G): x\text{ is contained in }P_e^{\operatorname{in}}\text{ or }P_e^{\operatorname{out}}\text{ for some }e \}$ is equal to $S_v$.
	\end{itemize}  
\end{lemma}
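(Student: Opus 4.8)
The plan is to exhibit the $2d$ paths explicitly inside $G_\mathcal{E}[S_v]$, which suffices since $G_\mathcal{E}[S_v]$ is a subgraph of $G$. Throughout I will only use path‑edges and chords of the gadgets $P(a^e)$, the edges $b_1^vb_2^v,b_2^vb_3^v,b_4^vb_5^v,b_5^vb_6^v$, and, for $e\in E_{\vec{\mathcal{E}}}^{-}(v)$ and $\tilde e\in E_{\vec{\mathcal{E}}}^{+}(v)$, the eight link edges $\{a_{18}^e,a_8^{\tilde e}\},\{a_{29}^e,a_9^{\tilde e}\},\{a_{23}^e,a_3^{\tilde e}\},\{a_{24}^e,a_{14}^{\tilde e}\},\{a_5^{\tilde e},b_1^v\},\{a_{23}^e,b_3^v\},\{a_{24}^e,b_4^v\},\{a_{12}^{\tilde e},b_6^v\}$ belonging to the link from $P(a^e)$ to $P(a^{\tilde e})$ via $b^v$; all of these have both endpoints in $S_v$ and hence lie in $G_\mathcal{E}[S_v]$. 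Put $t:=|T_v^{\operatorname{out}}|$, so $|T_v^{\operatorname{in}}|=t+1\le d$ and hence $t\le d-1$; enumerate $T_v^{\operatorname{in}}=\{e_1,\dots,e_{t+1}\}$, $T_v^{\operatorname{out}}=\{\tilde e_1,\dots,\tilde e_t\}$, and fix one further out-arc $\tilde e_{t+1}\in E_{\vec{\mathcal{E}}}^{+}(v)\setminus T_v^{\operatorname{out}}$ (which exists because $t\le d-1$).

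I would define the $2d$ paths in three groups. First, the \emph{straight} paths: $P_e^{\operatorname{in}}:=(a_{18}^e,a_{19}^e,\dots,a_{31}^e)$ for every $e\in E_{\vec{\mathcal{E}}}^{-}(v)\setminus T_v^{\operatorname{in}}$ and $P_{\tilde e}^{\operatorname{out}}:=(a_1^{\tilde e},a_2^{\tilde e},\dots,a_{17}^{\tilde e})$ for every $\tilde e\in E_{\vec{\mathcal{E}}}^{+}(v)\setminus(T_v^{\operatorname{out}}\cup\{\tilde e_{t+1}\})$; these use only path-edges and cover exactly the ``tail'' $\{a_{18}^e,\dots,a_{31}^e\}$, respectively the ``front'' $\{a_1^{\tilde e},\dots,a_{17}^{\tilde e}\}$, touching nothing else. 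Second, for $i\in\{1,\dots,t\}$ I would \emph{pair} $e_i$ with $\tilde e_i$: let $P_{e_i}^{\operatorname{in}}$ start at $a_{20}^{e_i}$, run $a_{20}^{e_i},a_{19}^{e_i},a_{18}^{e_i}$, cross the link edge to $a_8^{\tilde e_i}$, run $a_8^{\tilde e_i},a_7^{\tilde e_i},a_6^{\tilde e_i},a_{11}^{\tilde e_i},a_{10}^{\tilde e_i},a_9^{\tilde e_i}$, cross the link edge $\{a_9^{\tilde e_i},a_{29}^{e_i}\}$, and end $a_{29}^{e_i},a_{28}^{e_i},a_{27}^{e_i},a_{30}^{e_i},a_{31}^{e_i}$; and let $P_{\tilde e_i}^{\operatorname{out}}$ start at $a_1^{\tilde e_i}$, run $a_1^{\tilde e_i},a_2^{\tilde e_i},a_5^{\tilde e_i},a_4^{\tilde e_i},a_3^{\tilde e_i}$, cross $\{a_3^{\tilde e_i},a_{23}^{e_i}\}$, run $a_{23}^{e_i},a_{22}^{e_i},a_{21}^{e_i},a_{26}^{e_i},a_{25}^{e_i},a_{24}^{e_i}$, cross $\{a_{24}^{e_i},a_{14}^{\tilde e_i}\}$, and end $a_{14}^{\tilde e_i},a_{13}^{\tilde e_i},a_{12}^{\tilde e_i},a_{17}^{\tilde e_i},a_{16}^{\tilde e_i},a_{15}^{\tilde e_i}$. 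Using the chords $\{a_2,a_5\},\{a_6,a_{11}\},\{a_{12},a_{17}\},\{a_{21},a_{26}\},\{a_{27},a_{30}\}$ of $P$ and the link edges listed above, these are simple paths with the required endpoints, and together the two of them partition the tail of $e_i$ and the front of $\tilde e_i$.

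The third, and only delicate, group handles $e_{t+1}$, $\tilde e_{t+1}$ and the six vertices $b_1^v,\dots,b_6^v$. I would take $P_{e_{t+1}}^{\operatorname{in}}$ of exactly the same shape as the paired in-paths, now detouring through $a_6^{\tilde e_{t+1}},\dots,a_{11}^{\tilde e_{t+1}}$, and take the last path to be the \emph{special} out-path
\begin{align*}
P_{\tilde e_{t+1}}^{\operatorname{out}}:=(\,&a_1^{\tilde e_{t+1}},a_2^{\tilde e_{t+1}},a_3^{\tilde e_{t+1}},a_4^{\tilde e_{t+1}},a_5^{\tilde e_{t+1}},\ b_1^v,b_2^v,b_3^v,\\
&a_{23}^{e_{t+1}},a_{22}^{e_{t+1}},a_{21}^{e_{t+1}},a_{26}^{e_{t+1}},a_{25}^{e_{t+1}},a_{24}^{e_{t+1}},\ b_4^v,b_5^v,b_6^v,\\
&a_{12}^{\tilde e_{t+1}},a_{13}^{\tilde e_{t+1}},a_{14}^{\tilde e_{t+1}},a_{15}^{\tilde e_{t+1}},a_{16}^{\tilde e_{t+1}},a_{17}^{\tilde e_{t+1}}\,),
\end{align*}
which runs from $a_1^{\tilde e_{t+1}}$ to $a_{17}^{\tilde e_{t+1}}$ and absorbs both $b$-chains $b_1^vb_2^vb_3^v$ and $b_4^vb_5^vb_6^v$ together with the block $a_{21}^{e_{t+1}},\dots,a_{26}^{e_{t+1}}$, using the link edges $\{a_5^{\tilde e_{t+1}},b_1^v\},\{b_3^v,a_{23}^{e_{t+1}}\},\{a_{24}^{e_{t+1}},b_4^v\},\{b_6^v,a_{12}^{\tilde e_{t+1}}\}$; the block $a_6^{\tilde e_{t+1}},\dots,a_{11}^{\tilde e_{t+1}}$ left out here is exactly the one picked up by $P_{e_{t+1}}^{\operatorname{in}}$.

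Finally I would carry out the routine verification: there are $d$ in-paths and $d$ out-paths; each listed consecutive pair is an edge of $G_\mathcal{E}[S_v]$ (a path-edge, one of the chords above, a $b$-edge, or one of the eight link edges), so every path is simple and lies in $G$; the endpoints are as prescribed; and the vertex sets are pairwise disjoint with union $S_v$, because every tail $\{a_{18}^e,\dots,a_{31}^e\}$, every front $\{a_1^{\tilde e},\dots,a_{17}^{\tilde e}\}$ and the set $\{b_1^v,\dots,b_6^v\}$ is split among the constructed paths exactly as described. I expect the only genuinely non-mechanical step to be the design of $P_{\tilde e_{t+1}}^{\operatorname{out}}$ and its partner $P_{e_{t+1}}^{\operatorname{in}}$: this is where the hypothesis $|T_v^{\operatorname{in}}|-1=|T_v^{\operatorname{out}}|$ enters, the ``$+1$'' being precisely the extra in-arc whose middle block has to be carried by the out-path that also carries the six $b$-vertices.
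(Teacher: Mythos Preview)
Your proof is correct and is essentially the same as the paper's: the paper also picks one extra out-arc $(v,n(v))\in E_{\vec{\mathcal{E}}}^{+}(v)\setminus T_v^{\operatorname{out}}$ (your $\tilde e_{t+1}$), fixes a bijection $g\colon T_v^{\operatorname{in}}\to T_v^{\operatorname{out}}\cup\{(v,n(v))\}$ (your pairing $e_i\leftrightarrow \tilde e_i$), and then writes down exactly the same paths --- straight paths on the unpaired edges, the detoured in/out pair on each matched $(e,g(e))$, and the special out-path through $b_1^v,\dots,b_6^v$ for the pair mapped to $(v,n(v))$. The only difference is notational.
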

\begin{proof}
	First we pick a vertex $n(v)\in V(\vec{\mathcal{E}})$ such that $(v,n(v))\notin T_v^{\operatorname{out}}$. This is possible because $v$ has the same number of incoming and outgoing edges and $|T_v^{\operatorname{in}}|-1=|T_v^{\operatorname{out}}|$. Then $|T_v^{\operatorname{in}}|=|T_v^{\operatorname{out}}\cup \{(v,n(v))\}|$, and hence we can find a bijection $g:T_v^{\operatorname{in}}\rightarrow T_v^{\operatorname{out}}\cup \{(v,n(v))\}$. Then we can define the paths as follows. For $e\in T_v^{\operatorname{in}}$ we let
	\begin{displaymath}
	P_{e}^{\operatorname{in}}:= (a_{20}^{e},a_{19}^{e},a_{18}^{e}, a_{8}^{g(e)},a_{7}^{g(e)},a_{6}^{g(e)},a_{11}^{g(e)},a_{10}^{g(e)},a_{9}^{g(e)},a_{29}^{e},a_{28}^{e},a_{27}^{e},a_{30}^{e},a_{31}^{e}),
	\end{displaymath}
	\begin{displaymath}
	P_{g(e)}^{\operatorname{out}}:= (a_{1}^{g(e)},\dots,a_{5}^{g(e)},b_1^v,b_2^v,b_3^v, a_{23}^{e},a_{22}^{e},a_{21}^{e},a_{26}^{e},a_{25}^{e},a_{24}^{e},b_4^v,b_5^v,b_6^v,a_{12}^{g(e)},\dots,a_{17}^{g(e)}) 
	\end{displaymath}
	if $g(e)=(v,n(v))$ and 
	\begin{align*}
	P_{g(e)}^{\operatorname{out}}:= (a_{1}^{g(e)},a_{2}^{g(e)},a_{5}^{g(e)},a_{4}^{g(e)},a_{3}^{g(e)}, a_{23}^{e},a_{22}^{e},a_{21}^{e},a_{26}^{e},a_{25}^{e},a_{24}^{e},a_{14}^{g(e)},\hspace*{2pt} & a_{13}^{g(e)},a_{12}^{g(e)},a_{17}^{g(e)},
	a_{16}^{g(e)},a_{15}^{g(e)}) 
	\end{align*}
	if $g(e)\not=(v,n(v))$.
	
	Furthermore for  $e\in E_{\vec{\mathcal{E}}}^{-}(v)\setminus T_v^{\operatorname{in}}$ we let $P_e^{\operatorname{in}}:=(a_{18}^e,\dots,a_{31}^e)$ and for $e\in E_{\vec{\mathcal{E}}}^{+}(v)\setminus T_v^{\operatorname{out}}$ we let $P_e^{\operatorname{out}}:=(a_{1}^e,\dots,a_{17}^e)$. These paths clearly satisfy all conditions.
	
\end{proof}

\begin{proof}[Proof of Theorem~\ref{thm:locallyHam}]
	Let $\delta:={1}/(2\cdot(6+31d))$ and let $S\subseteq V(G_\mathcal{E})$ be any set of vertices with $|S|\leq \delta\cdot |V(G_\mathcal{E})|$. We will find a Hamiltonian graph $H$ by modifying $G_\mathcal{E}$ in such a way that $G_\mathcal{E}[N_{G_\mathcal{E}}(S)]$ is not affected by any modifications. In the following we exclude the trivial case $S=\emptyset$.
	
	Let $S_v:=\{a_{18}^{e},\dots,a_{31}^{e},a_1^{\tilde{e}},\dots,a_{17}^{\tilde{e}}: e\in E_{\vec{\mathcal{E}}}^{-}(v),\tilde{e}\in E_{\vec{\mathcal{E}}}^{+}(v)\}\cup \{b_1^v,\dots,b_6^v\}$ for every  $v\in V(\vec{\mathcal{E}})$. 
	Let $S':=\{v\in V(\vec{\mathcal{E}}): S_v\cap S\not= \emptyset \}$.  By Remark~\ref{rem:degreeAndNumberOfVerticesOFG_E} $|V(G_\mathcal{E})|=(6+31d)\cdot |V(\mathcal{E})|$. Since the sets $S_v$ are pairwise disjoint this implies that $|S'|\leq |S|\leq \delta\cdot |V(G_\mathcal{E})|={1}/{2}\cdot |V(\mathcal{E})|$.  Let $S'=\{s_1',\dots,s_m'\}$ where $m:=|S'|$.
	\begin{claim}\label{claim:edgeDisjointPath} 
		There are pairwise edge disjoint paths $Q_1,\dots,Q_m$ in $\mathcal{E}$ such that  $Q_i$ is of the form $Q_i=(q_i^1,\dots,q_i^{\ell_i})$ for some $\ell_i\in \mathbb{N}$ and $q_i^{\ell_i}=s_i'$, $q_i^j\in S'$ for all $j>1$ and $q_i^{1}\in V(\mathcal{E})\setminus S'$.
	\end{claim}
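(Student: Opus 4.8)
The plan is to recast the claim as an integral feasible-flow problem on a contracted graph and read off feasibility directly from the expansion hypothesis. First I would contract the set $V(\mathcal{E})\setminus S'$ to a single new vertex $z$ and delete the resulting loops, obtaining a multigraph $\mathcal{E}'$ on vertex set $S'\cup\{z\}$ whose edges are exactly the edges of $\mathcal{E}$ with at least one endpoint in $S'$; here a boundary edge $\{q,w\}$ of $\mathcal{E}$ with $q\notin S'$ and $w\in S'$ becomes an edge $\{z,w\}$ of $\mathcal{E}'$, and distinct edges of $\mathcal{E}$ give distinct edges of $\mathcal{E}'$. Assign capacity $1$ to every edge of $\mathcal{E}'$. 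I then want an integral flow in $\mathcal{E}'$ that sends exactly one unit from the source $z$ to each vertex of $S'$ (equivalently: $z$ has supply $m$ and each $s'_i$ has demand $1$).

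The second step is to check the cut condition that guarantees such a flow. By the supply/demand version of the max-flow--min-cut theorem it suffices that for every $A\subseteq S'$ the number of edges of $\mathcal{E}'$ with exactly one endpoint in $A$ is at least $|A|$. Since $A\subseteq S'$, the edges of $\mathcal{E}'$ leaving $A$ are precisely the edges of $\mathcal{E}$ leaving $A$, that is, the set $\{e\in E(\mathcal{E}):|e\cap A|=1\}$. Because $|A|\le|S'|\le\tfrac12|V(\mathcal{E})|$, the definition of $h(\mathcal{E})$ together with the hypothesis $h(\mathcal{E})\ge 1$ yields $|\{e\in E(\mathcal{E}):|e\cap A|=1\}|\ge h(\mathcal{E})\cdot|A|\ge|A|$. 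Hence the cut condition holds, the flow is feasible, and since all capacities and demands are integers it can be taken integral.

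Finally I would decompose this integral unit-capacity flow into paths, discarding any cycles, and extract for each $i\in[m]$ a simple $z$--$s'_i$ path $R_i$; since the flow delivers exactly one unit to each sink there is exactly one such path per $i$, and unit capacities make $R_1,\dots,R_m$ pairwise edge-disjoint in $\mathcal{E}'$. Writing $R_i=(z,w_1,\dots,w_k=s'_i)$ with all $w_j\in S'$ (the path is simple, so it does not revisit $z$), I unpack its first edge $\{z,w_1\}$ back to the boundary edge $\{q,w_1\}$ of $\mathcal{E}$ it came from and set $Q_i:=(q,w_1,\dots,w_k)$. Then $q=q_i^1\in V(\mathcal{E})\setminus S'$, all later vertices of $Q_i$ lie in $S'$, and $q_i^{\ell_i}=s'_i$, so $Q_i$ has the required form; moreover edge-disjointness in $\mathcal{E}'$ transfers to $\mathcal{E}$ since distinct edges of $\mathcal{E}'$ correspond to distinct edges of $\mathcal{E}$.

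I do not expect a genuine obstacle here. The only points needing care are the correct translation into the flow model (in particular contracting the exterior, rather than trying to pick distinct start vertices, which would be harder) and the recognition that the required cut inequality is exactly the edge-expansion inequality applied to subsets of $S'$; the flow-decomposition step and the back-translation to $\mathcal{E}$ are routine.
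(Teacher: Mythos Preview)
Your argument is correct, and it takes a genuinely different route from the paper's. The paper proceeds by induction on $|S'|$: it uses $h(\mathcal{E})\ge 1$ to locate a vertex $v\in S'$ with at least as many neighbours in $V(\mathcal{E})\setminus S'$ as in $S'$, applies the induction hypothesis to $S'\setminus\{v\}$, and then extends every path from the inductive step that happens to start at $v$ by a distinct edge from $v$ into $V(\mathcal{E})\setminus S'$. Your proof instead contracts the exterior to a single source, phrases the problem as an integral unit-capacity flow with one unit of demand at each $s'_i$, and checks feasibility via the min-cut condition, which you identify exactly with the expansion inequality $|\partial A|\ge |A|$ for every $A\subseteq S'$.

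The paper's argument is self-contained and avoids any appeal to network-flow theorems, at the cost of some inductive bookkeeping (how many paths start at $v$, how many edges from $v$ to the outside are available, and how the path for $v$ itself is produced). Your approach makes the role of expansion completely transparent --- it is literally the cut condition --- and edge-disjointness together with the required endpoint pattern fall out automatically from unit capacities and flow decomposition. The only mildly delicate points, which you already handle, are that $\mathcal{E}'$ must be kept as a multigraph so that distinct boundary edges of $\mathcal{E}$ remain distinct after contraction, and that the walks coming from flow decomposition can be shortcut to simple paths without destroying edge-disjointness (since the paths are edge-disjoint to begin with, removing a cycle from one of them affects no other).
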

	\begin{proof}
		By induction on the size of $S'$. If $|S'|=1$ then this is trivially true. If $|S'|=n$ then $h(\mathcal{E})\geq 1$ implies that there must be a vertex $v$ with at least as many neighbours in $V(\mathcal{E})\setminus S'$ as neighbours in $S'$. Then $S\setminus \{v\}$ has $n-1$ vertices. Hence by induction hypothesis there is such a set of paths for $S'\setminus\{v\}$. But then we can extend every path which starts in $v$ by a different  edge so it starts in $V(\mathcal{E})\setminus S$. 
	\end{proof}
	Let $Q_1,\dots, Q_m$ be as in Claim~\ref{claim:edgeDisjointPath}. Further, for every vertex $v\in V(\mathcal{E})\setminus S'$ we pick one vertex $u\in V(\mathcal{E})$ with $(v,u)\in E(\vec{\mathcal{E}})$ and define $n(v):=u$.
	Now let $E$ be the set
	\begin{align*}
	\bigg\{\Big\{b_3^v,a_{4}^{(v,n(v))}\Big\},\Big\{b_4^v,a_{13}^{(v,n(v))}\Big\}: v\in V(\mathcal{E})\setminus S'\bigg\}\cup \bigg\{\Big\{a_{14}^{(q_i^1,q_i^2)},a_{17}^{(q_i^1,q_i^2)}\Big\}: 1\leq i\leq m\bigg\}.
	\end{align*}
	We now define the graph $H$ by setting $V(H):=V(G_\mathcal{E})$ and $E(H):=E(G_\mathcal{E})\cup E$. Note that $H$ has degree $d+3$, as we only added at most one edge to vertices of degree at most $d+1$. Further note that by definition of $S'$ we have that $S\subseteq \bigcup_{v\in S'}S_v$. Since every edge in $E$ is incident to at most one vertex in $N_G(\bigcup_{v\in S'}S_v)$ it follows that if $H$ is Hamiltonian then it fulfils the conditions from Definition~\ref{def:locallyHam}. 
	Therefore, if we prove that $H$ has a Hamiltonian cycle then $G_\mathcal{E}$ must be locally Hamiltonian. 	
	\begin{claim}\label{claim:pairwiseDisjointPaths}
		There is a set of $2d$ pairwise vertex disjoint simple paths $\{P_{e}^{\operatorname{in}},P_{\tilde{e}}^{\operatorname{out}}: e\in E_{\vec{\mathcal{E}}}^{-}(v),\tilde{e}\in E_{\vec{\mathcal{E}}}^{+}(v)\}$ for every $v\in V(\vec{\mathcal{E}})\setminus S'$ with the following properties.
		\begin{itemize}
			\item If $e\in E_{\vec{\mathcal{E}}}^{-}(v)$ then $P_{e}^{\operatorname{in}}$ is a path from $a_{18}^{e}$ to $a_{31}^{e}$.
			\item If $e=(q_i^{1},q_i^{2})$ for some $i\in [m]$ then $P_{e}^{\operatorname{out}}$ is a path from $a_{1}^{e}$ to $a_{15}^{e}$.
			\item If $e\in E_{\vec{\mathcal{E}}}^{+}(v)\setminus \{(q_i^{1},q_i^{2}): i\in [m]\}$ then $P_{e}^{\operatorname{out}}$ is a path from $a_{1}^{e}$ to $a_{17}^{e}$.
			\item The set  $\{x\in V(G): x\text{ is contained in }P_e^{\operatorname{in}}\text{ or }P_e^{\operatorname{out}}\text{ for some }e \}$ is equal to $S_v$.
		\end{itemize} 
	\end{claim}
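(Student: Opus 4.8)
\emph{Proof proposal.}
The plan is to prove the claim by simply writing down, for each fixed $v\in V(\vec{\mathcal E})\setminus S'$, the $2d$ paths inside $H[S_v]$; this is feasible because only a handful of the modification edges in $E$ touch $S_v$. First I would note that, since $q_i^j\in S'$ for every $j>1$ while $v\notin S'$, the vertex $v$ can appear on a path $Q_i$ only as its starting vertex $q_i^1$. Consequently the only edges of $E$ incident to a vertex of $S_v$ are $\{b_3^v,a_4^{(v,n(v))}\}$, $\{b_4^v,a_{13}^{(v,n(v))}\}$, and $\{a_{14}^{(v,q_i^2)},a_{17}^{(v,q_i^2)}\}$ over those $i\in[m]$ with $q_i^1=v$. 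Put $T:=\{(v,q_i^2):q_i^1=v\}\subseteq E_{\vec{\mathcal E}}^{+}(v)$; the elements of $T$ are pairwise distinct (the $Q_i$ are edge disjoint), and an outgoing edge of $v$ has the form $(q_i^1,q_i^2)$ exactly when it belongs to $T$.

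The one conceptual issue is that the claim requires \emph{every} incoming segment $P_e^{\operatorname{in}}$ to run in order from $a_{18}^e$ to $a_{31}^e$, so, in contrast with Lemma~\ref{lem:existenceOfEdgeDisjoinPaths} (where the extra out-of-order incoming gadget absorbs $b_1^v,\dots,b_6^v$), here no incoming gadget is available to swallow the $b$-vertices. This is precisely the purpose of the two added edges at $b^v$: they allow the single outgoing gadget $P(a_\bullet^{(v,n(v))})$ to absorb $b_1^v,\dots,b_6^v$ by itself, by replacing the spine edge $\{a_4^{(v,n(v))},a_5^{(v,n(v))}\}$ with the detour $a_4^{(v,n(v))},b_3^v,b_2^v,b_1^v,a_5^{(v,n(v))}$ and the spine edge $\{a_{12}^{(v,n(v))},a_{13}^{(v,n(v))}\}$ with $a_{12}^{(v,n(v))},b_6^v,b_5^v,b_4^v,a_{13}^{(v,n(v))}$; all the edges used there, namely $\{a_5^{(v,n(v))},b_1^v\}$, $\{b_1^v,b_2^v\}$, $\{b_2^v,b_3^v\}$, $\{b_4^v,b_5^v\}$, $\{b_5^v,b_6^v\}$, $\{b_6^v,a_{12}^{(v,n(v))}\}$, are link edges present in $G_\mathcal E$ (the link via $b^v$ exists since $v$ has an incoming edge, as $d\ge 2$). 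Concretely I would take $P_e^{\operatorname{in}}:=(a_{18}^e,\dots,a_{31}^e)$ for $e\in E_{\vec{\mathcal E}}^{-}(v)$; $P_e^{\operatorname{out}}:=(a_1^e,\dots,a_{17}^e)$ for $e\in E_{\vec{\mathcal E}}^{+}(v)\setminus(T\cup\{(v,n(v))\})$; $P_e^{\operatorname{out}}:=(a_1^e,\dots,a_{14}^e,a_{17}^e,a_{16}^e,a_{15}^e)$ for $e\in T\setminus\{(v,n(v))\}$, which uses the added edge $\{a_{14}^e,a_{17}^e\}$ and ends at $a_{15}^e$; and $P_{(v,n(v))}^{\operatorname{out}}:=(a_1,a_2,a_3,a_4,b_3^v,b_2^v,b_1^v,a_5,\dots,a_{12},b_6^v,b_5^v,b_4^v,a_{13},a_{14},\dots,a_{17})$, all $a$'s carrying the superscript $(v,n(v))$, with the final stretch $a_{13},\dots,a_{17}$ replaced by $a_{13},a_{14},a_{17},a_{16},a_{15}$ in the exceptional case $(v,n(v))\in T$, so that it then ends at $a_{15}$.

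What remains is routine verification, which I would present in three short steps. (i) Every edge used lies in $E(H)=E(G_\mathcal E)\cup E$: the spine edges $\{a_i^e,a_{i+1}^e\}$ come from the gadgets $P(a_\bullet^e)$, the edges among $b^v$ and the two link edges to $a_5^{(v,n(v))}$ and $a_{12}^{(v,n(v))}$ lie in $G_\mathcal E$, and $\{b_3^v,a_4^{(v,n(v))}\}$, $\{b_4^v,a_{13}^{(v,n(v))}\}$, $\{a_{14}^e,a_{17}^e\}$ ($e\in T$) lie in $E$. (ii) The paths are simple, pairwise vertex disjoint, and have union exactly $S_v$: the blocks $\{a_{18}^e,\dots,a_{31}^e\}$ ($e\in E_{\vec{\mathcal E}}^{-}(v)$), $\{a_1^e,\dots,a_{17}^e\}$ ($e\in E_{\vec{\mathcal E}}^{+}(v)$) and $\{b_1^v,\dots,b_6^v\}$ partition $S_v$, and each path covers precisely one block, with $P_{(v,n(v))}^{\operatorname{out}}$ covering $\{a_1^{(v,n(v))},\dots,a_{17}^{(v,n(v))}\}\cup\{b_1^v,\dots,b_6^v\}$. (iii) The endpoints match the prescription, which is immediate from the descriptions above together with the identification of $T$ with the outgoing edges of $v$ of the form $(q_i^1,q_i^2)$. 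Since $|E_{\vec{\mathcal E}}^{-}(v)|=|E_{\vec{\mathcal E}}^{+}(v)|=d$, this produces exactly $2d$ paths, completing the claim. I expect the only real obstacle to be bookkeeping: arranging for $b_1^v,\dots,b_6^v$ to be absorbed by an outgoing gadget rather than an incoming one, and handling the ensuing small case split on whether $n(v)$ is itself one of the special edges in $T$ (which is precisely why $n(v)$ was picked with that freedom).
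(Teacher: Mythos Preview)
Your proposal is correct and is essentially identical to the paper's own proof: the same explicit paths are written down, with the $b$-vertices absorbed into $P_{(v,n(v))}^{\operatorname{out}}$ via the two added edges $\{b_3^v,a_4^{(v,n(v))}\}$ and $\{b_4^v,a_{13}^{(v,n(v))}\}$, and with the same case split on whether $(v,n(v))$ equals some $(q_i^1,q_i^2)$. Your surrounding commentary (identifying $T$, noting $v$ can only occur as $q_i^1$, and checking that all used edges lie in $E(H)$) is a bit more explicit than the paper's terse verification, but the argument is the same.
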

	\begin{proof}
		These conditions can be achieved by letting $P_{e}^{\operatorname{in}} :=(a_{18}^{e},\dots,a_{31}^e)$ for $e\in E_{\vec{\mathcal{E}}}^{-}(v)$. Additionally, for every edge $e=(q_i^1,q_i^2)$ we let $P_{e}^{\operatorname{out}} :=(a_{1}^{e},\dots,a_{14}^e,a_{17}^e,a_{16}^e,a_{15}^e)$ if $q_i^2\not= n(q_i^1)$  and  $P_{e}^{\operatorname{out}}=(a_1^e,\dots,a_4^e,b_3^v,b_2^v,b_1^v,a_5^e,\dots,a_{12}^e,b_6^v,b_5^v,b_4^v,a_{13}^e,a_{14}^e,a_{17}^e,a_{16}^e,a_{15}^e)$ otherwise. Finally for every $e\in E_{\vec{\mathcal{E}}}^{+}(v)\setminus \{(q_i^{1},q_i^{2}): i\in [m]\}$ we set $P_e^{\operatorname{out}}:=(a_1^e,\dots,a_{17}^e)$ for $e=(v,w)$, $w\not=n(v)$ and $P_{e}^{\operatorname{out}}:=(a_1^e,\dots,a_4^e,b_3^v,b_2^v,b_1^v,a_5^e,\dots,a_{12}^e,b_6^v,b_5^v,b_4^v,a_{13}^e,\dots,a_{17}^e)$ for $e=(v,n(v))$.
	\end{proof}
	
	For $v\in S'$ we define the sets $T_v^{\operatorname{in}}:=\{(q_i^{j-1},q_i^{j}): i\in [m],j\in \{2,\dots, \ell_i\},q_i^{j}=v \}$ and $T_v^{\operatorname{out}}:=\{(q_i^j,q_i^{j+1}): i\in [m],j\in \{2,\dots,\ell_i-1\},q_i^{j}=v \}$. Since for every $v\in S'$ there is exactly one path out of $Q_1,\dots,Q_m$ that ends in $v$, we get that $|T_v^{\operatorname{in}}|-1=|T_v^{\operatorname{out}}|$ and hence the preconditions for Lemma~\ref{lem:existenceOfEdgeDisjoinPaths} are met. Therefore we obtain a set of paths $\{P_{e}^{\operatorname{in}},P_{\tilde{e}}^{\operatorname{out}}: e\in E_{\vec{\mathcal{E}}}^{-}(v),\tilde{e}\in E_{\vec{\mathcal{E}}}^{+}(v)\}$ for every $v\in S'$ as in Lemma~\ref{lem:existenceOfEdgeDisjoinPaths}.
	
	Since $S_v\cap S_w=\emptyset$ for every pair $v,w\in V(\vec{\mathcal{E}})$ with $v\neq w$, we now have a set of pairwise vertex  disjoint simple paths $\{P_{e}^{\operatorname{in}},P_{e}^{\operatorname{out}}: e\in E(\vec{\mathcal{E}})\}$ such that every vertex of $H$ is contained in one of the paths. 
	For every edge $e\in E(\vec{\mathcal{E}})$ we now concatenate $P_e^{\operatorname{out}}$ with $P_e^{\operatorname{in}}$ to a path $P_e$. This is possible as for every edge $e\in E(\vec{\mathcal{E}})$ the end vertex of $P_e^{\operatorname{out}}$ and the start vertex of $P_e^{\operatorname{in}}$ are adjacent.
	Let us briefly explain why they are adjacent.  This is clearly true for every $e$ which does not appear on any path  $Q_1,\dots,Q_m$, because in this case $P_e^{\operatorname{out}}$ ends in $a_{17}^e$ and $P_e^{\operatorname{in}}$ starts in $a_{18}^e$. In the case that $e=(q_i^{j},q_i^{j+1})$ for some $i\in [m]$ and $j\in \{2,\dots, \ell_i-1\}$ we have that $P_e^{\operatorname{out}}$ ends in $a_{15}^e$ and $P_e^{\operatorname{in}}$ starts in $a_{20}^e$ which are adjacent in $H$. This leaves the case that $e=(q_i^{1},q_i^{2})$ for some $i\in [m]$. Since $q_i^1\in V(\mathcal{E})\setminus S'$  we get that $P_e^{\operatorname{out}}$ ends in $a_{15}^e$ and since $e\in T_{q_i^2}^{\operatorname{in}}$ we get that $P_e^{\operatorname{in}}$ starts in $a_{20}^e$.
	
	Finally we concatenate all paths $P_e$ in the order given by the ordering $f:E(\vec{\mathcal{E}})\rightarrow [|E(\vec{\mathcal{E}})|]$ used in the construction of $G_\mathcal{E}$. This gives us a cycle which contains every vertex in $H$ precisely once. Hence $H$ is Hamiltonian.
	
\end{proof} 

\begin{theorem}\label{thm:existenceOfLocallyButFarFromHamiltonianGraphs}
	There are $d\in \mathbb{N}$ and constants  $\delta:=\delta(d),\epsilon:=\epsilon(d) \in (0,1)$ and a sequence of  $d$-bounded degree graphs $(G_N)_{N\in \mathbb{N}}$ of increasing order such that $G_N$ is $\delta$-locally Hamiltonian and $\epsilon$-far from being Hamiltonian for every $N\in \mathbb{N}$. 
\end{theorem}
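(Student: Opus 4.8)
The plan is to feed a suitable family of regular expander graphs into the construction of Definition~\ref{def:G_E} and then to quote Theorem~\ref{thm:farFromHam} for $\epsilon$-farness and Theorem~\ref{thm:locallyHam} for $\delta$-local Hamiltonicity; the only ingredient not already developed in the paper is the existence of a family of $d_0$-regular graphs with expansion ratio at least $1$.

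First I would fix a degree $d_0\in\mathbb{N}_{\geq 2}$ for which there exists a family $(\mathcal{E}_m)_{m\in\mathbb{N}}$ of $d_0$-regular graphs of strictly increasing order with $h(\mathcal{E}_m)\geq 1$ for all $m$. Such families are well known to exist once $d_0$ is large enough: taking the $\mathcal{E}_m$ to be $d_0$-regular Ramanujan graphs (or random $d_0$-regular graphs, up to an arbitrarily small additive slack), one has $\lambda_2(\mathcal{E}_m)\leq 2\sqrt{d_0-1}$ for the second-largest adjacency eigenvalue, and the discrete Cheeger inequality $h(\mathcal{E}_m)\geq\tfrac{1}{2}\big(d_0-\lambda_2(\mathcal{E}_m)\big)$ then gives $h(\mathcal{E}_m)\geq\tfrac{1}{2}\big(d_0-2\sqrt{d_0-1}\big)\geq 1$ whenever $d_0\geq 7$. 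I would fix such a $d_0$ and such a family once and for all, and set $d:=d_0+3$.

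Next, for each $m$ I would let $G_m:=G_{\mathcal{E}_m}$ be the graph obtained from the base graph $\mathcal{E}_m$ via Definition~\ref{def:G_E} (using any linear order $f$ on $E(\vec{\mathcal{E}}_m)$). By Remark~\ref{rem:degreeAndNumberOfVerticesOFG_E}, $G_m$ has maximum degree at most $d_0+3=d$, so $G_m\in\mathcal{C}_d$, and $|V(G_m)|=(6+31d_0)\,|V(\mathcal{E}_m)|$, which is strictly increasing in $m$ since $|V(\mathcal{E}_m)|$ is; hence $(G_m)_{m\in\mathbb{N}}$ is a sequence of $d$-bounded-degree graphs of increasing order. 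It then remains only to read off the two properties. Applying Theorem~\ref{thm:farFromHam} with the parameter $d_0$ (admissible since $d_0\geq 2$) gives a constant $\epsilon:=\epsilon(d_0)\in(0,1)$ such that every $G_{\mathcal{E}_m}$, and hence every $G_m$, is $\epsilon$-far from being Hamiltonian (as an element of $\mathcal{C}_{d_0+3}=\mathcal{C}_d$, matching Definition~\ref{def:farFromHam}). Applying Theorem~\ref{thm:locallyHam} with the parameter $d_0$ and the base graph $\mathcal{E}_m$ — whose hypothesis $h(\mathcal{E}_m)\geq 1$ is exactly what the choice of family secures — gives a constant $\delta:=\delta(d_0)\in(0,1]$ such that $G_m$ is $\delta$-locally Hamiltonian; the Hamiltonian graph witnessing this in the proof of Theorem~\ref{thm:locallyHam} again has degree $d$, so the conclusion is $\delta$-local Hamiltonicity on $\mathcal{C}_d$ in the sense of Definition~\ref{def:locallyHam}. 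Since that proof in fact produces $\delta=1/(2(6+31d_0))<1$, we have $\delta\in(0,1)$ as required (alternatively, $\delta=1$ would make $G_m$ Hamiltonian, contradicting its $\epsilon$-farness). Relabelling $(G_m)_m$ as $(G_N)_{N\in\mathbb{N}}$ finishes the proof.

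The one place where care is needed — and which I would single out as the main, if modest, obstacle — is the choice of $d_0$: standard expander constructions typically only guarantee $h(\mathcal{E}_m)\geq\epsilon_0$ for some small constant $\epsilon_0>0$, whereas Theorem~\ref{thm:locallyHam} requires $h(\mathcal{E}_m)\geq 1$, so $d_0$ must be taken large enough (e.g.\ $d_0\geq 7$ via the spectral bound above) that the threshold $1$ is actually cleared. Everything else is a direct combination of Theorems~\ref{thm:farFromHam} and~\ref{thm:locallyHam} together with Remark~\ref{rem:degreeAndNumberOfVerticesOFG_E}.
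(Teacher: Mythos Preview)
Your proposal is correct and follows essentially the same route as the paper: pick a family of $d_0$-regular expanders with $h\geq 1$, apply Definition~\ref{def:G_E}, and invoke Theorems~\ref{thm:farFromHam} and~\ref{thm:locallyHam} together with Remark~\ref{rem:degreeAndNumberOfVerticesOFG_E}. The only difference is cosmetic: the paper simply cites explicit expander constructions (Margulis, zig-zag) for the existence of the base family, whereas you additionally justify via Cheeger and the Ramanujan bound that $d_0\geq 7$ suffices to clear the threshold $h\geq 1$; this extra care is fine and, if anything, slightly more informative than the paper's version.
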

\begin{proof}
	Let $D\in \mathbb{N}$ and $(\mathcal{E}_N)_{N\in \mathbb{N}}$ a sequence of $D$-bounded degree  graphs with expansion ratio at least $1$ of increasing order. For explicit constructions of such expanders, see for example~\cite{MargulisConstruction} or~\cite{ZigZagProductIntroduction}. Then for every $N\in \mathbb{N}$ we set $G_N:=G_{\mathcal{E}_N}$ be the graph constructed in Definition~\ref{def:G_E}. By Theorem~\ref{thm:farFromHam} and Theorem~\ref{thm:locallyHam} there is a degree bound $d$ and  constants $\delta,\epsilon \in (0,1)$, whose size only depends on $D$, such that $G_N$ has degree bounded by $d$ and $G_N$ is $\delta$-locally Hamiltonian and $\epsilon$-far from being Hamiltonian. 
\end{proof}
\section{Application to property testing}\label{sec:applicationsToPT}
In this section we introduce the bounded-degree model of property testing as introduced in \cite{GoldreichRon2002} and then use our main result from Section~\ref{sec:localHAM} to prove the known sublinear lower bound for the complexity of property testing Hamiltonicity \cite{DBLP:journals/ieicet/YoshidaI10,Goldreich20} for one-sided error testers.

Let $d\in \mathbb N$ and let $\mathcal{C}_d$ be the class of graphs of bounded degree $d$. From now on, all  graphs have $d$-bounded degree. A property $\mathcal{P}$ on $\mathcal{C}_d$ is any subset of $\mathcal{C}_d$ which is closed under isomorphism.  
An algorithm that processes a graph $G$
does not obtain an encoding of $G$ as a bit string in the usual way.  Instead,
it has direct access to $G$ using an \emph{oracle} which answers neighbour queries
in $G$ in constant time. In addition, the algorithm receives the number $n$ of vertices of $G$.
We assume that the vertices of $G$ are numbered $1,2,\ldots,n$.
The oracle accepts queries of the form $(i,j)$, for
$i\leq n$, and $j\leq d$, to which it responds
with the $j$-th neighbour of 
$i$, or with $\bot$ if $i$ has less than $j$ neighbours.

The \emph{running time} of the
algorithm is defined as usual, \ie with respect to $n$. We assume
a uniform cost model, \ie\hspace{-4pt}, we assume that all basic arithmetic operations 
including random sampling can be performed in constant time, regardless of 
the size of the numbers involved.

\textbf{Distance.}
For two graphs $G$ and $H$, both with $n$ vertices,
$\dist(G, H)$ denotes the minimum number of edges that
have to be modified (\ie inserted or removed) in $G$ and $H$ to make  $G$ and $H$
isomorphic. 
For $\epsilon \in [0,1]$,
we say $G$ and $H$ are 
$\epsilon$-\emph{close} if $\dist(G,H) \leq \epsilon d n$.
If $G, H$ are not $\epsilon$-\emph{close}, then they are \emph{$\epsilon$-far}.
Note that in particular, $G$ and $H$ are $\epsilon$-far if their vertex numbers differ. 
A graph $G$ is \emph{$\epsilon$-close} to a property $\mathcal{P}$ if $G$
is $\epsilon$-close to some $H \in \mathcal{P}$. Otherwise, $G$ is
\emph{$\epsilon$-far} from~$\mathcal{P}$. Note that this generalises Definition~\ref{def:farFromHam}.
\begin{definition}[$\epsilon$-tester]
	Let $\mathcal{P} \subseteq \mathcal{C}_d$  be a property and $\epsilon\in(0,1]$. 
	An  $\epsilon$-\emph{tester for $\mathcal{P}$} is a probabilistic
	algorithm with oracle access to an input $G\in \mathcal{C}_d$ and auxiliary input
	$n:=\abs{V(G)}$. The algorithm does the following.
	\begin{enumerate}
		\item If $G \in \mathcal{P}$, then the $\epsilon$-tester accepts with probability  at least ${2}/{3}$.
		\item If $G$ is $\epsilon$-far from $\mathcal{P}$, then the $\epsilon$-tester rejects
		with probability at least ${2}/{3}$.
	\end{enumerate}
	An $\epsilon$-tester is called a one-sided error tester if it accepts every graph $G\in \mathcal{P}$ with probability~$1$.
\end{definition}

The \emph{query complexity} of an $\epsilon$-tester is the maximum number of oracle queries made with respect to $n$. Let $f:\mathbb{N}\rightarrow \mathbb{R}$ be a function.
A property $\mathcal{P}$ is \emph{testable} with (one-sided error and) query complexity $f(n)$, if for each $\epsilon\in (0,1]$ and each $n$, there is a (one-sided error) $\epsilon$-tester
for $\mathcal{P}\cap \{G\in\mathcal{C}_d: |V(G)|=n\}$ on inputs from $\{G\in\mathcal{C}_d: |V(G)|=n\}$ with query complexity $f(n)$. 

We now obtain the following result as a corollary of Theorem~\ref{thm:existenceOfLocallyButFarFromHamiltonianGraphs}.
\begin{corollary}\label{cor:main}
	Hamiltonicity is not testable with one-sided error and  query complexity $o(n)$ in the bounded-degree model.
\end{corollary}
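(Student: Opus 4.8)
The plan is to prove the contrapositive-style statement: for the particular $\epsilon=\epsilon(d)$ supplied by Theorem~\ref{thm:existenceOfLocallyButFarFromHamiltonianGraphs}, there is no one-sided error $\epsilon$-tester for Hamiltonicity whose query complexity is $o(n)$. Since a property is testable with one-sided error and query complexity $f(n)$ only if such testers exist for \emph{every} proximity parameter, exhibiting one bad $\epsilon$ suffices. The conceptual core, which I would state as a preliminary lemma, is that a one-sided error tester rejects only on the basis of a \emph{local certificate of non-Hamiltonicity}: if on some run the tester issues queries, receives answers forming a transcript $\tau$, and rejects, then no Hamiltonian graph on $|V(G)|$ vertices is consistent with $\tau$ (answers all queries in $\tau$ identically). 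Indeed, if some Hamiltonian $H$ were consistent with $\tau$, then running the tester on $H$ with the same internal randomness would reproduce exactly $\tau$ and reject, contradicting that one-sided error testers accept members of the property with probability $1$.

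Next I would combine this with the $\delta$-local Hamiltonicity of the graphs $G_N$. Suppose $T$ is a one-sided error $\epsilon$-tester with query complexity $q$ and $q(n)=o(n)$. Since the $G_N$ have increasing order and $\delta=\delta(d)>0$ is constant, we may fix $N$ with $n:=\abs{V(G_N)}$ large enough that $q(n)\le \delta n$. As $G_N$ is $\epsilon$-far from being Hamiltonian, $T$ rejects $G_N$ with probability at least $2/3>0$, so there is a rejecting run; let $S\subseteq V(G_N)$ be the set of vertices appearing as the first coordinate of some query in that run and let $\tau$ be its transcript, so $\abs{S}\le q(n)\le \delta n$. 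Every answer recorded in $\tau$ is a neighbour in $G_N$ of some vertex of $S$, so $\tau$ is completely determined by the induced subgraph $G_N[N_{G_N}(S)]$ on the closed neighbourhood of $S$ together with the oracle's neighbour orderings at the vertices of $S$.

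Now I would invoke Definition~\ref{def:locallyHam}: there is a Hamiltonian graph $H:=H_S$ on $n$ vertices, a set $T_S\subseteq V(H)$, and an isomorphism $\varphi$ from $G_N[N_{G_N}(S)]$ onto $H[N_H(T_S)]$ mapping $S$ onto $T_S$. Because both $N_{G_N}(S)$ and $N_H(T_S)$ are \emph{closed} neighbourhoods, $\varphi$ sends, for each $u\in S$, the entire $G_N$-neighbourhood of $u$ bijectively onto the entire $H$-neighbourhood of $\varphi(u)$. Hence I can relabel $H$ by extending $\varphi^{-1}$ restricted to $N_H(T_S)$ to a bijection $V(H)\to\{1,\dots,n\}$ and then choose neighbour orderings so that, for every $u\in S$ and every index $j$, the $j$-th neighbour of $u$ in the resulting graph $H'$ equals the $j$-th neighbour of $u$ in $G_N$ (with $\bot$ matched to $\bot$); the orderings at vertices outside $S$ are irrelevant. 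Then $H'$ is Hamiltonian (isomorphic to $H_S$) and consistent with $\tau$, contradicting the lemma. This contradiction shows no such $T$ exists, which proves the corollary.

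The step I expect to be the main obstacle is the bookkeeping in the last paragraph: one must match not merely the edge relation but the \emph{indexed} neighbour structure that the oracle exposes, so that $H'$ answers every recorded query exactly as $G_N$ does. The closed-neighbourhood condition built into Definition~\ref{def:locallyHam} is precisely what makes this routine — it guarantees the isomorphism $\varphi$ already accounts for \emph{all} neighbours of the queried vertices, so the only remaining freedom, the ordering of those neighbours, can be chosen to agree with $G_N$.
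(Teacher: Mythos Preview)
Your proposal is correct and follows essentially the same approach as the paper: assume a one-sided error $\epsilon$-tester with $o(n)$ queries, run it on one of the $\epsilon$-far graphs $G_N$ to obtain a rejecting transcript on a set $S$ of at most $\delta n$ queried vertices, then use $\delta$-local Hamiltonicity to produce a Hamiltonian graph indistinguishable on those queries, contradicting one-sidedness. Your treatment is in fact slightly more careful than the paper's, which simply says ``after renaming the vertices in $N_H(T)$, the tester gets exactly the same answers''; you make explicit that one must also align the oracle's neighbour \emph{orderings} at the vertices of $S$, and correctly observe that the closed-neighbourhood condition in Definition~\ref{def:locallyHam} is exactly what lets you do this.
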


\begin{proof}
	Pick $d$ as in Theorem~\ref{thm:existenceOfLocallyButFarFromHamiltonianGraphs} and let $\mathcal{P}\subseteq \mathcal{C}_d$ be the class of all Hamiltonian graphs of degree at most $d$.
	Towards a contradiction, assume that for every $\epsilon\in (0,1]$ and $n\in \mathbb{N}$ there is a one sided-error $\epsilon$-tester for  $\mathcal{P}\cap \{G\in\mathcal{C}_d : |V(G)|=n\}$ with query complexity $o(n)$. Let $\delta,\epsilon \in (0,1)$ be constants such that there is a sequence of $d$-bounded degree graphs $(G_N)_{N\in \mathbb{N}}$ of increasing order such that $G_N$ is $\delta$-locally Hamiltonian and $\epsilon$-far from being Hamiltonian for every $N\in \mathbb{N}$. Note that $\delta$ and $\epsilon$ exist by Theorem~\ref{thm:existenceOfLocallyButFarFromHamiltonianGraphs}. Let $\mathcal{T}$ be an $\epsilon$-tester for $\mathcal{P}$ with query complexity $f(n)\in o(n)$.	
	Since $f(n)\in o(n)$ there must be $n_0\in \mathbb{N}$ such that $f(n)\leq \delta n$ for all $n\geq n_0$. Let $N\in \mathbb{N}$ such that $|V(G_N)|\geq n_0$. Since $G_N$ is $\epsilon$-far from $\mathcal{P}$ there must be a sequence  of queries $(q_1,\dots, q_m)$ with $m\leq  \delta n$ such that $\mathcal{T}$ queries the sequence $(q_1,\dots, q_m)$ with non-zero probability and rejects $G_N$ with non-zero probability  after performing the queries $(q_1,\dots,q_m)$. Let $S$ be the set of vertices $v\in V(G_N)$ such that there is a query $q_i=(v,j)$ for $i\in [m]$. Because $G_N$ is $\delta$-locally Hamiltonian and $|S|\leq  \delta n$ there is a graph $H\in \mathcal{P}$ on $n$ vertices and $T\subseteq V(H)$ such that there is an isomorphism $G_N[N_{G_N}(S)]$ to $H[N_H(T)]$ which maps $S$ to $T$. Hence, after renaming the vertices in $N_H(T)$, the tester 	
	$\mathcal{T}$ gets exactly the same answers for queries in $q_1,\dots,q_m$ for $G_N$ and $H$. This implies that $\mathcal{T}$ queries the sequence $(q_1,\dots,q_m)$ in $H$ with  non-zero probability  and hence must reject $H$ with non-zero probability. This contradicts the assumption that $\mathcal{T}$ was a one-sided error tester for Hamiltonicity. 
\end{proof}	
\begin{note}
	Note that the above argument is not sufficient for two-sided error testers. This is the case because a two-sided error tester would be allowed to reject $H$ with probability $<1/3$. As long as for sufficiently many other query sequences the two-sided error tester accepts $H$, it  might still  accept $H$ with probability  at least $2/3$.  
\end{note}

\subsubsection*{Acknowledgement}
We thank the reviewers for helpful comments that improved the exposition in this paper.

\nocite{*}
\bibliographystyle{abbrvnat}
\bibliography{nonHamiltonianGraphs}

\begin{thebibliography}{43}
\providecommand{\natexlab}[1]{#1}
\providecommand{\url}[1]{\texttt{#1}}
\expandafter\ifx\csname urlstyle\endcsname\relax
  \providecommand{\doi}[1]{doi: #1}\else
  \providecommand{\doi}{doi: \begingroup \urlstyle{rm}\Url}\fi

\bibitem[Adler and Harwath(2018)]{AdlerH18}
I.~Adler and F.~Harwath.
\newblock Property testing for bounded degree databases.
\newblock In \emph{35th Symposium on Theoretical Aspects of Computer Science,
  {STACS} 2018, February 28 to March 3, 2018, Caen, France}, pages 6:1--6:14,
  2018.

\bibitem[Alon and Milman(1985)]{AM1985lambda1}
N.~Alon and V.~D. Milman.
\newblock $\lambda$1, isoperimetric inequalities for graphs, and
  superconcentrators.
\newblock \emph{Journal of Combinatorial Theory, Series B}, 38\penalty0
  (1):\penalty0 73--88, 1985.

\bibitem[Alon et~al.(2000)Alon, Fischer, Krivelevich, and
  Szegedy]{alon2000efficient}
N.~Alon, E.~Fischer, M.~Krivelevich, and M.~Szegedy.
\newblock Efficient testing of large graphs.
\newblock \emph{Combinatorica}, 20\penalty0 (4):\penalty0 451--476, 2000.

\bibitem[Alon et~al.(2009)Alon, Fischer, Newman, and
  Shapira]{alon2009combinatorial}
N.~Alon, E.~Fischer, I.~Newman, and A.~Shapira.
\newblock A combinatorial characterization of the testable graph properties:
  it's all about regularity.
\newblock \emph{SIAM Journal on Computing}, 39\penalty0 (1):\penalty0 143--167,
  2009.

\bibitem[Benjamini et~al.(2010)Benjamini, Schramm, and
  Shapira]{benjamini2010every}
I.~Benjamini, O.~Schramm, and A.~Shapira.
\newblock Every minor-closed property of sparse graphs is testable.
\newblock \emph{Advances in {M}athematics}, 223\penalty0 (6):\penalty0
  2200--2218, 2010.

\bibitem[Bogdanov et~al.(2002)Bogdanov, Obata, and
  Trevisan]{3colPropertyTesting}
A.~Bogdanov, K.~Obata, and L.~Trevisan.
\newblock A lower bound for testing 3-colorability in bounded-degree graphs.
\newblock In \emph{43rd Symposium on Foundations of Computer Science {(FOCS}
  2002), 16-19 November 2002, Vancouver, BC, Canada, Proceedings}, pages
  93--102, 2002.
\newblock \doi{10.1109/SFCS.2002.1181886}.
\newblock URL \url{https://doi.org/10.1109/SFCS.2002.1181886}.

\bibitem[Bollig and Kuske(2012)]{BolligKuske2012}
B.~Bollig and D.~Kuske.
\newblock An optimal construction of hanf sentences.
\newblock \emph{Journal of Applied Logic}, 10\penalty0 (2):\penalty0 179--186,
  2012.
\newblock \doi{10.1016/j.jal.2012.01.002}.
\newblock URL \url{http://dx.doi.org/10.1016/j.jal.2012.01.002}.

\bibitem[Czumaj et~al.(2016)Czumaj, Peng, and Sohler]{CzumajPS16}
A.~Czumaj, P.~Peng, and C.~Sohler.
\newblock Relating two property testing models for bounded degree directed
  graphs.
\newblock In \emph{Proceedings of the 48th Annual {ACM} {SIGACT} Symposium on
  Theory of Computing, {STOC} 2016, Cambridge, MA, USA, June 18-21, 2016},
  pages 1033--1045, 2016.

\bibitem[Diestel(2012)]{diestelBook}
R.~Diestel.
\newblock \emph{Graph Theory, 4th Edition}, volume 173 of \emph{Graduate texts
  in mathematics}.
\newblock Springer, 2012.
\newblock ISBN 978-3-642-14278-9.

\bibitem[Dirac(1952)]{Dirac52}
G.~A. Dirac.
\newblock {Some Theorems on Abstract Graphs}.
\newblock \emph{Proceedings of the London Mathematical Society}, s3-2\penalty0
  (1):\penalty0 69--81, 01 1952.
\newblock ISSN 0024-6115.
\newblock \doi{10.1112/plms/s3-2.1.69}.

\bibitem[Dodziuk(1984)]{Dod1984difference}
J.~Dodziuk.
\newblock Difference equations, isoperimetric inequality and transience of
  certain random walks.
\newblock \emph{Transactions of the American Mathematical Society},
  284\penalty0 (2):\penalty0 787--794, 1984.

\bibitem[Ebbinghaus and Flum(1995)]{EF95}
H.~Ebbinghaus and J.~Flum.
\newblock \emph{Finite model theory}.
\newblock Perspectives in Mathematical Logic. Springer, 1995.
\newblock ISBN 978-3-540-60149-4.

\bibitem[Fichtenberger et~al.(2019)Fichtenberger, Peng, and
  Sohler]{fichtenberger2019every}
H.~Fichtenberger, P.~Peng, and C.~Sohler.
\newblock Every testable (infinite) property of bounded-degree graphs contains
  an infinite hyperfinite subproperty.
\newblock In \emph{Proceedings of the Thirtieth Annual ACM-SIAM Symposium on
  Discrete Algorithms}, pages 714--726. Society for Industrial and Applied
  Mathematics, 2019.

\bibitem[Forster et~al.(2020)Forster, Nanongkai, Yang, Saranurak, and
  Yingchareonthawornchai]{forster2019computing}
S.~Forster, D.~Nanongkai, L.~Yang, T.~Saranurak, and S.~Yingchareonthawornchai.
\newblock Computing and testing small connectivity in near-linear time and
  queries via fast local cut algorithms.
\newblock In \emph{Proceedings of the Fourteenth Annual ACM-SIAM Symposium on
  Discrete Algorithms}, pages 2046--2065. SIAM, 2020.

\bibitem[Garey et~al.(1976)Garey, Johnson, and Tarjan]{GareyJT76}
M.~R. Garey, D.~S. Johnson, and R.~E. Tarjan.
\newblock The planar hamiltonian circuit problem is {NP}-complete.
\newblock \emph{SIAM Journal on Computing}, 5\penalty0 (4):\penalty0 704--714,
  1976.
\newblock \doi{10.1137/0205049}.
\newblock URL \url{https://doi.org/10.1137/0205049}.

\bibitem[Goldreich(2017)]{goldreich2017introduction}
O.~Goldreich.
\newblock \emph{Introduction to property testing}.
\newblock Cambridge University Press, 2017.

\bibitem[Goldreich(2020)]{Goldreich20}
O.~Goldreich.
\newblock On testing hamiltonicity in the bounded degree graph model.
\newblock In \emph{Electronic Colloquium on Computational Complexity {(ECCC)}},
  volume~27, page 109, 2020.

\bibitem[Goldreich and Ron(1999)]{DBLP:journals/combinatorica/GoldreichR99}
O.~Goldreich and D.~Ron.
\newblock A sublinear bipartiteness tester for bounded degree graphs.
\newblock \emph{Combinatorica}, 19\penalty0 (3):\penalty0 335--373, 1999.
\newblock \doi{10.1007/s004930050060}.
\newblock URL \url{https://doi.org/10.1007/s004930050060}.

\bibitem[Goldreich and Ron(2002)]{GoldreichRon2002}
O.~Goldreich and D.~Ron.
\newblock Property testing in bounded degree graphs.
\newblock \emph{Algorithmica}, 32\penalty0 (2):\penalty0 302--343, 2002.
\newblock ISSN 1432-0541.
\newblock \doi{10.1007/s00453-001-0078-7}.
\newblock URL \url{http://dx.doi.org/10.1007/s00453-001-0078-7}.

\bibitem[Goldreich and Ron(2011)]{goldreich2011proximity}
O.~Goldreich and D.~Ron.
\newblock On proximity-oblivious testing.
\newblock \emph{SIAM Journal on Computing}, 40\penalty0 (2):\penalty0 534--566,
  2011.

\bibitem[Goldreich et~al.(1998)Goldreich, Goldwasser, and
  Ron]{goldreich1998property}
O.~Goldreich, S.~Goldwasser, and D.~Ron.
\newblock Property testing and its connection to learning and approximation.
\newblock \emph{Journal of the ACM (JACM)}, 45\penalty0 (4):\penalty0 653--750,
  1998.

\bibitem[Gould(1991)]{Gould91}
R.~J. Gould.
\newblock Updating the hamiltonian problem - {A} survey.
\newblock \emph{Journal of Graph Theory}, 15\penalty0 (2):\penalty0 121--157,
  1991.
\newblock \doi{10.1002/jgt.3190150204}.
\newblock URL \url{https://doi.org/10.1002/jgt.3190150204}.

\bibitem[Hanf(1965)]{Hanf1965}
W.~Hanf.
\newblock \emph{The Theory of Models}, chapter Model-theoretic methods in the
  study of elementary logic, pages 132--145.
\newblock North Holland, 1965.

\bibitem[Hassidim et~al.(2009)Hassidim, Kelner, Nguyen, and
  Onak]{hassidim2009local}
A.~Hassidim, J.~A. Kelner, H.~N. Nguyen, and K.~Onak.
\newblock Local graph partitions for approximation and testing.
\newblock In \emph{2009 50th Annual IEEE Symposium on Foundations of Computer
  Science}, pages 22--31. IEEE, 2009.

\bibitem[Hoory et~al.(2006)Hoory, Linial, and Wigderson]{Hoory06expandergraphs}
S.~Hoory, N.~Linial, and A.~Wigderson.
\newblock Expander graphs and their applications.
\newblock \emph{Bulletin of the American Mathematical Society}, 43\penalty0
  (4):\penalty0 439--561, 2006.

\bibitem[Ito et~al.(2020)Ito, Khoury, and Newman]{ito2019characterization}
H.~Ito, A.~Khoury, and I.~Newman.
\newblock On the characterization of 1-sided error strongly testable graph
  properties for bounded-degree graphs.
\newblock \emph{Computational Complexity}, 29\penalty0 (1):\penalty0 1, 2020.
\newblock \doi{10.1007/s00037-019-00191-6}.
\newblock URL \url{https://doi.org/10.1007/s00037-019-00191-6}.

\bibitem[Karp(1972)]{Karp72}
R.~M. Karp.
\newblock Reducibility among combinatorial problems.
\newblock In \emph{Complexity of Computer Computations}, pages 85--103.
  Springer, 1972.

\bibitem[Kawarabayashi and Yoshida(2013)]{kawarabayashi2013testing}
K.-i. Kawarabayashi and Y.~Yoshida.
\newblock Testing subdivision-freeness: property testing meets structural graph
  theory.
\newblock In \emph{Proceedings of the 45th annual ACM Symposium on Theory of
  Computing}, pages 437--446. ACM, 2013.

\bibitem[Kumar et~al.(2019)Kumar, Seshadhri, and Stolman]{kumar2019random}
A.~Kumar, C.~Seshadhri, and A.~Stolman.
\newblock Random walks and forbidden minors ii: a {poly$(d
  \varepsilon-1)$}-query tester for minor-closed properties of bounded degree
  graphs.
\newblock In \emph{Proceedings of the 51st Annual ACM SIGACT Symposium on
  Theory of Computing}, pages 559--567, 2019.

\bibitem[Lov{\'{a}}sz(2012)]{LovaszBook2012}
L.~Lov{\'{a}}sz.
\newblock \emph{Large Networks and Graph Limits}, volume~60 of \emph{Colloquium
  Publications}.
\newblock American Mathematical Society, 2012.

\bibitem[Margulis(1973)]{MargulisConstruction}
G.~A. Margulis.
\newblock Explicit constructions of expanders.
\newblock \emph{Problemy Peredači Informacii}, 9\penalty0 (4):\penalty0
  71--80, 1973.
\newblock URL \url{https://mathscinet.ams.org/mathscinet-getitem?mr=0484767}.

\bibitem[Newman and Sohler(2013)]{NewmanSohler2013}
I.~Newman and C.~Sohler.
\newblock Every property of hyperfinite graphs is testable.
\newblock \emph{SIAM Journal on Computing}, 42\penalty0 (3):\penalty0
  1095--1112, 2013.

\bibitem[Parhami(2005)]{Parhami05}
B.~Parhami.
\newblock The hamiltonicity of swapped ({OTIS}) networks built of hamiltonian
  component networks.
\newblock \emph{Information Processing Letters}, 95:\penalty0 441--445, 08
  2005.
\newblock \doi{10.1016/j.ipl.2005.05.009}.

\bibitem[R.~Fagin and Vardi(1995)]{FaginStockmeyerVardi1995}
L.~S. R.~Fagin and M.~Vardi.
\newblock On monadic {NP} vs. monadic co-{NP}.
\newblock \emph{Information and Computation}, 120\penalty0 (1):\penalty0
  78--92, 1995.

\bibitem[Reingold et~al.(2002)Reingold, Vadhan, and
  Wigderson]{ZigZagProductIntroduction}
O.~Reingold, S.~Vadhan, and A.~Wigderson.
\newblock Entropy waves, the zig-zag graph product, and new constant-degree
  expanders.
\newblock \emph{Annals of Mathematics}, pages 157--187, 2002.

\bibitem[Robinson and Wormald(1994)]{Robinson1994AlmostAR}
R.~W. Robinson and N.~C. Wormald.
\newblock Almost all regular graphs are hamiltonian.
\newblock \emph{Random Structures \& Algorithms}, 5:\penalty0 363--374, 1994.

\bibitem[Rubinfeld and Sudan(1996)]{rubinfeld1996robust}
R.~Rubinfeld and M.~Sudan.
\newblock Robust characterizations of polynomials with applications to program
  testing.
\newblock \emph{SIAM Journal on Computing}, 25\penalty0 (2):\penalty0 252--271,
  1996.

\bibitem[Wang et~al.(2005)Wang, Yen, and Chu]{ChuWY05}
N.-C. Wang, C.-P. Yen, and C.-P. Chu.
\newblock Multicast communication in wormhole-routed symmetric networks with
  hamiltonian cycle model.
\newblock \emph{Journal of Systems Architecture}, 51:\penalty0 165--183, 03
  2005.
\newblock \doi{10.1016/j.sysarc.2004.11.001}.

\bibitem[Wang et~al.(2012)Wang, Chen, and Chao]{WangCC12}
Y.-M. Wang, S.-H. Chen, and M.~Chao.
\newblock An efficient hamiltonian-cycle power-switch routing for mtcmos
  designs.
\newblock \emph{Proceedings of the Asia and South Pacific Design Automation
  Conference, ASP-DAC}, pages 59--65, 01 2012.
\newblock \doi{10.1109/ASPDAC.2012.6165026}.

\bibitem[Yoshida(2011)]{DBLP:conf/coco/Yoshida11}
Y.~Yoshida.
\newblock Lower bounds on query complexity for testing bounded-degree csps.
\newblock In \emph{Proceedings of the 26th Annual {IEEE} Conference on
  Computational Complexity, {CCC} 2011, San Jose, California, USA, June 8-10,
  2011}, pages 34--44, 2011.
\newblock \doi{10.1109/CCC.2011.10}.
\newblock URL \url{https://doi.org/10.1109/CCC.2011.10}.

\bibitem[Yoshida and Ito(2010)]{DBLP:journals/ieicet/YoshidaI10}
Y.~Yoshida and H.~Ito.
\newblock Query-number preserving reductions and linear lower bounds for
  testing.
\newblock \emph{IEICE Transactions on Information and Systems}, 93-D\penalty0
  (2):\penalty0 233--240, 2010.
\newblock \doi{10.1587/transinf.E93.D.233}.
\newblock URL \url{https://doi.org/10.1587/transinf.E93.D.233}.

\bibitem[Yoshida and Ito(2012)]{yoshida2012property}
Y.~Yoshida and H.~Ito.
\newblock Property testing on $k$-vertex-connectivity of graphs.
\newblock \emph{Algorithmica}, 62\penalty0 (3-4):\penalty0 701--712, 2012.

\bibitem[Zhang et~al.(2013)Zhang, Zheng, and Hu]{ZhangZH13}
J.~Zhang, C.~Zheng, and X.~Hu.
\newblock Triangle mesh compression along the hamiltonian cycle.
\newblock \emph{The Visual Computer}, 29, 06 2013.
\newblock \doi{10.1007/s00371-013-0808-2}.

\end{thebibliography}
\label{sec:biblio}

\end{document}